\title{A survey of BWT variants for string collections}
\author{Davide Cenzato}{Department of Environmental Sciences, Informatics and Statistics, Ca' Foscari University, Venice, Italy}{davide.cenzato@unive.it}{https://orcid.org/0000-0002-0098-3620}{}
\author{{Zs}uzsanna Lipt{\'a}k}{Department of Computer Science, University of Verona, Verona, Italy}{zsuzsanna.liptak@univr.it}{https://orcid.org/0000-0002-3233-0691}{}
\authorrunning{D.\ Cenzato and Zs.\ Lipt{\'a}k} %TODO mandatory. First: Use abbreviated first/middle names. Second (only in severe cases): Use first author plus 'et al.'
\keywords{Burrows-Wheeler-Transform, extended BWT, string collections, repetitiveness measures, number of runs, compression} %TODO mandatory; please add comma-separated list of keywords
\newcommand{\BWT}{\ensuremath{\mathrm{BWT }}}
\renewcommand{\root}{\ensuremath{\mathrm{root}}}
\renewcommand{\exp}{\ensuremath{\mathrm{exp}}}
\newcommand{\lex}{\ensuremath{\mathrm{lex}}}
\newcommand{\clex}{\ensuremath{\mathrm{colex}}}
\newcommand{\rev}{\ensuremath{\mathrm{rev}}}
\newcommand{\runs}{\ensuremath{\mathrm{runs}}}
\newcommand\Hdist{\ensuremath{\mathrm{dist_H}}}
\newcommand\edist{\ensuremath{\mathrm{dist_{\text edit}}}}
\def\chr19{\texttt{chr19}}
\def\ours{\texttt{pfpebwt}} % replace by \pfpebwt everywhere
\def\pfpebwt{\texttt{pfpebwt}}
\def\rope{\texttt{ropebwt2}}
\def\eBWT{\ensuremath{\mathrm{eBWT }}}
\def\dollarebwt{\ensuremath{\textrm{dolEBWT}}}
\def\concat{\ensuremath{\textrm{concBWT}}}
\def\mdollar{\ensuremath{\textrm{mdolBWT}}}
\def\rlo{\ensuremath{\textrm{colexBWT}}}
\def\colex{\ensuremath{\textrm{colexBWT}}}
\def\optimal{\ensuremath{\textrm{optBWT}}}
\colorlet{darkred}{red!80!black}
\colorlet{darkblue}{blue!60!black}
\colorlet{darkgreen}{green!60!black}
\colorlet{darkgray}{white!50!black}
\newcommand{\blue}[1]{{\color{darkblue} #1}}
\newcommand{\green}[1]{{\color{darkgreen} #1}}
\newcommand{\red}[1]{{\color{darkred} #1}}
\newcommand{\gray}[1]{{\color{lightgray} #1}}
\newcommand{\orange}[1]{{\color{orange} #1}}
\newcommand*{\PlusPlus}{%
\kern0.3ex\raisebox{-0ex}{\sebox{0.8}{\kern-0.4ex+}}%
\kern-0ex\raisebox{0.5ex}{\sebox{0.8}{\kern-0.4ex+}}}
\newcolumntype{R}[1]{>{\raggedleft\arraybackslash}p{#1}}
\newcolumntype{L}[1]{>{\raggedright\arraybackslash}p{#1}}
\DeclareRobustCommand*\cal{\relax\mathcal}
\begin{document}

\maketitle

\begin{abstract}
In recent years, the focus of bioinformatics research has moved from individual sequences to collections of sequences. Given the fundamental role of the Burrows-Wheeler Transform (BWT) in string processing, a number of dedicated tools have been developed for computing the BWT of string collections. While the focus has been on improving efficiency, both in space and time, the exact definition of the BWT employed has not been at the center of attention. 
As we show in this paper, the different tools in use often compute non-equivalent BWT variants: the resulting transforms can differ from each other significantly, including the number $r$ of runs, a central parameter of the BWT. Moreover, with many tools, the transform depends on the input order of the collection. In other words, on the same dataset, the same tool may output different transforms if the dataset is given in a different order. 
We studied $18$ dedicated tools for computing the BWT of string collections and have been able to identify $6$ different BWT variants computed by these tools. We review the differences between these BWT variants, both from a theoretical and from a practical point of view, comparing them on $8$ real-life biological datasets with different characteristics. We find that the differences can be extensive, depending on the datasets, and are largest on collections of many similar short sequences. 
The parameter $r$, the number of runs of the BWT, also shows notable variation between the different BWT variants; on our datasets, it varied by a multiplicative factor of up to $4.2$. 
Source code and scripts to replicate the results and download the data used in the article are available at  \url{https://github.com/davidecenzato/BWT-variants-for-string-collections}
\end{abstract}

%%%%%%%%%%%%%%%%%%%%%%%%%%%%%%%%%%%%%%%%%%%%%%%%%%%%%%%%%%%%%%%%%%%%%%%%%%%%%%%%%%%%%%%%%%%%%%%%%%%%%%%%%%%%%%%%%%%%%%
%%% SECTION 
%%%%%%%%%%%%%%%%%%%%%%%%%%%%%%%%%%%%%%%%%%%%%%%%%%%%%%%%%%%%%%%%%%%%%%%%%%%%%%%%%%%%%%%%%%%%%%%%%%%%%%%%%%%%%%%%%%%%%%

\section{Introduction}

The Burrows-Wheeler Transform \cite{BW94} (BWT) is a fundamental string transformation which is at the heart of many modern compressed data structures for text processing, in particular in bioinformatics~\cite{bowtie,bwa,bowtie2}. With the increasing availability of low-cost high-throughput sequencing technologies, the focus has moved from single strings to large string collections, such as the 1000 Genomes Project~\cite{1000genomes}, the 10,000 Genomes Project~\cite{10K}, the 100,000 Human Genome Project~\cite{100K}, the 1001 Arabidopsis Project \cite{arab}, and the 3,000 
Rice Genomes Project  (3K RGP) \cite{rice}. This has led to a widespread use of compressed data structures on inputs which are collections of sequences, rather than individual sequences. 

A number of tools have been developed in recent years for computing the \BWT\ of a collection (multiset) of strings. The focus has been on efficiently processing datasets of ever increasing sizes, but little attention has been paid to the actual method used to compute the BWT. This is an issue, as the BWT was originally defined for a single string, and it is not immediately clear how to define it for a collection (multiset) of strings. In fact, there exists more than one way to compute a Burrows-Wheeler-type transform of multiple strings. Even though all these methods maintain the properties necessary for building string indexes on top of the \BWT, such as reversibility and LF-property, they differ in other, important, ways. 

As we will show in this paper, different tools not only apply different algorithms to compute the \BWT\ of the input collection, but they output different transforms. Studying $18$ publicly available tools, we identified six distinct \BWT-variants which are computed by these tools. The tools included in this study are: {\tt BEETL~\cite{BauerCR13}, BCR\_LCP\_GSA}~\cite{BauerCR13}, {\tt ropebwt2}~\cite{Li14a}, {\tt nvSetBWT}~\cite{Pantaleoni14}, {\tt msbwt}~\cite{HoltM14}, {\tt Merge-BWT}~\cite{Siren16},  {\tt eGSA}~\cite{Louza2017d},  {\tt BigBWT}~\cite{BoucherGKLMM19}, {\tt bwt-lcp-parallel}~\cite{BonizzoniVPPR19}, 
{\tt eGAP}~\cite{EgidiLMT19}, {\tt gsufsort}~\cite{LouzaTGPR20}, {\tt G2BWT}~\cite{DominguezN21}, {\tt grlBWT}~\cite{DominguezN22}, \pfpebwt~\cite{BoucherCLMM21}, {\tt cais}~\cite{BoucherCLMM21}, {\tt r-pfbwt}~\cite{OlivaGB23}, {\tt CMS-BWT}~\cite{Masillo23}, and {\tt optimalBWT}~\cite{CenzatoGLR23}. As a first example, in Table~\ref{table:bwts}, we give the BWT variants as computed by these $18$ tools on a toy example of $5$ DNA-strings. 

The size of BWT-based compressed data structures such as the RLFM-index~\cite{MakinenN05} or the $r$-index~\cite{GagieNP20} is typically measured in the number of runs (maximal substrings consisting of the same letter) of the BWT, commonly denoted $r$. This parameter $r$ has become central as a measure of the storage space required by these data structures. Additionally, much recent research effort has concentrated on the construction of data structures which can not only store but query, process, and mine strings in space and time proportional to $r$~\cite{GagieNP20,BannaiGI20,Oliva0SMKGB21,CobasGN21}. Moreover, the parameter $r$ (or the related $n/r$, the average runlength of the \BWT) is also being increasingly seen as a measure of repetitiveness of the string or strings, with several recent works theoretically exploring its suitability as such a measure, as well as its relationship to other repetitiveness measures~\cite{KempaK20,Navarro21a,GiulianiILPST21,AkagiFI22}. The parameter $r$ is now also being used as a property of the dataset itself (e.g.~\cite{CobasGN21,BannaiGI20,BoucherCGHMNR21}). 

However, the number of runs varies between the different \BWT-variants, as can be seen on our toy example. This has important implications not only for the storage space required for \BWT-based compressed data structures, but also for claims about the level of repetitiveness of the dataset. With competing non-equivalent methods around, this measure is not well defined. We will explore this issue further (Section~\ref{sec:effects_on_r}) and suggest solving the problem by standarizating the definition.

\noindent
\begin{table*}[t]
\centering
%\begin{adjustbox}{max width=200mm}
\begin{tabular}{l| l| l}
variant & result on example & tools \\
\hline\hline
eBWT & {\tt CGGGATGTACGTTAAAAA} & \ours~\cite{BoucherCLMM21}, {\tt cais}~\cite{BoucherCLMM21}\\
\dollarebwt & {\tt GGAAACGG\$\$\$TTACTGT\$AAA\$} & {\tt G2BWT}~\cite{DominguezN21}, \pfpebwt~\cite{BoucherCLMM21}, {\tt cais}~\cite{BoucherCLMM21}, {\tt msbwt}~\cite{HoltM14}\\
\mdollar & {\tt GAGAAGCG\$\$\$TTATCTG\$AAA\$} & {\tt BEETL}~\cite{BauerCR13}, {\tt ropebwt2}~\cite{Li14a}, {\tt nvSetBWT}~\cite{Pantaleoni14},  
 {\tt Merge-BWT}~\cite{Siren16},\\
 && {\tt eGSA}~\cite{Louza2017d}, {\tt eGAP}~\cite{EgidiLMT19}, {\tt bwt-lcp-parallel}~\cite{BonizzoniVPPR19}, \\
 && {\tt gsufsort}~\cite{LouzaTGPR20}, 
 {\tt grlBWT}~\cite{DominguezN22}, {\tt BCR\_LCP\_GSA}~\cite{BauerCR13} \\
\concat & {\tt \$AAGAGGGC\$\#\$TTACTGT\$AAA\$} & {\tt BigBWT}~\cite{BoucherGKLMM19}, {\tt r-pfbwt}~\cite{OlivaGB23}, {\tt CMS-BWT}~\cite{Masillo23},\\ && tools for single-string BWT\\
\rlo & {\tt AAAGGCGG\$\$\$TTACTGT\$AAA\$} & \rope~\cite{Li14a}\\
\optimal & {\tt AAAGGGGC\$\$\$TTACTTG\$AAA\$} & {\tt optimalBWT}~\cite{CenzatoGLR23}
\end{tabular}
%\end{adjustbox}
\vspace{2mm}
\caption{\label{table:bwts}The different \BWT\ variants on the multiset ${\cal M} = \{ {\tt \texttt{ATATG}, \texttt{TGA} ,\texttt{ACG} ,\texttt{ATCA}, \texttt{GGA} }\}$. The \colex\ and the \optimal\ are special cases of the \mdollar. For detailed explanations, see Section~\ref{sec:bwtvariants}. Double listing due to options of the software. 
}
\end{table*}

\subsection{Overview of methods for defining multi-string \BWT}\label{sec:methods}

The classical way of computing text indexes of more than one string is to concatenate them, adding a different end-of-string-symbol at the end of each string, and then to compute the index for the concatenated string. This is the method traditionally used for generating classical data structures such as suffix trees and suffix arrays for multiple strings, and results in the so-called {\em generalized suffix tree} resp.\ {\em generalized suffix array} (see e.g.~\cite{Gusfield1997,Ohlebusch2013}). Applied directly, this method would lead to an unacceptable increase in the size of the alphabet, from $\sigma$, often a small constant in applications, to $\sigma + k$, where $k$ is the number of elements in the collection, typically in the thousands or even tens or hundreds of thousands. One way to avoid this is to use only conceptually different end-of-string-symbols, i.e.\ to have only one dollar-sign and apply string input order to break ties. This is the method employed by \rope~\cite{Li14a}, {\tt BCR}~\cite{BauerCR13}, and many others.\footnote{It should be noted that the tools listed in Table~\ref{table:bwts} may use different algorithms from the ones explained, as is the case for {\tt BCR}; however, they are equivalent w.r.t.\ the resulting transform.} Another method to avoid increasing the alphabet size is to separate the input strings using the same end-of-string-symbol; in this case, a different end-of-string-symbol has to be added to the end of the concatenated string to ensure correctness, as e.g.\ done by {\tt BigBWT}~\cite{BoucherGKLMM19}. An equivalent solution is to concatenate the input strings without removing the end-of-line or end-of-file characters, since these act as separators; or to concatenate them without separators and use a bitvector to mark the end of each string. Many studies nowadays use string collections in experiments without turning to dedicated tools for multi-string BWT (e.g.~\cite{PuglisiZ21,BannaiGI20, KuhnleMBGLM19}); often the input strings are turned into one single sequence using one of the methods described above, and then the single-string \BWT\ is computed; it is, however, not always stated explicitly which was the method used to obtain one sequence. Underlying this is the implicit assumption that all methods are equivalent. 

In 2007, Mantaci et al.~\cite{MantaciRRS07} introduced the {\em extended Burrows-Wheeler Transform} (\eBWT), which generalizes the \BWT\ to a multiset of strings. The \eBWT, like the \BWT, is reversible, and maintains other properties of the BWT such as fast pattern matching functionality. Since then, however, the term 'extended \BWT' has come to be used as a generic term to denote the BWT of a collection of sequences. This is unfortunate, as the \eBWT\ has several properties such as independence of the input order which the other methods do not share; and it is defined using a different order relation from the classical \BWT\ (omega-order rather than lexicographic order, see Section~\ref{sec:preliminaries} for exact definitions).

Two tools compute the \eBWT\ according to the original definition, \pfpebwt\ and {\tt cais} (both~\cite{BoucherCLMM21}); all others append an end-of-string character to the input strings, explicitly or implicitly, and as a consequence, the resulting transforms differ from the one defined in~\cite{MantaciRRS07}. Moreover, the output in most cases depends on the input order of the sequences (except for those tools that compute what we term \dollarebwt, \colex, or \optimal). 
The exact nature of this dependence differs from one transform to another (see Section~\ref{sec:permutations}). 

The result is that the \BWT\ variants computed by different tools on the same dataset, or by the same tool on the same dataset but given in a different order, may vary considerably.

\subsection{Multidollar \BWT}\label{sec:mdollar}

As we will see, the \BWT-variant which we term \mdollar\ is the most general one, in the sense that all others, except for the \eBWT, can be simulated by \mdollar\ (Proposition~\ref{prop:2}). This is the variant output by most tools, and it is dependent on the input order: both the transform itself and the number of runs varies depending on the order in which the strings are concatenated. Bentley, Gibney, and Thankachan recently gave a linear-time algorithm for computing \mdollar\ with the minimum number of runs amongst all input string orders~\cite{BentleyGT20}. In order to study the variation of the parameter $r$, we first implemented a variant of this algorithm counting the minimal number of runs, which later led to a new tool, {\tt optimalBWT}, computing the \optimal\ ~\cite{CenzatoGLR23}. We use this tool in our experiments as a baseline for the number of runs of the other \BWT-variants. On our real-life biological datasets, the parameter $r$ varies by up to a multiplicative factor of $4.2$ between the different variants. It was shown in~\cite{CenzatoGLR23} that an improvement by a multiplicative factor of up to $31$ can be obtained between the input order and \optimal.

%%%%%%%%%%%%%%%%%%%%%%%%%%%%%%%%%%%%%%%%%%%%%%%%%%%%%%%%%%%%%%%%%%%%%%%%%%%%%%%%%%%%%%%%%%%%%%%%%%%%%%%%%%%%%%%%%%%%%%
%%% SECTION 
%%%%%%%%%%%%%%%%%%%%%%%%%%%%%%%%%%%%%%%%%%%%%%%%%%%%%%%%%%%%%%%%%%%%%%%%%%%%%%%%%%%%%%%%%%%%%%%%%%%%%%%%%%%%%%%%%%%%%%

\subsection{Our contributions}

\begin{enumerate}
    \item We identify six distinct BWT variants which are computed by $18$ publicly available tools, specifically designed for string collections. We formally describe the differences between these, identifying specific intervals to which differences are restricted. 
    \item We show the influence of the input order on the output, in dependence of the BWT variant. 
    \item We describe the impact on the number $r$ of runs of the BWT and give an upper bound on the amount by which the colexicographic order (sometimes referred to as 'reverse lexicographic order') can differ from the optimal order of Bentley et al.~\cite{BentleyGT20}. 
    \item We complement our theoretical analysis with extensive experiments, comparing the BWT variants on eight real-life datasets with different characteristics. 
    \item We suggest a way of standardizing the parameter $r$, and thus to eliminate the ambiguity caused by the presence of different \BWT-variants. 
\end{enumerate}

To the best of our knowledge, this is the first systematic treatment of the different BWT variants in use for collections of strings.

%%%%%%%%%%%%%%%%%%%%%%%%%%%%%%%%%%%%%%%%%%%%%%%%%%%%%%%%%%%%%%%%%%%%%%%%%%%%%%%%%%%%%%%%%%%%%%%%%%%%%%%%%%%%%%%%%%%%%%
%%% SECTION 
%%%%%%%%%%%%%%%%%%%%%%%%%%%%%%%%%%%%%%%%%%%%%%%%%%%%%%%%%%%%%%%%%%%%%%%%%%%%%%%%%%%%%%%%%%%%%%%%%%%%%%%%%%%%%%%%%%%%%%

\subsection{What is not covered} 

This paper deals with tools for string collections, so we did not include any tool that computes the BWT of a single string, such as libdivsufsort~\cite{libdivsufsort}, sais-lite-lcp~\cite{sais-lite-lcp}, libsais~\cite{libsais}, or bwtdisk~\cite{FerraginaGM12}.  Although in many cases, these are the tools used for collections of strings, the transform they compute depends on the method with which the string collection was turned into a single string, as explained above. Nor did we include other BWT variants for single strings such as the bijective BWT~\cite{GilScott12,KopplHHS20}, since, again, these were not designed for string collections. 

The Big-xBWT~\cite{GagieGM21} is a tool for compressing and indexing read collections, using the xBWT of Ferragina et al.~\cite{xbwt,FerraginaLMM09}. In addition  to the string collection, it requires a reference sequence as input, in contrast to the other tools. Moreover, the output is not comparable either, since its length can vary---as opposed to all other BWT variants we review, the xBWT is not a permutation of the input characters but can be shorter, due to the fact that it first maps the input to a tree and then applies the xBWT to it, a BWT-like index for labeled trees, rather than for strings. Likewise, the tool~\cite{OhlebuschSB18} for reference-free xBWT is not included in this review: even though it does not require a reference sequence, it, too, computes the xBWT and not the \BWT. 
Finally, since the method for concatenating the input strings used in~\cite{CazauxR19} (using the same separator symbol but without an additional end-of-string character) differs from all BWT variants that have been implemented by some tool, we excluded it from the current study.

%%%%%%%%%%%%%%%%%%%%%%%%%%%%%%%%%%%%%%%%%%%%%%%%%%%%%%%%%%%%%%%%%%%%%%%%%%%%%%%%%%%%%%%%%%%%%%%%%%%%%%%%%%%%%%%%%%%%%%
%%% SECTION 
%%%%%%%%%%%%%%%%%%%%%%%%%%%%%%%%%%%%%%%%%%%%%%%%%%%%%%%%%%%%%%%%%%%%%%%%%%%%%%%%%%%%%%%%%%%%%%%%%%%%%%%%%%%%%%%%%%%%%%

\subsection{Overview} 

We give the necessary definitions in Section~\ref{sec:preliminaries}. In Section~\ref{sec:bwtvariants}, we present the BWT variants and analyse their differences, followed by an in-depth analysis of the relationship between the input and the output order of the strings in the collection, for the different \BWT-variants, in Section~\ref{sec:permutations}. In Section~\ref{sec:effects_on_r}, we discuss the effects on the repetitiveness measure $r$. A summary of our experimental results is presented in Section~\ref{sec:experimental}. We draw some conclusions from our study in Section~\ref{sec:conclusion}. The full tables with detailed results on all eight datasets are included in the Appendix. 

A preliminary version of this article appeared in~\cite{CenzatoL22}. 
Source code and scripts to replicate the results and download the data used in the article are available at  \url{https://github.com/davidecenzato/BWT-variants-for-string-collections}. 

%%%%%%%%%%%%%%%%%%%%%%%%%%%%%%%%%%%%%%%%%%%%%%%%%%%%%%%%%%%%%%%%%%%%%%%%%%%%%%%%%%%%%%%%%%%%%%%%%%%%%%%%%%%%%%%%%%%%%%
%%% SECTION 
%%%%%%%%%%%%%%%%%%%%%%%%%%%%%%%%%%%%%%%%%%%%%%%%%%%%%%%%%%%%%%%%%%%%%%%%%%%%%%%%%%%%%%%%%%%%%%%%%%%%%%%%%%%%%%%%%%%%%%

\section{Preliminaries}\label{sec:preliminaries}

Let $\Sigma$ be a finite ordered alphabet of size $\sigma$. We use the notation $T=T[1..n]$ for a string $T$ of length $n$ over $\Sigma$, $T[i]$ for the $i$th character, and $T[i..j]$ for the substring $T[i]\cdots T[j]$ of $T$, where $i\leq j$; the length of $T$ is denoted $|T|$, and the empty string is denoted $\varepsilon$. For a string $T$ over $\Sigma$ and an integer $m>0$, we write $T^m$ for the $m$-fold concatenation of $T$. A string $T$ is called {\em primitive} if $T=U^m$ implies $T=U$ and $m=1$. Every string $T$ can be written uniquely as $T=U^m$, where $U$ is primitive; in this case, we refer to $U$ as $\root(T)$ and to $m$ as $\exp(T)$. In other words, for every string $T$ it holds that $T = \root(T)^{\exp(T)}$. Often, an end-of-string character (usually denoted {\tt \$}) is appended to the end of $T$; this character is not an element of $\Sigma$ and is assumed to be smaller than all characters from $\Sigma$. Note that appending a {\tt \$} makes any string primitive. 

String $S$ is a {\em conjugate} of string $T$ if $S= T[i..n]T[1..i-1]$ for some $i\in \{1,\ldots, n\}$ (also called the {\em $i$th  rotation}  of $T$). It is easy to see that a string of length $n$ has $n$ distinct conjugates if and only if it is primitive. A {\em run} in string $T$ is a maximal substring consisting of the same character; we denote by $\runs(T)$ the number of runs of $T$. For example, $\runs({\tt CAAGGGA}) = 4$. 

For two strings $S,T$, the {\em (unit-cost) edit distance}, or {\em Levenshtein distance}, $\edist(S,T)$ is defined as the minimum number of operations necessary to transform $S$ into $T$, where an operation can be deletion or insertion of a character, or substitution of a character by another. The {\em Hamming distance} $\Hdist(S,T)$, defined only if $|S|=|T|$, is the number of positions $i$ such that $S[i] \neq T[i]$. 

The {\em lexicographic order} on $\Sigma^*$ is defined as follows: $S<_{\lex} T$ if $S$ is a proper prefix of $T$, or if there exists an index $j$ s.t.\ $S[j]<T[j]$ and for all $i<j$, $S[i]=T[i]$. The {\em colexicographic order}, or {\em colex-order} (referred to as {\em reverse lexicographic order} or {\em rlo} in~\cite{Li14a, CoxBJR12}), is defined as follows: $S<_{\clex}T$ if $S^{\rev} <_{\lex} T^{\rev}$, where $X^{\rev}=X[n]X[n-1]\cdots X[1]$ denotes the reverse of the string $X=X[1..n]$.

For a string $T=T[1..n]$ over $\Sigma$, the {\em Burrows-Wheeler Transform}~\cite{BW94}, $\BWT(T)$, is a permutation of the characters of $T$, given by concatenating the last characters of the lexicographically sorted conjugates of $T$. In Fig.~\ref{fig:bwt-ex} we give three examples: $\BWT({\tt CAGAGA}) = {\tt GGCAAA}$, $\BWT({\tt CACACA}) = {\tt CCCAAA}$, and $\BWT({\tt CAGAGA\$}) = {\tt AGGC\$AA}$.

\begin{table}[htbp]
\centering
\begin{tabular}{c@{\hspace{.2cm}}c@{\hspace{.8cm}}c@{\hspace{.2cm}}c@{\hspace{.8cm}}c@{\hspace{.2cm}}c}
${\tt CAGAGA}$ & \BWT & ${\tt CACACA}$ & \BWT & ${\tt CAGAGA\$}$ & \BWT\\
\hline
{\tt ACAGAG} & {\tt G} & {\tt ACACAC} & {\tt C} & {\tt \$CAGAGA} & {\tt A}\\
{\tt AGACAG} & {\tt G} & {\tt ACACAC} & {\tt C} & {\tt A\$CAGAG} & {\tt G}\\
{\tt AGAGAC} & {\tt C} & {\tt ACACAC} & {\tt C} & {\tt AGA\$CAG} & {\tt G}\\
{\tt CAGAGA} & {\tt A} & {\tt CACACA} & {\tt A} & {\tt AGAGA\$C} & {\tt C}\\
{\tt GACAGA} & {\tt A} & {\tt CACACA} & {\tt A} & {\tt CAGAGA\$} & {\tt \$}\\
{\tt GAGACA} & {\tt A} & {\tt CACACA} & {\tt A} & {\tt GA\$CAGA} & {\tt A}\\
&&&& {\tt GAGA\$CA} & {\tt A}\\
\end{tabular}
\caption{BWT of the strings {\tt CAGAGA, CACACA} and {\tt CAGAGA\$}. \label{fig:bwt-ex}}
\end{table}

It follows from the definition of the BWT that two strings $S,T$ are conjugates if and only if $\BWT(S) = \BWT(T)$. 
Indeed, the BWT is reversible up to conjugates: if a string $L$ is the BWT of some string $T$, then a string $S$ can be computed in linear time such that $L=\BWT(S)$, and thus $S$ is a conjugate of $T$. 
To make the BWT uniquely reversible, one can add an index to it, marking the lexicographic rank of the conjugate in input. For example, $\BWT({\tt CAGAGA}) = {\tt GGCAAA}$, and the index $4$ specifies that the input was the 4th conjugate in lexicographic order. Alternatively, one adds a {\tt \$} to the end of $T$, which makes the input unique: $\BWT({\tt CAGAGA\$}) = {\tt AGGC\$AA}$, and ${\tt CAGAGA\$}$ is the only string ending in ${\tt \$}$ with this BWT. Note that $\BWT$ with and without end-of-string symbol can be quite different, as the example shows. 

An important parameter of the $\BWT$ of string $T$ is the number of runs $r(T) = \runs(\BWT(T))$. It is well-known that on repetitive inputs, the BWT tends to produce long runs of the same character, making it amenable to compression via runlength-encoding (RLE). In our example, $r({\tt CAGAGA})=3$, while the original string has 6 runs.  This property, referred to as the {\em clustering effect} of the BWT, is taken advantage of by compressed data structures such as the RLFM-index~\cite{MakinenN05} or the $r$-index~\cite{GagieNP20}. 

Next we define the {\em omega-order}~\cite{MantaciRRS07} on $\Sigma^*$: $S \prec_\omega T$ if $\root(S) = \root(T)$ and $ \exp(S) < \exp(T)$, or if $S^\omega <_\lex T^\omega$ (implying $\root(S) \neq \root(T)$), where $T^{\omega}$ denotes the infinite string obtained by concatenating $T$ infinitely many times. The omega-order relation coincides with the lexicographic order if neither of the two strings is a proper prefix of the other. The two orders can differ otherwise, e.g.\  ${\tt GT} <_\lex {\tt GTC}$ but ${\tt GTC} \prec_\omega {\tt GT}$. 

For a multiset of strings ${\cal M}=\{T_1, \ldots, T_k\}$, the {\em extended Burrows-Wheeler Transform}, $\eBWT({\cal M})$~\cite{MantaciRRS07}, is a permutation of the characters of the strings in ${\cal M}$, given by concatenating the last characters of the conjugates of each $T_i$, for $i=1,\ldots,k$, listed in omega-order. For example, the omega-sorted conjugates of ${\cal M} = \{{\tt GTC, GT}\}$ are: {\tt CGT,} {\tt GTC,} {\tt GT,} {\tt TCG,} {\tt TG}, hence, $\eBWT({\cal M}) = {\tt TCTGG}$, see Figure~\ref{tab:ebwt-ex}. Again, adding the indices of the input conjugates, in this case $2,3$, makes the \eBWT\ uniquely reversible. For more on the \eBWT, see~\cite{MantaciRRS07}. 

\begin{table}
\centering
\begin{tabular}{l@{\hspace{.2cm}}c@{\hspace{.8cm}}l@{\hspace{.2cm}}c}
$\{{\tt GTC,GT}\}$ & \eBWT & $\{{\tt GTC\$,GT\$}\}$ & \eBWT\\
\hline
{\tt CGT} & {\tt T} & {\tt \$GT} & {\tt T}\\
{\tt GTC} & {\tt C} & {\tt \$GTC} & {\tt C}\\
{\tt GT} & {\tt T} & {\tt C\$GT} & {\tt T}\\
{\tt TCG} & {\tt G} & {\tt GT\$} & {\tt \$}\\
{\tt TG} & {\tt G} & {\tt GTC\$} & {\tt \$}\\
&& {\tt T\$G} & {\tt G}\\
&& {\tt TC\$G} & {\tt G}\\
\end{tabular}
\caption{The \eBWT\ of the set $\{{\tt GTC,GT}\}$, and of the set $\{{\tt GTC\$,GT\$}\}$ (see Section~\ref{sec:bwtvariants}).}\label{tab:ebwt-ex}
\end{table}

%%%%%%%%%%%%%%%%%%%%%%%%%%%%%%%%%%%%%%%%%%%%%%%%%%%%%%%%%%%%%%%%%%%%%%%%%%%%%%%%%%%%%%%%%%%%%%%%%%%%%%%%%%%%%%%%%%%%%%
%%% SECTION 
%%%%%%%%%%%%%%%%%%%%%%%%%%%%%%%%%%%%%%%%%%%%%%%%%%%%%%%%%%%%%%%%%%%%%%%%%%%%%%%%%%%%%%%%%%%%%%%%%%%%%%%%%%%%%%%%%%%%%%

\section{BWT variants for string collections}\label{sec:bwtvariants}

We identified six distinct transforms, listed below, which were computed by the tools given in Table~\ref{table:bwts}. Let ${\cal M} = \{T_1,\ldots, T_k\}$ be a multiset of strings, with total length $N_{\cal M}=\sum_{i=1}^k |T_i|$. Since several of the data structures depend on the order in which the strings are listed, we implicitly regard ${\cal M}$ as a list $[T_1,\ldots, T_k]$, and write $\rho({\cal M})$ for a specific input order $\rho$ in which the strings are presented.

\begin{enumerate}
    \item {\bf extended BWT:} $\eBWT({\cal M})$ of~\cite{MantaciRRS07} (see Section~\ref{sec:preliminaries})
    \item {\bf dollar-eBWT:} $\dollarebwt({\cal M}) = \eBWT(\{T_i\$ \mid T_i \in {\cal M}\})$
    %\DC{ or $\dollarebwt({\cal M}) = \mdollar(\lex({\cal M}))$, where $\lex({\cal M})$ is the lexicographic order of the strings in ${\cal M}$, should we say this?}.  
    \item {\bf multidollar BWT:} $\mdollar({\cal M}) =$\\ $ \BWT(T_1\$_1T_2\$_2\cdots T_k\$_k)$, where dollars are assumed to be smaller than characters from $\Sigma$ and $\$_1<\$_2<\ldots < \$_k$ 
    \item {\bf concatenated BWT:} $\concat({\cal M}) =$\\ $ \BWT(T_1\$T_2\$\cdots T_k\$\#)$, where $\#<\$$
    \item {\bf colexicographic BWT:} $\colex({\cal M}) = \mdollar(\gamma({\cal M}))$,  where $\gamma$ is the colexicographic ('reverse lexicographic', rlo) order of the strings in ${\cal M}$.
    \item {\bf optimal BWT:} $\optimal({\cal M}) = \mdollar(opt({\cal M}))$,  where $opt({\cal M})$ is the order given by the algorithm of Bentley et al.~\cite{BentleyGT20}, which minimizes the number of runs.
\end{enumerate}

Because all BWT variants except the \eBWT\ use additional end-of-string symbols as string separators, we refer to these by the collective term {\em separator-based BWT variants}. In Table~\ref{tab:overview} we show the six transforms on our running example of $5$ DNA-strings, and give first properties of these transforms. For ease of exposition and comparison, we replaced all separator-symbols by the same dollar-sign {\tt \$}, even where, conceptually or concretely, different dollar-signs are assumed to terminate the individual strings. This is the case for \mdollar\ and its special cases, \colex\ and \optimal. Moreover, the \concat\ contains one additional character, the final end-of-string symbol, here denoted by {\tt \#}, which is smaller than all other characters; thus, the additional rotation starting with {\tt \#} is the smallest and results in an additional dollar in the first position of the transform. To facilitate the comparison with the other transforms, we remove this first symbol from \concat\ and replace the {\tt \#} by {\tt \$}.

It is important to point out that the programs listed in Table~\ref{table:bwts} do not necessarily use the definitions given here; however, in each case, the resulting transform is the one claimed, up to renaming or removing separator characters, see Sections~\ref{sec:separators} and~\ref{sec:interesting}. 

\begin{table*}[t]
\begin{center}
\begin{tabular}{l|l|l|c}
\BWT\ variant & example & order of shared suffixes & independent \\
& & & of input \\ 
& & & order? \\
\hline \hline
{\em non sep.-based} &&&\\
    $\eBWT({\cal M})$ & {\tt \red{C}\green{G}\red{G}\green{G}\green{A}T\red{G}TA\blue{C}\orange{G}\blue{T}\orange{T}\green{A}AAA\green{A}} & omega-order of strings & yes \\
    \hline
    {\em separator-based} &&&\\
    $\dollarebwt({\cal M})$ & {\tt \green{GGAAA}\red{CGG}\$\$\$TTA\blue{CT}\orange{GT}\$AAA\$} & lexicographic order of strings & yes    \\
    $\mdollar({\cal M})$ & {\tt \green{GAGAA}\red{GCG}\$\$\$TTA\blue{TC}\orange{TG}\$AAA\$} & input order of strings & no\\
    $\concat({\cal M})$ & {\tt \green{AAGAG}\red{GGC}\$\$\$TTA\blue{CT}\orange{GT}\$AAA\$} & lexicographic order of & \\
    &&  \quad    subsequent strings in input & no  \\
    $\rlo({\cal M})$ & {\tt \green{AAAGG}\red{CGG}\$\$\$TTA\blue{CT}\orange{GT}\$AAA\$} & colexicographic order of strings & yes \\
    $\optimal({\cal M})$ &  {\tt \green{AAAGG}\red{GGC}\$\$\$TTA\blue{CT}\orange{TG}\$AAA\$} & order given by Bentley et al.'s  & \\
    & & \quad algorithm~\cite{BentleyGT20} & yes
\end{tabular}
\end{center}
\vspace{2mm}
\caption{Overview of some properties of the six \BWT\ variants considered in this paper. The colors in the example BWTs correspond to interesting intervals in separator-based variants, while the same characters are highlighted in the \eBWT\ for showing their positions, see Section~\ref{sec:interesting}.\label{tab:overview}}
\end{table*}

%%%%%%%%%%%%%%%%%%%%%%%%%%%%%%%%%%%%%%%%%%%%%%%%%%%%%%%%%%%%%%%%%%%%%%%%%%%%%%%%%%%%%%%%%%%%%%%%%%%%%%%%%%%%%%%%%%%%%%
%%% SUBSECTION 
%%%%%%%%%%%%%%%%%%%%%%%%%%%%%%%%%%%%%%%%%%%%%%%%%%%%%%%%%%%%%%%%%%%%%%%%%%%%%%%%%%%%%%%%%%%%%%%%%%%%%%%%%%%%%%%%%%%%%%

\subsection{The effect of adding separator symbols}\label{sec:separators} 

The first obvious difference between the \eBWT\ and the separator-based variants is their length: $\eBWT({\cal M})$ has length $N_{\cal M}$, while all other variants have length $N_{\cal M}+k$, since they contain an additional character (the separator) for each input string. 

In all separator-based transforms, the $k$-length prefix consists of a permutation of the last characters of the input strings. This is because the rotations starting with the dollars are the first $k$ lexicographically. On the other hand, in the \eBWT, these $k$ characters occur interspersed with the rest of the transform; namely, in the positions corresponding to the omega-ranks of the input strings $T_i$ (see Table~\ref{tab:overview}). 

In gerenal, adding a {\tt \$} to the end of the strings introduces a distinction, not present in the \eBWT, between suffixes and other substrings: since the separators are smaller than all other characters, occurrences of a substring as suffix will be listed en bloc before all other occurrences of the same substring, while in the \eBWT, these occurrences are listed interspersed with the other occurrences of the same substring. 

    \begin{example} Let ${\cal M} = \{{\tt AACGAC, TCAC}\}$ and $U={\tt AC}$. $U$ occurs both as a suffix and as an internal factor; the characters preceding it are {\tt A} (internal substring) and {\tt C,G} (suffix), and we have \eBWT$({\cal M}) =$ {\tt C\red{GAC}ATAACC}, \dollarebwt{$({\cal M}) =$} {\tt CC\$\red{GCA}AATAC\$}.
\end{example}

Finally, it should be noted that adding end-of-string symbols to the input strings changes the definition of the order applied. As observed above, the omega-order coincides with the lexicographic order on all pairs of strings $S,T$ where neither is a proper prefix of the other; but with end-of-strings characters, no input string can be a proper prefix of another. Thus, on rotations of the $T_i\$$'s, the omega-order equals the lexicographic order. As an example, consider the multiset ${\cal M} = \{${\tt GTC\$, GT\$}$\}$ from Section~\ref{sec:preliminaries}: we have the following omega-order among the rotations: {\tt \$GT, \$GTC, C\$GT, GT\$, GTC\$, T\$G, TC\$G} (see Table~\ref{tab:ebwt-ex}), which coincides with the lexicographic order. Similarly, adding {\em different} dollars {\tt \$}$_1$, {\tt \$}$_2$, \ldots, {\tt \$}$_k$ and applying the omega-order results again in the lexicographic order between the rotations, with different dollar symbols considered as distinct characters. This implies the following:

\begin{lemma}\label{obs:1}
Let ${\cal M} = \{T_1,T_2,\ldots,T_k\}$ be a string collection. Then 

\begin{enumerate}
\item $\dollarebwt({\cal M}) = \mdollar(\lex({\cal M}))$, where $\lex({\cal M})$ denotes the lexicographic order of the strings in ${\cal M}$; 
\item $\mdollar({\cal M}) = \eBWT(\{ T_i\$_i\mid i=1,\ldots,k\})$, up to re\-naming of dollars. 
\end{enumerate}

\end{lemma}

\begin{proof} 
For {\em 1.}, note that the order of the strings $T_i\$$ is the lexicographic order, since $T_i\$ <_\omega T_j\$$ if and only if $T_i\$ <_\lex T_j\$$ if and only if $T_i <_\lex T_j$. 
  For {\em 2.}, consider that no two rotations can be one prefix of the other, resulting in the lexicographic order between rotations, with the dollar-signs breaking ties. The only difference between the two transforms is now that for $i>1$, $\$_i$ in the right transform is replaced by $\$_{i-1}$ in the left transform, and $\$_1$ is replaced by $\$_k$. 
\end{proof}

Regarding the differences among the separator-based BWT variants, we will show that all differences occur in certain well-defined intervals of the BWT, and that the differences themselves depend only on a specific permutation of $\{1,\ldots,k\}$, given by the combination of the input order, the lexicographic order of the input strings, and the BWT variant applied. In Tables~\ref{tab:3bwts-table} and~\ref{tab:3bwts-table2}, we give the full \BWT\ matrices for all BWT variants.

\begin{table*}[h]
\centering
\begin{adjustbox}{max width=140mm}

\begin{tabular}{ rc|l| } 
 \hline
 \multicolumn{1}{|r|}{index} & \eBWT & \text{rotation} \\ 
 \hline
 \multicolumn{1}{|r|}{(4,4)} & \red{\texttt{C}} & \texttt{AATC}   \\ 
  \hline
 \multicolumn{1}{|r|}{(3,1)} & \green{\texttt{G}} & \texttt{ACG}   \\ 
  \hline
 \multicolumn{1}{|r|}{(5,3)} & \red{\texttt{G}} & \texttt{AGG}   \\ 
  \hline
 \multicolumn{1}{|r|}{(1,1)} & \green{\texttt{G}} & \texttt{ATATG}   \\
  \hline
 \multicolumn{1}{|r|}{(4,1)} & \green{\texttt{A}} & \texttt{ATCA}   \\ 
 \hline
 \multicolumn{1}{|r|}{(1,3)} & \texttt{T} & \texttt{ATGAT}   \\ 
 \hline
 \multicolumn{1}{|r|}{(2,3)} & \red{\texttt{G}} & \texttt{ATG}   \\ 
 \hline
 \multicolumn{1}{|r|}{(4,3)} & \texttt{T} & \texttt{CAAT}  \\ 
 \hline
 \multicolumn{1}{|r|}{(3,2)} & \texttt{A} & \texttt{CGA}   \\ 
 \hline
 \multicolumn{1}{|r|}{(3,3)} & \blue{\texttt{C}} & \texttt{GAC}   \\ 
 \hline
 \multicolumn{1}{|r|}{(5,2)} & \orange{\texttt{G}} & \texttt{GAG}   \\ 
  \hline
 \multicolumn{1}{|r|}{(1,5)} & \blue{\texttt{T}} & \texttt{GATAT}   \\ 
  \hline
 \multicolumn{1}{|r|}{(2,2)} & \orange{\texttt{T}} & \texttt{GAT}   \\ 
  \hline
 \multicolumn{1}{|r|}{(5,1)} & \green{\texttt{A}} & \texttt{GGA}   \\ 
 \hline
 \multicolumn{1}{|r|}{(1,2)} & \texttt{A} & \texttt{TATGA}   \\ 
 \hline
 \multicolumn{1}{|r|}{(4,2)} & \texttt{A} & \texttt{TCAA}   \\ 
 \hline
 \multicolumn{1}{|r|}{(1,4)} & \texttt{A} & \texttt{TGATA}   \\ 
 \hline
 \multicolumn{1}{|r|}{(2,1)} & \green{\texttt{A}} & \texttt{TGA}   \\ 
 \hline
 
\vspace{15.8mm}
\end{tabular}
\quad 
\begin{tabular}{ |r|c|l| } 
 \hline
 \text{index} & dolE & \text{rotation} \\ 
 \hline
 (3,4) & \green{\texttt{G}} & \texttt{\$ACG}   \\ 
 \hline
 (1,6) & \green{\texttt{G}} & \texttt{\$ATATG}   \\ 
 \hline
 (4,5) & \green{\texttt{A}} & \texttt{\$ATCA}   \\ 
 \hline
 (5,4) & \green{\texttt{A}} & \texttt{\$GGA}  \\ 
 \hline
 (2,4) & \green{\texttt{A}} & \texttt{\$TGA}  \\ 
 \hline
 (4,4) & \red{\texttt{C}} & \texttt{A\$ATC}   \\ 
 \hline
 (5,3) & \red{\texttt{G}} & \texttt{A\$GG}   \\ 
  \hline
 (2,3) & \red{\texttt{G}} & \texttt{A\$TG}   \\ 
 \hline
 (3,1) & \texttt{\$} & \texttt{ACG\$}   \\ 
 \hline
 (1,1) & \texttt{\$} & \texttt{ATATG\$}   \\ 
 \hline
 (4,1) & \texttt{\$} & \texttt{ATCA\$}   \\ 
 \hline
 (1,3) & \texttt{T} & \texttt{ATG\$AT}   \\ 
  \hline
 (4,3) & \texttt{T} & \texttt{CA\$AT}   \\ 
  \hline
 (3,2) & \texttt{A} & \texttt{CG\$A}   \\ 
 \hline
 (3,3) & \blue{\texttt{C}} & \texttt{G\$AC}   \\
 \hline
 (1,5) & \blue{\texttt{T}} & \texttt{G\$ATAT}   \\
 \hline
 (5,2) & \orange{\texttt{G}} & \texttt{GA\$G}   \\
 \hline
 (2,2) & \orange{\texttt{T}} & \texttt{GA\$T}   \\
 \hline
 (5,1) & \texttt{\$} & \texttt{GGA\$}   \\ 
 \hline
 (1,2) & \texttt{A} & \texttt{TATG\$A}   \\ 
 \hline
 (4,2) & \texttt{A} & \texttt{TCA\$A}   \\ 
 \hline
 (1,4) & \texttt{A} & \texttt{TG\$ATA}   \\ 
 \hline
 (2,1) & \texttt{\$} & \texttt{TGA\$}   \\ 
 \hline
\end{tabular}
\quad
\begin{tabular}{ |r|c|l| } 
 \hline
 \text{index} & conc & \text{rotation} \\ 
 \hline
 %24 & \texttt{\$} & \texttt{\#\gray{ATATG\$TGA\$ACG\$ATCA\$GGA\$}}   \\ 
 % \hline
 23 & \green{\texttt{A}} & \texttt{\$\#\gray{ATATG\$TGA\$ACG\$ATCA\$GGA}}  \\ 
  \hline
 10 & \green{\texttt{A}} & \texttt{\$ACG\$ATCA\$GGA\$\#ATATG\$\gray{TGA}}   \\ 
  \hline
 14 & \green{\texttt{G}} & \texttt{\$ATCA\$GGA\$\#\gray{ATATG\$TGA\$ACG}}   \\ 
  \hline
 19 & \green{\texttt{A}} & \texttt{\$GGA\$\#\gray{ATATG\$TGA\$ACG\$ATCA}}   \\
 \hline
 6 & \green{\texttt{G}} & \texttt{\$TGA\$ACG\$ATCA\$GGA\$\#\gray{ATATG}}   \\
 \hline
 22 & \red{\texttt{G}} & \texttt{A\$\#\gray{ATATG\$TGA\$ACG\$ATCA\$GG}}   \\
 \hline
 9 & \red{\texttt{G}} & \texttt{A\$ACG\$ATCA\$GGA\$\#\gray{ATATG\$TG}}   \\ 
 \hline
 18 & \red{\texttt{C}} & \texttt{A\$GGA\$\#\gray{ATATG\$TGA\$ACG\$ATC}}   \\
 \hline
 11 & \texttt{\$} & \texttt{ACG\$ATCA\$GGA\$\#\gray{ATATG\$TGA\$}} \\ 
  \hline
 1 & \texttt{\$} & \texttt{ATATG\$TGA\$ACG\$ATCA\$GGA\$\#}   \\ 
 \hline
 15 & \texttt{\$} & \texttt{ATCA\$GGA\$\#\gray{ATATG\$TGA\$ACG\$}} \\ 
 \hline
 3 & \texttt{T} & \texttt{ATG\$TGA\$ACG\$ATCA\$GGA\$\#\gray{AT}}  \\ 
  \hline
 17 & \texttt{T} & \texttt{CA\$GGA\$\#\gray{ATATG\$TGA\$ACG\$AT}}   \\ 
 \hline
 12 & \texttt{A} & \texttt{CG\$ATCA\$GGA\$\#\gray{ATATG\$TGA\$A}}  \\ 
 \hline
 13 & \blue{\texttt{C}} & \texttt{G\$ATCA\$GGA\$\#\gray{ATATG\$TGA\$AC}}   \\ 
 \hline
 5 & \blue{\texttt{T}} & \texttt{G\$TGA\$ACG\$ATCA\$GGA\$\#\gray{ATAT}}   \\
 \hline
 21 & \orange{\texttt{G}} & \texttt{GA\$\#\gray{ATATG\$TGA\$ACG\$ATCA\$G}} \\ 
 \hline
 8 & \orange{\texttt{T}} & \texttt{GA\$ACG\$ATCA\$GGA\$\#\gray{ATATG\$T}}   \\ 
 \hline
 20 & \texttt{\$} & \texttt{GGA\$\#\gray{ATATG\$TGA\$ACG\$ATCA\$}} \\ 
 \hline
 2 & \texttt{A} & \texttt{TATG\$TGA\$ACG\$ATCA\$GGA\$\#\gray{A}}   \\ 
 \hline
 16 & \texttt{A} & \texttt{TCA\$GGA\$\#\gray{ATATG\$TGA\$ACG\$A}}  \\ 
 \hline
 4 & \texttt{A} & \texttt{TG\$TGA\$ACG\$ATCA\$GGA\$\#\gray{ATA}}  \\ 
 \hline
 7 & \texttt{\$} & \texttt{TGA\$ACG\$ATCA\$GGA\$\#\gray{ATATG\$}}  \\ 
 \hline
\end{tabular}
\end{adjustbox}
\vspace{2mm}
\caption{From left to right we show the \eBWT, the \dollarebwt, and the \concat\ of the string collection ${\cal M} = \{ {\tt \texttt{ATATG},\texttt{TGA},\texttt{ACG},\texttt{ATCA},\texttt{GGA} }\}$. Indices are given with reference to the numbering $T_1 = {\tt ATATG}, T_2 = {\tt TGA}, T_3 = {\tt ACG}, T_4 = {\tt ATCA}, T_5 = {\tt GGA}$.}
\label{tab:3bwts-table}
\end{table*}

\begin{table*}[h]
\centering
\begin{adjustbox}{max width=140mm}
\begin{tabular}{ |r|c|l| } 
 \hline
 \text{index} & mdol & rotation \\ 
 \hline
 (1,6) & \green{\texttt{G}} & \texttt{\$$_1$\gray{ATATG}}   \\ 
 \hline
 (2,4) & \green{\texttt{A}} & \texttt{\$$_2$\gray{TGA}}  \\ 
 \hline
 (3,4) & \green{\texttt{G}} & \texttt{\$$_3$\gray{ACG}}   \\ 
 \hline
 (4,5) & \green{\texttt{A}} & \texttt{\$$_4$\gray{ATCA}}   \\ 
 \hline
 (5,4) & \green{\texttt{A}} & \texttt{\$$_5$\gray{GGA}}  \\ 
 \hline
 (2,3) & \red{\texttt{G}} & \texttt{A\$$_2$\gray{TG}}   \\ 
  \hline
 (4,4) & \red{\texttt{C}} & \texttt{A\$$_4$\gray{ATC}}   \\ 
 \hline
 (5,3) & \red{\texttt{G}} & \texttt{A\$$_5$\gray{GG}}   \\ 
 \hline
 (3,1) & \texttt{\$$_3$} & \texttt{ACG\$$_3$}   \\ 
 \hline
 (1,1) & \texttt{\$$_1$} & \texttt{ATATG\$$_1$}   \\ 
 \hline
 (4,1) & \texttt{\$$_4$} & \texttt{ATCA\$$_4$}   \\ 
 \hline
 (1,3) & \texttt{T} & \texttt{ATG\$$_1$\gray{AT}}   \\ 
  \hline
 (4,3) & \texttt{T} & \texttt{CA\$$_4$\gray{AT}}   \\ 
  \hline
 (3,2) & \texttt{A} & \texttt{CG\$$_3$\gray{A}}   \\ 
 \hline
 (1,5) & \blue{\texttt{T}} & \texttt{G\$$_1$\gray{ATAT}}   \\ 
 \hline
 (3,3) & \blue{\texttt{C}} & \texttt{G\$$_3$\gray{AC}}   \\ 
 \hline
 (2,2) & \orange{\texttt{T}} & \texttt{GA\$$_2$\gray{T}}   \\ 
 \hline
 (5,2) & \orange{\texttt{G}} & \texttt{GA\$$_5$\gray{G}}   \\ 
 \hline
 (5,1) & \texttt{\$$_5$} & \texttt{GGA\$$_5$}   \\ 
 \hline
 (1,2) & \texttt{A} & \texttt{TATG\$$_1$\gray{A}}   \\ 
 \hline
 (4,2) & \texttt{A} & \texttt{TCA\$$_4$\gray{A}}   \\ 
 \hline
 (1,4) & \texttt{A} & \texttt{TG\$$_1$\gray{ATA}}   \\ 
 \hline
 (2,1) & \texttt{\$$_2$} & \texttt{TGA\$$_2$}   \\ 
 \hline
\end{tabular}
%\end{adjustbox}
\quad
%\begin{adjustbox}{max width=90mm}
\begin{tabular}{ |r|c|l| } 
 \hline
 \text{index} & \colex\ & \text{rotation} \\ 
 \hline
 (4,5) & \green{\texttt{A}} & \texttt{\$$_1$\gray{ATCA}}   \\ 
 \hline
 (5,4) & \green{\texttt{A}} & \texttt{\$$_2$\gray{GGA}}  \\
 \hline
 (2,4) & \green{\texttt{A}} & \texttt{\$$_3$\gray{TGA}}  \\ 
 \hline
 (3,4) & \green{\texttt{G}} & \texttt{\$$_4$\gray{ACG}}   \\ 
 \hline
 (1,6) & \green{\texttt{G}} & \texttt{\$$_5$\gray{ATATG}}   \\ 
 \hline
 (4,4) & \red{\texttt{C}} & \texttt{A\$$_1$\gray{ATC}}   \\ 
 \hline
 (5,3) & \red{\texttt{G}} & \texttt{A\$$_2$\gray{GG}}   \\ 
 \hline
 (2,3) & \red{\texttt{G}} & \texttt{A\$$_3$\gray{TG}}   \\ 
  \hline
 (3,1) & \texttt{\$} & \texttt{ACG\$$_4$}   \\ 
 \hline
 (1,1) & \texttt{\$} & \texttt{ATATG\$$_5$}   \\ 
 \hline
 (4,1) & \texttt{\$} & \texttt{ATCA\$$_1$}   \\ 
 \hline
 (1,3) & \texttt{T} & \texttt{ATG\$$_5$\gray{AT}}   \\ 
  \hline
 (4,3) & \texttt{T} & \texttt{CA\$$_1$\gray{AT}}   \\ 
  \hline
 (3,2) & \texttt{A} & \texttt{CG\$$_4$\gray{A}}   \\ 
 \hline
 (3,3) & \blue{\texttt{C}} & \texttt{G\$$_4$\gray{AC}}   \\ 
 \hline
 (1,5) & \blue{\texttt{T}} & \texttt{G\$$_5$\gray{ATAT}}   \\ 
 \hline
 (5,2) & \orange{\texttt{G}} & \texttt{GA\$$_2$\gray{G}}   \\ 
 \hline
 (2,2) & \orange{\texttt{T}} & \texttt{GA\$$_3$\gray{T}}   \\ 
 \hline
 (5,1) & \texttt{\$} & \texttt{GGA\$$_2$}   \\ 
 \hline
 (1,2) & \texttt{A} & \texttt{TATG\$$_5$\gray{A}}   \\ 
 \hline
 (4,2) & \texttt{A} & \texttt{TCA\$$_1$\gray{A}}   \\ 
 \hline
 (1,4) & \texttt{A} & \texttt{TG\$$_5$\gray{ATA}}   \\ 
 \hline
 (2,1) & \texttt{\$} & \texttt{TGA\$$_3$}   \\ 
 \hline
\end{tabular}
\quad 
\begin{tabular}{ |r|c|l| } 
 \hline
 \text{index} & \optimal & \text{rotation} \\ 
 \hline
 (2,4) & \green{\texttt{A}} & \texttt{\$$_1$\gray{TGA}}   \\ 
 \hline
 (5,4) & \green{\texttt{A}} & \texttt{\$$_2$\gray{GGA}}  \\
 \hline
 (4,5) & \green{\texttt{A}} & \texttt{\$$_3$\gray{ATCA}}  \\ 
 \hline
 (3,4) & \green{\texttt{G}} & \texttt{\$$_4$\gray{ACG}}   \\ 
 \hline
 (1,6) & \green{\texttt{G}} & \texttt{\$$_5$\gray{ATATG}}   \\ 
 \hline
 (2,3) & \red{\texttt{G}} & \texttt{A\$$_1$\gray{TG}}   \\ 
 \hline
 (5,3) & \red{\texttt{G}} & \texttt{A\$$_2$\gray{GG}}   \\ 
 \hline
 (4,4) & \red{\texttt{C}} & \texttt{A\$$_3$\gray{ATC}}   \\ 
  \hline
 (3,1) & \texttt{\$} & \texttt{ACG\$$_4$}   \\ 
 \hline
 (1,1) & \texttt{\$} & \texttt{ATATG\$$_5$}   \\ 
 \hline
 (4,1) & \texttt{\$} & \texttt{ATCA\$$_3$}   \\ 
 \hline
 (1,3) & \texttt{T} & \texttt{ATG\$$_5$\gray{AT}}   \\ 
  \hline
 (4,3) & \texttt{T} & \texttt{CA\$$_3$\gray{AT}}   \\ 
  \hline
 (3,2) & \texttt{A} & \texttt{CG\$$_4$\gray{A}}   \\ 
 \hline
 (3,3) & \blue{\texttt{C}} & \texttt{G\$$_4$\gray{AC}}   \\ 
 \hline
 (1,5) & \blue{\texttt{T}} & \texttt{G\$$_5$\gray{ATAT}}   \\ 
 \hline
 (2,2) & \orange{\texttt{T}} & \texttt{GA\$$_1$\gray{T}}   \\ 
 \hline
 (5,2) & \orange{\texttt{G}} & \texttt{GA\$$_2$\gray{G}}   \\ 
 \hline
 (5,1) & \texttt{\$} & \texttt{GGA\$$_2$}   \\ 
 \hline
 (1,2) & \texttt{A} & \texttt{TATG\$$_5$\gray{A}}   \\ 
 \hline
 (4,2) & \texttt{A} & \texttt{TCA\$$_3$\gray{A}}   \\ 
 \hline
 (1,4) & \texttt{A} & \texttt{TG\$$_5$\gray{ATA}}   \\ 
 \hline
 (2,1) & \texttt{\$} & \texttt{TGA\$$_1$}   \\ 
 \hline
\end{tabular}
\end{adjustbox}
\vspace{2mm}
\caption{\label{tab:3bwts-table2} From left to right we show the \mdollar, the \colex, and the optimal BWT of the string collection ${\cal M} = \{ {\tt \texttt{ATATG},\texttt{TGA},\texttt{ACG},\texttt{ATCA},\texttt{GGA} }\}$. Indices are given with reference to the numbering $T_1 = {\tt ATATG}, T_2 = {\tt TGA}, T_3 = {\tt ACG}, T_4 = {\tt ATCA}, T_5 = {\tt GGA}$. Note that we give the rotations according to Lemma~\ref{obs:1}.}
\end{table*}

 %%%%%%%%%%%%%%%%%%%%%%%%%%%%%%%%%%%%%%%%%%%%%%%%%%%%%%%%%%%%%%%%%%%%%%%%%%%%%%%%%%%%%%%%%%%%%%%%%%%%%%%%%%%%%%%%%%%%%%
%%% SUBSECTION 
%%%%%%%%%%%%%%%%%%%%%%%%%%%%%%%%%%%%%%%%%%%%%%%%%%%%%%%%%%%%%%%%%%%%%%%%%%%%%%%%%%%%%%%%%%%%%%%%%%%%%%%%%%%%%%%%%%%%%%

\subsection{Interesting intervals}\label{sec:interesting}

Let us call a string $U$ a {\em shared suffix} w.r.t.\ multiset ${\cal M}$ if it is the suffix of at least two strings in ${\cal M}$. Let $b$ be the lexicographic rank of the smallest rotation beginning with $U\$$ and $e$ the lexicographic rank of the largest rotation beginning with $U\$$, among all rotations of strings $T\$$, where $T \in {\cal M}$. (One can think of $[b,e]$ as the suffix-array interval of $U\$$.) We call $[b,e]$ an {\em interesting interval} if there exist $i\neq j$ s.t.\ $U$ is a suffix of both $T_i$ and $T_j$, and the preceding characters in $T_i$ and $T_j$ are different, i.e., the two occurrences of $U$ as suffix of $T_i$ and $T_j$ constitute a left-maximal repeat. (Put in different terms, interesting intervals correspond to internal nodes in the suffix tree of the reverse string, within the subtree of $\$$.)  Clearly, $[1,k]$ is an interesting interval unless all strings end with the same character. Note that interesting intervals differ both from the {\em SAP-intervals} of~\cite{CoxBJR12} and from the {\em tuples} of~\cite{BentleyGT20} (called {\em maximal row ranges} in~\cite{manzini16}): the former are the intervals corresponding to {\em all} shared suffixes $U$, even if not left-maximal, while the latter include also suffixes $U$ that are not shared.

\begin{lemma}
Any two distinct interesting intervals are disjoint. 
\end{lemma}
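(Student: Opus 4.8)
The plan is to prove a slightly stronger, purely order-theoretic statement and then specialise it: for \emph{any} two distinct shared suffixes $U_1 \neq U_2$, the corresponding intervals $[b_1,e_1]$ and $[b_2,e_2]$ are disjoint. Since every interesting interval is by definition the interval $[b,e]$ of some shared suffix $U$, the lemma follows. Recall that $[b_i,e_i]$ is exactly the set of lexicographic ranks of those rotations of the strings $T\$$ (with $T \in {\cal M}$) that begin with $U_i\$$. So the whole argument reduces to understanding how two prefix-intervals, for the prefixes $U_1\$$ and $U_2\$$, sit inside a lexicographically sorted list.

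First I would record the standard laminar property of prefix-intervals. In any lexicographically sorted sequence of strings, the elements sharing a fixed prefix $P$ occupy a contiguous block, and $[b_i,e_i]$ is precisely this block for $P = U_i\$$. Given two prefixes $P_1,P_2$, their blocks are nested when one of $P_1,P_2$ is a prefix of the other, and disjoint otherwise: if some rotation belonged to both blocks it would begin with both $P_1$ and $P_2$, which forces the shorter of the two to be a prefix of the longer. Hence it suffices to show that neither $U_1\$$ nor $U_2\$$ is a prefix of the other.

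This is where the terminator does all the work, and it is the only genuinely non-routine step. Since $\$ \notin \Sigma$ and each shared suffix $U_i$ is a string over $\Sigma$, the word $U_i\$$ contains exactly one occurrence of $\$$, namely in its last position. If $U_1\$$ were a proper prefix of $U_2\$$, then $|U_1| < |U_2|$, so the terminating $\$$ of $U_1\$$ would align with position $|U_1|+1$ of $U_2\$$; but $|U_1|+1 \le |U_2|$, so that position falls inside $U_2$ and carries a letter of $\Sigma$, a contradiction. The case $U_1\$ = U_2\$$ is excluded because it forces $U_1 = U_2$, and the symmetric argument rules out $U_2\$$ being a prefix of $U_1\$$. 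Thus the two prefixes are incomparable, and by the laminar property the intervals are disjoint.

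Finally I would observe that this simultaneously shows that distinct shared suffixes yield \emph{distinct} intervals: the intervals are disjoint and nonempty (each interesting interval contains at least the two rotations coming from $T_i$ and $T_j$), so they cannot coincide. Consequently the map from shared suffixes to their intervals is injective, and two distinct interesting intervals necessarily arise from two distinct suffixes, hence are disjoint. Equivalently, one can phrase the same reasoning through the footnote's picture -- interesting intervals are subtrees of distinct internal nodes in the suffix tree of the reverse strings (within the $\$$-subtree), and distinct nodes have disjoint subtrees -- but the direct lexicographic argument above is self-contained. The main obstacle is purely conceptual: isolating the single structural fact being used, that $\$$ is a unique non-alphabet sentinel occurring only at the end of each $U_i\$$, after which the result reduces to the familiar behaviour of lexicographic prefix-intervals.
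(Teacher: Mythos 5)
Your proof is correct and follows essentially the same route as the paper: the paper's one-line argument rests on exactly the fact you isolate (no two distinct strings ending in \$ can be prefixes of one another, since \$ occurs only as the final character), and you merely make explicit the standard nested-or-disjoint behaviour of prefix intervals that the paper leaves implicit.
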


\begin{proof}
Follows from the fact that no two distinct substrings ending in $\$$ can be one prefix of the other.
\end{proof}

We can now narrow down the differences between any two separator-based BWTs of the same multiset. The next proposition states that these can only occur in interesting intervals (part 1). This implies that the dollar-symbols appear in the same positions in all separator-based variants except for one very specific case (part 2). Moreover, we get an upper bound on the Hamming distance between two separator-based BWTs (part 3).

\begin{proposition}\label{prop:int_intervals}
Let $L_1$ and $L_2$ be two separator-based BWTs of the same multiset ${\cal M}$. 

\begin{enumerate} 
\item If $L_1[i] \neq L_2[i]$ then  $i \in [b,e]$ for some interesting interval $[b,e]$. 
\item Let ${\cal I}_1$ resp.\ ${\cal I}_2$ be the positions of the dollars in $L_1$ resp.\ $L_2$. If ${\cal I}_1\neq {\cal I}_2$ then there exist $i\neq j$ such that $T_i$ is a proper suffix of $T_j$. 
\item $\displaystyle{\Hdist(L_1,L_2) \leq \sum_{[b,e]  \text{ interesting interval}} (e - b + 1)}$. 
\end{enumerate}
\end{proposition}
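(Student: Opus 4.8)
The plan is to reduce all four separator-based constructions to a single combinatorial picture in which each BWT position is indexed by a rotation of the underlying strings and the only freedom left to the individual variant is how certain ties are broken. To this end I would attach to every rotation $R$ contributing a character its \emph{pre-separator prefix} $V$, defined as the maximal prefix of $R$ containing no separator (for $\concat$, both $\$$ and $\#$ count as separators). Two facts need to be established uniformly for all four variants. First, $V$ is always a suffix of some input string $T_i$, and the character that $R$ contributes is precisely the character of $T_i$ immediately preceding this suffix occurrence of $V$; this is a letter of $\Sigma$ when $V$ is a proper suffix of $T_i$, and a separator (hence a $\$$ after relabeling) exactly when $V=T_i$. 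Second, the set of rotations sharing a fixed pre-separator prefix $V$ is exactly the set of strings of ${\cal M}$ having $V$ as a suffix.

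The core step is to show that the relative order of two rotations with \emph{different} pre-separator prefixes $V_1\neq V_2$ is the same in every variant. If neither of $V_1,V_2$ is a prefix of the other, the comparison is settled inside $V_1,V_2$, before any separator is read, so it is variant-independent. If $V_1$ is a proper prefix of $V_2$, then the character following $V_1$ in the first rotation is a separator while in the second it is a letter of $\Sigma$; since separators are smaller than all letters in every variant, the first rotation precedes the second, again independently of the variant. It follows that the rotations sharing a fixed pre-separator prefix $V$ occupy one and the same interval $[b,e]$ in all four BWTs, and only their internal order can differ from variant to variant.

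The three claims now follow. For part 1, the characters contributed inside the interval of a fixed $V$ are exactly the preceding characters of the suffix occurrences of $V$; if these coincide, the BWT on $[b,e]$ is a constant run, hence identical in $L_1$ and $L_2$ regardless of the internal order, so a mismatch $L_1[i]\neq L_2[i]$ can only occur when $V$ has two suffix occurrences with distinct preceding characters, i.e.\ inside an interesting interval. Part 2 is the special case where the contributed character is a $\$$: such a $\$$ comes from a rotation with $V=T_i$, and it can change position between variants only if its interval $[b,e]$ also contains a non-separator character, which forces some $T_j$ to have $V=T_i$ as a \emph{proper} suffix, yielding $i\neq j$ with $T_i$ a proper suffix of $T_j$. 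Part 3 is then immediate from part 1: every mismatch lies in some interesting interval, and since these are pairwise disjoint by the Lemma, summing their lengths bounds $\Hdist(L_1,L_2)$.

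The step I expect to be the main obstacle is the uniform verification of the pre-separator framework across the four structurally different constructions, and in particular for $\concat$, whose rotations are suffixes of a single concatenated string that continue into the following strings $T_{i+1}\$\cdots$ rather than wrapping around. There I would have to check that such cross-string continuations never interleave a rotation of one pre-separator prefix between two rotations of another, and that the extra terminator $\#$ and the wrap-around at $T_1$ do not create spurious mismatches outside interesting intervals. Once the framework is in place uniformly, the remainder is bookkeeping.
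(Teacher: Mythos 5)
Your proof is correct and takes essentially the same route as the paper's: the paper's (much terser) argument likewise reduces any mismatch position to a shared suffix $U$ preceded by two distinct characters, i.e.\ to an interesting interval, and derives parts 2 and 3 from part 1. Your pre-separator-prefix decomposition and the variant-independence of the resulting blocks (including the treatment of $\concat$'s extra terminator) are precisely the details that the paper compresses into the sentence that all separator-based variants sort rotations lexicographically, so the proposal is a fleshed-out version of the paper's proof rather than a different approach.
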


\begin{proof} {\em 1.} Let $L_1[i] = {\tt x}$ and $L_2[i] = {\tt y}$. 
Since all separator-based \BWT\ variants use the lexicographical order of the rotations, this means that there exists a substring $U$ which is preceded by {\tt x} in one string $T_j$ and by {\tt y} in another $T_{j'}$, the first occurrence has rank $i$ in one \BWT\ and the other has rank $i$ in the other \BWT\ variant. This implies that the two occurrences are followed by two dollars, and either the two dollars are different, or they are the same dollar, and the subsequent substrings are different. Therefore, $U$ defines an interesting interval. Parts {\em 2.} and {\em 3.} follow from {\em 1.}
\end{proof}

Proposition~\ref{prop:int_intervals} implies that the variation of the different transforms can be explained based solely on what rule is used to break ties for shared suffixes. We will see next how the different BWT variants determine this tie-breaking rule.

%%%%%%%%%%%%%%%%%%%%%%%%%%%%%%%%%%%%%%%%%%%%%%%%%%%%%%%%%%%%%%%%%%%%%%%%%%%%%%%%%%%%%%%%%%%%%%%%%%%%%%%%%%%%%%%%%%%%%%
%%% SUBSECTION 
%%%%%%%%%%%%%%%%%%%%%%%%%%%%%%%%%%%%%%%%%%%%%%%%%%%%%%%%%%%%%%%%%%%%%%%%%%%%%%%%%%%%%%%%%%%%%%%%%%%%%%%%%%%%%%%%%%%%%%

\section{Permutations induced by separator-based BWT variants}\label{sec:permutations}

Let us now restrict ourselves to ${\cal M}$ being a set, i.e., no string occurs more than once. (Later we will show how to deal with multisets.) 
As we showed in the previous section, the only differences between the separator-based BWT variants are given by the order in which shared suffixes are listed. It is also clear that the same order applies in each interesting interval, as well as to the $k$-length prefix of the transform. Therefore, it suffices to study the permutation $\pi$ of the $k$ dollars in this prefix. 

Since the strings are all distinct, they each have a unique lexicographic rank within the set ${\cal M}$. Thus the input order can be seen as a permutation $\rho$ of the lexicographic ranks\footnote{For those used to thinking about suffix arrays, $\rho$ can be seen as the inverse suffix array of the input if the strings are thought of as meta-characters.}; 
if the strings are input in lexicographic order, then $\rho = id$. For our toy example ${\cal M} = [{\tt ATATG, TGA ,ACG ,ATCA, GGA}]$, we have $\rho = 25134$.  

Let us now define as {\em output permutation} $\pi$ the permutation of the last characters of the input strings, as found in the $k$-length prefix of the \BWT\ variant in question. We will denote the output permutations of the \dollarebwt, \mdollar, \concat, and \rlo\ by $\pi_{dolE}, \pi_{mdol}, \pi_{conc},$ and $\pi_{colex}$, respectively. (As the permutation of \optimal\ is algorithmically defined, we do not treat it here.) 
Again, we give these permutations w.r.t.\ the lexicographic ranks of the strings. 
In our running example, we have $\pi_{dolE} = 12345$, $\pi_{mdol} = 25134$, $\pi_{colex} = 34512$, and $\pi_{conc}= 45132.$

It is easy to see that the output permutation $\pi_{mdol}$ is equal to $\rho$, since the dollar-symbols are ordered according to $\rho$. For the \dollarebwt, the rank of {\tt \$}$T_i$ equals the lexicographic rank of $T_i$ among all input strings (Lemma~\ref{obs:1}), i.e., $\pi_{dolE}=id$. Further, $\pi_{colex} = \gamma$ by definition, where $\gamma$ denotes the colexicographic order of the input strings.  The situation is more complex in the case of \concat. Since the {\tt \#} is the smallest character, the last string of the input will be the first, while for the others, the lexicographic rank {\em of the following string} decides the order. In our running example, $\pi_{conc} = 45132$. We next formalize this.  %= 4, 5 , 1, 3, 2$. 

Let $\Phi_\rho$ be the {\em linking permutation}~\cite{KucherovTV13} of $\rho$, defined by $\Phi_\rho(i) = \rho(\rho^{-1}(i) + 1)$, for $i\neq \rho(k)$, and $\Phi_\rho(\rho(k)) = \rho(1)$, the permutation that maps each element to the element in the next position and the last element to the first. Let us also define, for $j\in \{1,\ldots,k\}$ and $i\neq j$, $f_j(i)$ by $f_j(i)=i$ if $i<j$ and $i-1$ otherwise, i.e.\ $f_j(i)$ gives the rank of element $i$ in the set $\{1,\ldots,k\} \setminus \{j\}$. 
The next lemma gives the precise relationship between $\rho$ and $\pi_{conc}$.

\begin{lemma}\label{lemma:linking}
Let $\rho$ be the permutation of the input order w.r.t.\ the lexicographic order, i.e.\ the $i$th input string has lexicographic rank $\rho(i)$. Then $\pi_{conc}=\pi_{conc}(\rho)$ is given by: 
\begin{align}
    \pi_{conc}(1) = \rho(k), \text{ and for } i\neq \rho(k): \pi_{conc}^{-1}(i) = f_{\rho(1)}(\Phi_\rho(i)) + 1. 
\end{align}
\end{lemma}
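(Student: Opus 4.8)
The plan is to work directly with the structure of $\concat({\cal M}) = \BWT(T_1\$T_2\cdots T_k\$\#)$ and to read off the permutation of the $k$ dollars in its length-$k$ prefix. First I would observe that the rotations beginning with $\$$ are exactly those that start immediately after some $T_i$; writing $S = T_1\$T_2\$\cdots T_k\$\#$, the rotation opened by the $i$th separator is $\$T_{i+1}\$\cdots T_k\$\#T_1\cdots T_i$ for $i<k$, and $\$\#T_1\$\cdots T_k$ for $i=k$. Since the character cyclically preceding each such $\$$ is the last letter of $T_i$, this rotation contributes the dollar that ``belongs'' to $T_i$, whose lexicographic rank is $\rho(i)$ by definition of $\rho$. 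Hence the position of this dollar in the prefix is precisely its rank in the sorted list of $\$$-rotations, and determining $\pi_{conc}$ amounts to sorting these $k$ rotations.

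Next I would determine that order. Because $\#$ is strictly smaller than every other symbol, the rotation $\$\#T_1\cdots$ (the one after $T_k$) is the smallest of all $\$$-rotations; it therefore occupies position $1$ and contributes the dollar of $T_k$, giving $\pi_{conc}(1)=\rho(k)$. For the remaining rotations ($i<k$) the initial $\$$ is common, so their order is decided by the suffix $T_{i+1}\$\cdots$, i.e.\ by the \emph{following} string $T_{i+1}$. The key sub-point is that comparing $T_{i+1}\$\cdots$ against $T_{j+1}\$\cdots$ yields exactly the lexicographic order of $T_{i+1}$ and $T_{j+1}$: when neither is a prefix of the other this is immediate, and when one is a proper prefix of the other the tie is broken at the trailing $\$$, which is smaller than every letter of $\Sigma$, so the shorter string still precedes the longer one, again matching lexicographic order. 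As ${\cal M}$ is a set, the following strings are pairwise distinct, so this rule totally orders the rotations with $i<k$.

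It then remains to convert this ordering into the stated closed form. The dollars in positions $2,\dots,k$ are sorted by the lexicographic rank $\rho(i+1)$ of their following string, and these ranks range over $\{\rho(2),\dots,\rho(k)\}=\{1,\dots,k\}\setminus\{\rho(1)\}$. I would take a dollar with $i \neq \rho(k)$, i.e.\ the dollar belonging to the string of lexicographic rank $i$; this is the separator after $T_{\rho^{-1}(i)}$, and its following string has lexicographic rank $\rho(\rho^{-1}(i)+1)=\Phi_\rho(i)$ (using $i \neq \rho(k)$, so that $\rho^{-1}(i)\neq k$ and a following string genuinely exists). Its position among the non-minimal rotations is the rank of $\Phi_\rho(i)$ inside $\{1,\dots,k\}\setminus\{\rho(1)\}$, which is exactly $f_{\rho(1)}(\Phi_\rho(i))$; adding $1$ to account for the minimal rotation in position $1$ gives $\pi_{conc}^{-1}(i)=f_{\rho(1)}(\Phi_\rho(i))+1$, as claimed.

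I expect the main obstacle to be bookkeeping rather than a deep argument: correctly distinguishing which dollar \emph{belongs} to which string from which string \emph{follows} it (an off-by-one between $\rho(i)$ and $\rho(i+1)$ that the linking permutation $\Phi_\rho$ is designed to encode), and matching the removal of the single minimal rotation both to the $+1$ shift and to the exclusion of $\rho(1)$ captured by $f_{\rho(1)}$. Verifying the small case $\rho=25134$, where the derivation must reproduce $\pi_{conc}=45132$, would be my sanity check that these indices are aligned.
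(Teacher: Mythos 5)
Your argument is correct and is essentially the paper's (the paper's proof simply states that the lemma ``follows straightforwardly from the tie-breaking rule of \concat''): you spell out exactly that rule, namely that the smallest dollar-rotation is the one followed by {\tt \#} (giving $\pi_{conc}(1)=\rho(k)$) and that the remaining dollar-rotations are ordered by the lexicographic rank $\Phi_\rho(i)$ of the following string, with the trailing {\tt \$} handling the proper-prefix case, and the bookkeeping via $f_{\rho(1)}$ and the $+1$ shift is carried out correctly (it reproduces $\pi_{conc}=45132$ for $\rho=25134$). The only implicit point is the paper's convention of dropping the rotation starting with {\tt \#} so that the $k$ dollar-rotations occupy positions $1,\ldots,k$, which you use consistently.
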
 

\begin{proof}
Follows straightforwardly from the tie-breaking rule of \concat. 
\end{proof}

Essentially, Lemma~\ref{lemma:linking} says that $\pi_{conc}$ is the BWT of $\rho$. (We thank Massimiliano Rossi for this observation.) This can be seen as follows. Take the string collection ${\cal M}$ in order $\rho$ and construct a new string $T^{\rho}$ concatenating the lexicographic ranks of the strings in ${\cal M}$ with a final dollar, in our example $T^{\rho} = 25134\$$; thus, $T^{\rho}$ is a string over the alphabet $\{1,2,\ldots,k\}$ with an additional dollar at the end. It follows from Lemma~\ref{lemma:linking} that the output permutation $\pi_{conc}$ is the BWT of $T^{\rho}$, from which the $\$$-sign was removed: $\BWT(25134\$) = 45\$132$, therefore, $\pi_{conc} = 45132$.

\begin{example}
The mapping $\rho \mapsto \pi_{conc}$ for $k=3$ is as follows: $123 \mapsto 312$, $132 \mapsto 231$, $312 \mapsto 231$, $213 \mapsto 321$, $231 \mapsto 132$, and $321 \mapsto 123$. Note that no $\rho$ maps to $213$. 
\end{example}

As can be seen already for $k=3$, not all permutations $\pi$ are reached by this mapping. We will call a permutation $\pi$ {\em conc-feasible} if there exists an input order $\rho$ such that $\pi_{conc}(\rho) = \pi$. For $k=4$, there are $18$ conc-feasible permutations (out of $24$), for $k=5$, $82$ (out of $120$). In Table~\ref{tab:forbidden}, we give the percentage of conc-feasible permutations $\pi$, for $k$ up to $11$. The lexicographic order is always conc-feasible, namely with $\rho = k,k-1,\ldots, 2,1$; the colex order is not always conc-feasible, as the following example shows. 

\begin{table*}[h]
\centering
\begin{tabular}{ |l|c|c|c|c|c|c|c|c|c| } 
 \hline
 $k$  & 3 & 4 & 5 & 6 & 7 & 8 & 9 & 10 & 11 \\
 \hline \hline
         & {\tt 83.33\%} & {\tt 75.0\%} & {\tt 68.33\%} & {\tt 63.89\%} & {\tt 60.12\%} & {\tt 57.29\%} & {\tt 54.8\%} & {\tt 52.81\%} & {\tt 51.0\%} \\
 \hline
\end{tabular}
\vspace{2mm}
\caption{Percentage of conc-feasible permutations w.r.t.\ \concat.\label{tab:forbidden}}
\label{tab:exp-results}
\end{table*}

\begin{example}
Let ${\cal M} = \{\tt ACA,TGA,GAA\}$, thus $\rho = 132$, $\gamma = 213$, but as we have seen, no permutation of the strings in ${\cal M}$ will yield this order for \concat. In particular, the $\colex({\cal M}) =$ {\tt AAAACGG\$AT\$\$} has $7$ runs, while all conc-feasible concatBWTs have at least $8$: {\tt AAAGACG\$AT\$\$}, {\tt AAACGAG\$AT\$\$}, {\tt AAAAGCG\$AT\$\$}, {\tt AAAGCAG\$AT\$\$}, {\tt AAACAGG\$AT\$\$}. 
\end{example}

An important consequence is that, given an input permutation $\rho$, the output permutations induced by \mdollar\ and \concat\ are always different: $\pi_{mdol} \neq \pi_{conc}$ holds always, since $\pi_{conc}(1)=\rho(k)$. This means that, in whatever order the strings are given, on most string sets the resulting transforms \mdollar\ and \concat\ will differ.

%%%%%%%%%%%%%%%%%%%%%%%%%%%%%%%%%%%%%%%%%%%%%%%%%%%%%%%%%%%%%%%%%%%%%%%%%%%%%%%%%%%%%%%%%%%%%%%%%%%%%%%%%%%%%%%%%%%%%%%%%%%%%
\subsection{Permutations on multisets} 

Now let ${\cal M}$ be a multiset, so the same string can be contained more than once in ${\cal M}$. Let us again map ${\cal M}$ to a string $T^\rho$ over the alphabet of the lexicographic ranks $\{1,2,\ldots,k'\}$, where $k'\leq k$, and let us define the output order $\pi$ as before, as the order in which the lexicographic ranks appear in the $k$-length prefix of the BWT variant.\footnote{Formally, $\rho$ and $\pi$ are multi-permutations.} Then, $\pi_{mdol} = T^\rho$, $\pi_{dolE}$ is the sequence of non-decreasingly sorted ranks; and $\pi_{colex}$ is non-decreasing w.r.t.\ the colexicographical order. Finally, again $\pi_{conc}$ is the BWT of $T^\rho\$$ from which the dollar-sign has been removed.  

\begin{example}
    Let ${\cal M} = \{\tt ACA,TGA,ACA,GAA,TGA,TGA\}$, thus $T^{\rho} = 131233$. Then $\pi_{mdol} = 131233$, $\pi_{dolE} = 112333$, $\pi_{colex}= 211333$, and $\BWT(131233\$)=33\$1312$, resulting in $\pi_{conc} = 331312$.  
\end{example}

Let us denote by $\BWT^*(T)$ the string $\BWT(T\$)$ with the dollar removed. 
It has been shown experimentally that more than half of binary and ternary strings of length between $10$ and $20$ do not lie in the image of the function $\BWT^*$, with the percentage of those not in the image increasing with increasing length~\cite{GiulianiILPST21}.  As already seen for permutations (Table~\ref{tab:forbidden}), the function $\BWT^*$ is not surjective; the results of~\cite{GiulianiILPST21} seem to indicate that, in fact, the majority of multi-permutations cannot be produced by \concat. 

On the other hand, clearly all multi-permutations can be produced with \mdollar, as in that case, the output permutation is the same as the input permutation. Moreover, we have seen that all separator-based BWT-variants can be simulated by the \mdollar\ transform, since it suffices to apply \mdollar\ to the output permutation of the desired variant. We summarize: 

\begin{proposition}\label{prop:2}
    Let ${\cal M}$ be given, and $\rho$ the order of the lexicographic ranks in which the strings appear in ${\cal M}$. Then 
    \begin{enumerate}
        \item $\dollarebwt({\cal M}) = \mdollar(\lambda({\cal M}))$, with $\lambda$ the lexicographic order; 
        \item $\colex({\cal M}) = \mdollar(\gamma({\cal M}))$, with $\gamma$ the colexicographic order; 
        \item $\concat({\cal M}) = \mdollar(\beta({\cal M}))$, where $\beta = \BWT^*(T^\rho)$ and 
        $T^\rho$ is the meta-string consisting of the lexicographic ranks of the input strings. 
    \end{enumerate}
\end{proposition}

%%%%%%%%%%%%%%%%%%%%%%%%%%%%%%%%%%%%%%%%%%%%%%%%%%%%%%%%%%%%%%%%%%%%%%%%%%%%%%%%%%%%%%%%%%%%%%%%%%%%%%%%%%%%%%%%%%%%%%
%%% SUBSECTION 
%%%%%%%%%%%%%%%%%%%%%%%%%%%%%%%%%%%%%%%%%%%%%%%%%%%%%%%%%%%%%%%%%%%%%%%%%%%%%%%%%%%%%%%%%%%%%%%%%%%%%%%%%%%%%%%%%%%%%%

\section{Effects on the parameter $r$}\label{sec:effects_on_r}

What is the effect of the different permutations $\pi$ of the strings in ${\cal M}$, induced by these \BWT\ variants, on the number of runs of the \BWT? As the following example shows, the number of runs can differ significantly between different variants. 

\begin{example}
Let ${\cal M} = \{{\tt AAAA, AGCA, GCAA, GTCA, CAAA, CGCA, TCAA,}$ ${\tt TTCA}\}$. Then \\ $\mdollar({\cal M})$ $=$  {\tt AAAAAAAAACACACACACACAC\$\$GTGTGT\$\$AC\$\$GT\$\$} has 28 runs,  while \\ $\rlo({\cal M}) =$ {\tt AAAAAAAAAAAACCCCAACCAC\$\$GGTTGT\$\$AC\$\$GT\$\$} has 18 runs. 
\end{example}

The results of Section~\ref{sec:bwtvariants} give us a method to measure the degree to which the BWT variants can differ. 

\begin{lemma}\label{lemma:maxruns}
Let $[b,e]$ be an interesting interval, and $(n_1,\ldots,n_{\sigma})$ the Parikh vector of $L[b..e]$, i.e.\ $n_i$ is the number of occurrences of the $i$th character. Let {\tt a} be such that $n_{\tt a} = \max_{i} n_i$, and $N_{\tt a} = (e-b+1)-n_{\tt a}$, the sum of the other character multiplicities. Then the maximum number of runs in interval $[b,e]$ is $e - b + 1$ if $n_{\tt a}-1 \leq N_{\tt a}$, and $2N_{\tt a} +1$ otherwise. 
\end{lemma}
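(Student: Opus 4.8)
The plan is to recast the statement as a purely combinatorial extremal problem and solve it by a counting argument for the upper bound together with an explicit construction for the lower bound. By Proposition~\ref{prop:int_intervals}, the characters of the BWT outside interesting intervals are fixed across all separator-based variants, whereas inside $[b,e]$ the shared suffixes may be listed in any order, and each such string contributes a fixed preceding character to $L[b..e]$. Hence the freedom we have is exactly that of permuting the multiset of characters with Parikh vector $(n_1,\ldots,n_\sigma)$ arbitrarily within the interval, and the quantity to be computed is the largest number of runs over all arrangements of a multiset of $m = e-b+1$ characters whose maximum multiplicity is $n_{\tt a}$ and whose remaining characters total $N_{\tt a} = m - n_{\tt a}$. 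I would state this reduction first so that the rest of the argument never refers back to the BWT.

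For the upper bound I would classify every run of an arrangement as either an \emph{{\tt a}-run} (made of the most frequent character {\tt a}) or a \emph{non-{\tt a}-run}, and let $r_a$ and $r_o$ denote their numbers, so the total number of runs is $r_a + r_o$. Two inequalities then do the work. First, the non-{\tt a}-runs jointly use all $N_{\tt a}$ non-{\tt a} characters and each has length at least one, so $r_o \le N_{\tt a}$. Second, two consecutive {\tt a}-runs cannot be adjacent, so each of the $r_a - 1$ internal gaps between {\tt a}-runs must contain at least one non-{\tt a}-run; these gaps are disjoint, giving $r_o \ge r_a - 1$ and hence $r_a \le N_{\tt a}+1$. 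Adding the two bounds yields $r_a + r_o \le 2N_{\tt a}+1$.

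I would then split into the two regimes and match the formulas. When $n_{\tt a}-1 > N_{\tt a}$, the bound $2N_{\tt a}+1$ is strictly below the trivial bound $m$, and I would exhibit the arrangement $\texttt{a}^{c_0}\, x_1\, \texttt{a}^{c_1}\, x_2 \cdots x_{N_{\tt a}}\, \texttt{a}^{c_{N_{\tt a}}}$, where the $x_i$ are the $N_{\tt a}$ non-{\tt a} characters placed as singletons and the $c_i \ge 1$ distribute the $n_{\tt a}$ copies of {\tt a} over $N_{\tt a}+1$ nonempty blocks (possible since $n_{\tt a} \ge N_{\tt a}+1$); this realizes $r_a = N_{\tt a}+1$ and $r_o = N_{\tt a}$, i.e.\ exactly $2N_{\tt a}+1$ runs. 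When $n_{\tt a}-1 \le N_{\tt a}$, which is equivalent to $n_{\tt a} \le \lceil m/2\rceil$, I would invoke (and briefly prove, by repeatedly placing the currently most frequent character subject to it differing from the previous one) the classical fact that the multiset can be rearranged with no two adjacent characters equal, giving $m = e-b+1$ runs, all of length one.

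The step needing the most care is the achievability in the second regime and the clean behaviour at the threshold. I would verify that at equality $n_{\tt a} = N_{\tt a}+1$ both expressions coincide, since $2N_{\tt a}+1 = n_{\tt a}+N_{\tt a} = m$, so the case split $n_{\tt a}-1 \le N_{\tt a}$ versus $n_{\tt a}-1 > N_{\tt a}$ is consistent, and confirm that the greedy placement never stalls precisely under the condition $n_{\tt a} \le \lceil m/2\rceil$, which is where a sloppy argument tends to break. A minor point I would also flag is that the runs of $L[b..e]$ at its two ends could in principle merge with the fixed neighbouring characters $L[b-1]$ and $L[e+1]$; since the lemma counts runs \emph{within} the interval, this does not affect the count and I would simply treat $L[b..e]$ as a standalone string.
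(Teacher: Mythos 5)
Your proof is correct, and it is worth contrasting with the paper's. The paper proves the lemma with a single gap-filling construction that works in both regimes at once: lay down the $n_{\tt a}$ copies of {\tt a}, creating $n_{\tt a}+1$ gaps, and distribute the remaining characters into distinct gaps, character class by character class; if the non-{\tt a} characters number at least $n_{\tt a}-1$ every internal gap gets filled and all $e-b+1$ characters become singleton runs, otherwise each placed character adds two runs, giving $2N_{\tt a}+1$. This is purely an achievability argument --- the matching upper bound ($2N_{\tt a}+1$ cannot be exceeded in the unbalanced case) is left implicit. You instead split the two directions: your counting argument ($r_o \le N_{\tt a}$ and $r_a \le r_o+1$ because consecutive {\tt a}-runs must be separated, hence $r_a+r_o \le 2N_{\tt a}+1$, together with the trivial bound $e-b+1$) supplies exactly the optimality step the paper glosses over, and is a genuine strengthening of the write-up. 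For achievability you split into regimes: your explicit interleaving $\texttt{a}^{c_0}x_1\texttt{a}^{c_1}\cdots x_{N_{\tt a}}\texttt{a}^{c_{N_{\tt a}}}$ in the unbalanced case is the paper's construction in disguise, while in the balanced case you appeal to (or re-derive greedily) the classical ``no two adjacent equal'' rearrangement fact, which the paper's gap-filling in effect reproves; your threshold check $n_{\tt a}-1\le N_{\tt a} \iff n_{\tt a}\le\lceil (e-b+1)/2\rceil$ and the consistency check at equality are right. Your framing of the problem as a free permutation of the multiset of characters inside the interval, and your remark that possible merging with $L[b-1]$, $L[e+1]$ is irrelevant because the lemma counts runs within $[b,e]$, both match how the paper uses the lemma (Definition~\ref{def:var} and Proposition~\ref{prop:runs}). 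In short: same construction idea for the lower bound, but you add the explicit upper-bound argument, at the cost of a case split and reliance on the greedy rearrangement claim, which does need the care you flag.
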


\begin{proof}
Place the $n_{\tt a}$ {\tt a}-characters in a row, creating $n_{\tt a} +1$ gaps, namely one between each adjacent {\tt a}, and one each at the beginning and at the end. Now place all {\tt b}-characters, each in a different gap; since $n_{\tt a}$ is maximum, there are enough gaps. Then place all {\tt c}'s, first filling gaps that are still empty, if any, then into gaps without {\tt c}, etc. We never have to place two identical characters in the same gap. If the total number of non-{\tt a}-characters is at least than $n_{\tt a} - 1$, then we can fill every gap, thus separating all {\tt a}'s, and creating a run for every character of $I$. If we have fewer than $n_{\tt a}-1$ characters, then we are still creating two runs with each non-{\tt a}-character, but we cannot separate all {\tt a}'s. 
\end{proof}

We will use this lemma to measure the variability of a dataset: 

\begin{definition}\label{def:var}
Let ${\cal M}$ be a multiset. For an interesting interval $[b,e]$, let $var([b,e])$ be the upper bound on the number of runs in $[b,e]$ from Lemma~\ref{lemma:maxruns}. Then the {\em variability} of ${\cal M}$ is 
\[var({\cal M}) = \frac{\sum_{[b,e] \text{ interesting interval}} var([b,e])}{\sum_{[b,e] \text{ interesting interval}} (e-b+1)}.\]
\end{definition}

The \colex\ has been shown experimentally to yield a low number of runs of the \BWT\ \cite{Li14a,CoxBJR12}. Even though it does not always minimize $r$ (one can easily create small examples where other permutations yield a lower number of runs), we can bound its distance from the optimum. 

\begin{proposition}\label{prop:runs}
Let $L$ be the \rlo\ of multiset ${\cal M}$, and let $r_{\text{OPT}}$ denote the minimum number of runs of any separator-based BWT of ${\cal M}$. Then $\runs(L) \leq r_{\text{OPT}}+ 2\cdot c_{\cal M}$, where $c_{\cal M}$ is the  number of interesting intervals.
\end{proposition}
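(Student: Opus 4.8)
The plan is to compare $\runs(L)$ and $r_{\text{OPT}}$ by counting, for each string, its number of \emph{transitions} (positions $i$ with $W[i]\neq W[i+1]$), since $\runs(W)=1+\tau(W)$ where $\tau(W)$ denotes the number of transitions of $W$. By Proposition~\ref{prop:int_intervals}(1), the \rlo\ $L$ and any optimal separator-based BWT $L_{\text{OPT}}$ coincide outside the interesting intervals, and within each interesting interval they contain the same characters (the same Parikh vector). Since the $c_{\cal M}$ interesting intervals are pairwise disjoint, I would split $\tau$ into three groups of transitions: those between two positions both lying outside all interesting intervals (the \emph{common} transitions, counted by some quantity $C$ that is identical for $L$ and $L_{\text{OPT}}$), those strictly inside an interesting interval (\emph{internal}), and those at the two endpoints of an interval (\emph{boundary}).

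First I would establish the key structural fact, which I expect to be the main obstacle: within each interesting interval, the \rlo\ lists the BWT characters in non-decreasing order. Recall that the interval $[b,e]$ of a shared suffix $U$ collects all rotations beginning with $U\$$, one for each string ending in $U$, and that the BWT character of such a rotation is the symbol preceding that occurrence of $U$. The \rlo\ breaks ties among these rotations by the colexicographic rank of their source strings; but for two strings both ending in $U$, the colex comparison first agrees on $U$ (read right-to-left) and is then decided precisely by the preceding character, i.e.\ by the BWT character. Hence the characters in $L[b..e]$ appear sorted, so the number of internal transitions of $L$ in $[b,e]$ equals $d_{[b,e]}-1$, where $d_{[b,e]}$ denotes the number of distinct characters in the interval.

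Next I would use two easy bounds. Since $L_{\text{OPT}}[b..e]$ contains $d_{[b,e]}$ distinct characters, \emph{any} arrangement of them has at least $d_{[b,e]}-1$ internal transitions; summing over all intervals and adding the common transitions gives $\tau(L_{\text{OPT}}) \ge C + \sum_{[b,e]} (d_{[b,e]}-1)$. On the other side, each interesting interval contributes at most two boundary positions (its left end and its right end), so the total number of boundary transitions of $L$ is at most $2\,c_{\cal M}$; together with the sorting fact this yields $\tau(L) = C + \sum_{[b,e]} (d_{[b,e]}-1) + (\text{boundary transitions of } L) \le C + \sum_{[b,e]} (d_{[b,e]}-1) + 2\,c_{\cal M}$.

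Combining the two inequalities gives $\tau(L) \le \tau(L_{\text{OPT}}) + 2\,c_{\cal M}$, and adding $1$ to both sides turns this into $\runs(L) \le r_{\text{OPT}} + 2\,c_{\cal M}$, as claimed. The points needing care are the sorting claim (handling the tie-breaking precisely, together with the nesting of shorter shared suffixes inside longer ones, which the disjointness of interesting intervals controls) and the boundary count, where adjacent interesting intervals may share an endpoint position; since I only need an upper bound, charging at most two boundary positions per interval remains safe even under such coincidences.
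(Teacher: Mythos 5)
Your proof is correct and follows essentially the same route as the paper's: both rest on the facts that all variation between separator-based variants is confined to interesting intervals, that the colex order groups (indeed sorts) the characters within each such interval so that $L$ attains the minimum possible $d_I-1$ internal transitions, and that any other order can only do better at the two interval boundaries, i.e.\ by at most $2$ per interval. Your transition-counting bookkeeping merely makes the paper's verbal boundary argument explicit; the one loose point is attributing the per-interval Parikh-vector equality to Proposition~\ref{prop:int_intervals}(1) alone---it really follows from the fact that each interval collects the same set of rotations (those beginning with $U\$$) in every separator-based variant, only permuted---but that fact is true and is used implicitly by the paper as well.
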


\begin{proof} 
Let $I=[b_I,e_I]$ be an interesting interval containing $d$ distinct characters, and let $U$ be the shared suffix defining $I$. Since the strings are listed according to the colex order, all strings in which $U$ is preceded by the same character will appear in one block, and therefore, $L$ has exactly $d$ runs in the interval $I$. Let $L_{b_I-1} = {\tt x}$ and $L_{e_I+1} = {\tt y}$. If ${\tt x}$ occurs in $I$ and it is not the first run of $I$ (i.e., $L_{b_I}\neq {\tt x}$), then listing first the strings where $U$ is preceded by ${\tt x}$ would reduce the number of runs by $1$; similarly, listing those where ${\tt y}$ precedes $U$ as last of the group would reduce the number of runs by $1$. By Prop.~\ref{prop:int_intervals}, this is the only possibility for varying the number of runs. 
\end{proof}

The algorithm of Bentley et al.~\cite{BentleyGT20} for the optimal order for \mdollar\ is based on the idea of 
starting from the colex-order and then adjusting, where possible, the order of the runs within interesting intervals in order to minimize character changes at the borders, i.e.\ such that the first and the last run of each interesting interval is identical to the run preceding and following that interesting interval. This is equivalent to sorting groups of sequences sharing the same left-maximal suffix. This sorting can be done on each interesting interval independently without affecting the other interesting intervals. In Table~\ref{tab:3bwts-table2}, we show the result on our toy example, where it reduces the number of runs by $2$ w.r.t.\ colex order. 
In the next section, we compare the number of runs of the non-separator based BWT variants to the optimum. 

%%%%%%%%%%%%%%%%%%%%%%%%%%%%%%%%%%%%%%%%%%%%%%%%%%%%%%%%%%%%%%%%%%%%%%%%%%%%%%%%%%%%%%%%%%%%%%%%%%%%%%%%%%%%%%%%%%%%%%
%%% SECTION 
%%%%%%%%%%%%%%%%%%%%%%%%%%%%%%%%%%%%%%%%%%%%%%%%%%%%%%%%%%%%%%%%%%%%%%%%%%%%%%%%%%%%%%%%%%%%%%%%%%%%%%%%%%%%%%%%%%%%%%

\section{Experimental results}\label{sec:experimental}

We computed the five BWT variants \eBWT, \dollarebwt, \mdollar, \concat, and \colex, on eight different genomic datasets. We used the tool {\tt optimalBWT} to compute the minimum number of runs (i.e., that of \optimal) and used this as a baseline for comparison with the $r$ parameter of the other \BWT-variants. For \mdollar\ and \concat, we used the default input order in which the dataset was downloaded. 
The eight datasets have different characteristics: Four of the datasets contain short reads: SARS-CoV-2 short~\cite{STARR20201295}, Simons Diversity reads~\cite{Mallick2016}, 16S rRNA short~\cite{Winand2019}, Influenza A reads~\cite{Influenza2015}, and four contain long sequences: SARS-CoV-2 long~\cite{Greaney22}, 16S rRNA long~\cite{16S2018}, Candida auris reads~\cite{candida2019}, one of which, SARS-CoV-2 genomes, whole viral genomes~\cite{BoucherCLMM21}. The main features of the datasets, including the number of sequences, sequence length, and the mean runlength of the optimal BWT are reported in Table~\ref{tab:datasets}. 
We include the details of the experimental setup in the Appendix.

On each of the datasets, we computed the pairwise Hamming distance between separator-based BWTs. To compare them to the \eBWT, we computed the pairwise edit distance on a small subset of the sequences (for obvious computational reasons), computing also the Hamming distance on the small set, for comparison. We generated the following statistics on each of the data sets: the number of interesting intervals, the fraction of positions within interesting intervals (total length of interesting intervals divided by total length of the dataset), and the dataset's variability (Def.~\ref{def:var}). 
In Table~\ref{tab:SARS_short_compact} and~\ref{tab:SARS_genomes_compact}, we include a compact version of these results for the two datasets with the highest and the lowest variation between the BWT variants, the SARS-CoV-2 short sequences and the SARS-CoV-2 genomes, respectively. The full experimental results for all eight datasets are contained in the Appendix. 

In Table~\ref{tab:results_summary} we give a brief summary of the results, reporting, for each dataset, the fraction of positions in interesting intervals, the dataset's variability, the average pairwise Hamming distance between separator-based BWT variants, and the maximum and minimum value, among the different BWT variants, of the average runlength ($n/r$) of the BWT. 

\newcommand{\OPT}{\textit{opt}}
The experiments showed a high variation in the number of runs in particular on datasets of short sequences. The highest difference was between \colex\ and \concat, by a multiplicative factor of over $4.2$, on the SARS-CoV-2 short dataset. In Figure~\ref{fig:bars} we plot the average runlength $n/r$ for the four short sequence datasets, and the percentage increase of the number of runs w.r.t.\ $r_{OPT}$. 
The variation is less pronounced on the one dataset which is less repetitive, namely Simons Diversity reads.  Recall that the \mdollar\ and \concat\ vary depending on the input permutation. On most long sequence datasets, on the other hand, the differences were quite small (see Appendix). Recall also that the \mdollar\ and \concat\ vary depending on the input permutation. To better understand how far the \colex\ is from the optimum w.r.t.\ the  number of runs, we plot in Figure~\ref{fig:bluebars} the number of runs of \colex\ w.r.t.\ to $r_{\OPT}$, on all eight datasets. The strongest increase is on short sequences, where the variation among all BWT variants is high, as well; on the long sequence datasets, with the exception of SARS-CoV-2 long sequences, the \colex\ is very close to the optimum; however, note that on those datasets, all BWTs are close to the optimum. 

The average number of runs and the average pairwise Hamming distance strongly depend on the length of the sequences in the input collection. If the collection has a lot of short sequences which are very similar, then the differences between the BWTs both w.r.t.\ the number of runs, and as measured by the Hamming distance, can be large. This is because there are a lot of maximal shared suffixes and so, many positions are in interesting intervals. To better understand this relationship, we plotted, in Figure~\ref{fig:scatter}, the average Hamming distance against the two parameters variability and fraction of positions in interesting intervals. We see that the two datasets with highest average Hamming distance, SARS-CoV-2 short dataset and the Simons Diversity reads, have at least one of the two values very close to $1$, while for those datasets where both values are very low, the BWT variants do not differ very much.

Note that the input order used by the \mdollar\ and the \concat\ is the order in which the input sequences appear when the dataset is downloaded. Our study shows that only a few input permutations can minimize the number of runs of the resulting BWT, namely those orders that group the characters inside the interesting intervals in at most $\sigma$ runs, such as the order of Bentley et al.\ and the colexicographic order. However, since there are $k!$ possible input permutations, selecting an arbitrary input order will likely result in a BWT whose number of runs is much larger than the optimal one, especially on datasets with high variability.

%%%%%%%%%%%%%%%%%%%%%%%%%%%%%%%%%%%%%%%%%%%%%%%%%%%%%%%%%%%%%%%%%%%%%%%%%%%%%%%%%%%%%%%%%%%%%%%%%%%%%%%%%%%%%%%%%%%%%%
%%% TABLES and FIGURES 
%%%%%%%%%%%%%%%%%%%%%%%%%%%%%%%%%%%%%%%%%%%%%%%%%%%%%%%%%%%%%%%%%%%%%%%%%%%%%%%%%%%%%%%%%%%%%%%%%%%%%%%%%%%%%%%%%%%%%%

\begin{table*}[h]%[ht]
\centering
\begin{adjustbox}{max width=140mm}
\begin{tabular}{ |l|r|r|r|r|r|r| } 
 \hline
 \text{dataset} & \text{no. seq} & \textrm{total length} & \textrm{avg} & \textrm{min} & \textrm{max} & \textrm{$n/r$ (opt)} \\ 
 \hline\hline
 SARS-CoV-2 short & 500,000 & 25,000,000 & 50 & 50 & 50 & 35.125   \\ 
 \hline
 Simons Diversity reads & 500,000 & 50,000,000 & 100 & 100 & 100 & 8.133  \\ 
 \hline
 16S rRNA short & 500,000 & 75,929,833 & 152 & 69 & 301 & 44.873   \\ 
 \hline
 Influenza A reads & 500,000 & 115,692,842 & 231 & 60 & 251 & 50.275   \\ 
 \hline
 SARS-CoV-2 long & 50,000 & 53,726,351 & 1,075 & 265 &  3,355 & 74.498  \\ 
 \hline
 16S rRNA long & 16,741 & 25,142,323 & 1,502 & 1,430 & 1,549 & 47.140   \\ 
 \hline
 Candida auris reads & 50,000 & 124,150,880 & 2,483 & 214 &  8,791 & 1.732   \\ 
 \hline
 SARS-CoV-2 genomes & 2,000 & 59,610,692 & 29,805 & 22,871 & 29,920 & 523.240   \\ 
 \hline

\end{tabular}
\end{adjustbox}
\vspace{2mm}
\caption{\label{tab:datasets} Summary of the most important parameters of the eight datasets. From left to right we report the dataset name, the number of sequences, the total length, the average, minimum and maximum sequence length, and the average runlength $n/r$ of the optimum BWT according to Bentley et al.~\cite{BentleyGT20}.}
\end{table*}

\begin{table*}[h]%[h]
\centering
\begin{adjustbox}{max width=140mm}
\begin{tabular}{ |l|r|r|r|r|r| } 
 \hline
 dataset & \textrm{ratio pos.s} & \textrm{varia-} & avg.\ Hamming d.\ & max $n/r$ & min $n/r$ \\ 
 & in intr.int.s & bility & \multicolumn{1}{l|}{ betw.\ \$-sep.\ BWTs} & (avg.\ runlength) & (avg.\ runlength)\\
 \hline\hline
 SARS-CoV-2 short & 0.792 & 0.210 & $0.11754$ & 31.524 & 7.494   \\ 
 \hline
Simons Diversity reads & 0.107 & 0.976 & $0.07195$ & 7.873 & 5.299   \\ 
 \hline
 16S rRNA short & 0.741 & 0.058 & $0.02982$ & 44.253 & 18.836  \\ 
 \hline
 Influenza A reads & 0.103 & 0.363 & $0.02609$ & 49.172 & 23.100 \\ 
 \hline
 SARS-CoV-2 long & 0.175 & 0.037 & $0.00464$  & 73.204 & 57.568 \\ 
 \hline
 16S rRNA long & 0.047 & 0.104 & $0.00289$ & 46.879 & 45.015 \\ 
 \hline
 Candida auris reads & 0.007 & 0.497 & $0.00246$ & 1.732 & 1.726 \\ 
 \hline
 SARS-CoV-2 genomes & 0.001 & 0.148 & $0.00012$ & 521.610 & 499.549 \\ 
 \hline
\end{tabular}
\end{adjustbox}
\vspace{2mm}
\caption{\label{tab:results_summary} Summary of the results on the eight datasets. From left to right we report dataset names followed by the ratio of positions in interesting intervals, the variability of the dataset (see Def.~\ref{def:var}), the average normalized Hamming distance between any two separator-based BWT variants. In the last two columns we report the maximum and minimum average runlength ($n/r$) taken over all five BWT variants.}
\end{table*}

\begin{table*}[h]
\begin{center}
\raggedright
\textbf{SARS-CoV-2 short (500,000 short sequences)}\\
\captionsetup{width=\linewidth}
\begin{adjustbox}{max width=140mm}
\setlength{\tabcolsep}{5pt}
\renewcommand{\arraystretch}{1.6}
\begin{tabular}{|r||rrrr|}
  \hline
  \multirow{2}{*}{{\diagbox[width=4.7cm, height=1.18cm]{\rlap{\enspace\raisebox{0ex}{ \it \hspace{-3.0mm} norm.\ Hamming d.\  }}}{\raisebox{-0ex}{\it \hspace{-3.0mm} Hamming d.\ }}}} & \multicolumn{4}{l|}{\it Hamming distance on the big dataset} \\
  \cline{2-5}
  & \multicolumn{1}{R{1.6cm}|}{\dollarebwt} & \multicolumn{1}{R{1.6cm}|}{\mdollar} & \multicolumn{1}{R{1.6cm}|}{\concat} & \multicolumn{1}{R{1.6cm}|}{\rlo} \\
  \hline \hline
 \multicolumn{1}{|l||}{\dollarebwt}& \multicolumn{1}{l|}{\hspace{5.0mm} 0} & \multicolumn{1}{r|}{3,014,183} & \multicolumn{1}{r|}{2,926,602} & \multicolumn{1}{r|}{2,912,860} \\
 \hline
 \multicolumn{1}{|l||}{\mdollar} & \multicolumn{1}{r|}{0.11820} & \multicolumn{1}{l|}{\hspace{5.0mm} 0}  & \multicolumn{1}{r|}{3,013,908} & \multicolumn{1}{r|}{3,102,887} \\ 
 \hline
 \multicolumn{1}{|l||}{\concat} & \multicolumn{1}{r|}{0.11477} & \multicolumn{1}{r|}{0.11819} & \multicolumn{1}{l|}{\hspace{5.0mm} 0} & \multicolumn{1}{r|}{3,013,634} \\ 
 \hline
 \multicolumn{1}{|l||}{\rlo} & \multicolumn{1}{r|}{0.11423} & \multicolumn{1}{r|}{0.12168} & \multicolumn{1}{r|}{0.11818} & \multicolumn{1}{l|}{\hspace{5.0mm} 0} \\ 
 \hline
\end{tabular}
\quad
\begin{tabular}{|l||r|}
\hline
\multicolumn{2}{|l|}{\it dataset properties} \\
\hline \hline
no.\ sequences & 500,000 \\
\hline
average length & 50 \\
\hline
total length & 25,000,000 \\
\hline
no.\ of interesting intervals & 116,598 \\
\hline
total length intr.int.s & 20,187,840 \\
\hline
fraction pos.s in intr.int.s  & 0.792 \\
\hline
variability & 0.210 \\
\hline
\end{tabular}
\end{adjustbox}
\newline
\vspace*{0.2 cm}
\newline
\begin{adjustbox}{max width=140mm}
\setlength{\tabcolsep}{5pt}
\renewcommand{\arraystretch}{1.6}
\begin{tabular}{|r||r|r|r|r|r|}
  \hline
  \multirow{2}{*}{{\diagbox[width=4.0cm, height=1.18cm]{\rlap{\enspace\raisebox{0ex}{ \it \hspace{-3.0mm} norm.\ edit d.\  }}}{\raisebox{+0ex}{\it edit d.\ }} }} & \multicolumn{5}{l|}{\it edit distance on a subset of 5,000 sequences} \\
  \cline{2-6}
  & \multicolumn{1}{R{1.6cm}|}{\eBWT} &\multicolumn{1}{R{1.6cm}|}{\dollarebwt} & \multicolumn{1}{R{1.6cm}|}{\mdollar} & \multicolumn{1}{R{1.6cm}|}{\concat} & \multicolumn{1}{R{1.6cm}|}{\rlo} \\
  \hline \hline
 \multicolumn{1}{|l||}{\eBWT} & \multicolumn{1}{l|}{\hspace{5.0mm} 0} & \multicolumn{1}{r|}{28,702} & \multicolumn{1}{r|}{43,903} & \multicolumn{1}{r|}{43,828} & \multicolumn{1}{r|}{46,936} \\ 
 \hline
 \multicolumn{1}{|l||}{\dollarebwt}& \multicolumn{1}{r|}{0.11256} & \multicolumn{1}{l|}{\hspace{5.0mm} 0} & \multicolumn{1}{r|}{17,000} & \multicolumn{1}{r|}{16,921} & \multicolumn{1}{r|}{20,104}  \\ 
 \hline
 \multicolumn{1}{|l||}{\mdollar} & \multicolumn{1}{r|}{0.17217} & \multicolumn{1}{r|}{0.06667} & \multicolumn{1}{l|}{\hspace{5.0mm} 0}  & \multicolumn{1}{r|}{16,130} & \multicolumn{1}{r|}{20,812} \\ \cline{1-6}
 \multicolumn{1}{|l||}{\concat} & \multicolumn{1}{r|}{0.17187} & \multicolumn{1}{r|}{0.06636} & \multicolumn{1}{r|}{0.06325} & \multicolumn{1}{l|}{\hspace{5.0mm} 0} & \multicolumn{1}{r|}{20,830}  \\ 
 \hline
 \multicolumn{1}{|l||}{\rlo} & \multicolumn{1}{r|}{0.18406} & \multicolumn{1}{r|}{0.07884} & \multicolumn{1}{r|}{0.08162} & \multicolumn{1}{r|}{0.08169} & \multicolumn{1}{l|}{\hspace{5.0mm} 0} \\ 
 \hline
\end{tabular}
\quad
\begin{tabular}{|l||r|r|}
\hline
\multicolumn{3}{|l|}{\it no.\ runs big dataset} \\
\hline \hline
& $r$ & $n/r$ \\
\hline
\eBWT & 1,902,148 & $13.143$ \\
\hline
\dollarebwt & 1,868,581 & $13.647$ \\
\hline
\mdollar & 3,113,818 & $8.189$ \\
\hline
\concat & 3,402,513 & $7.494$ \\
\hline
\rlo & 808,906 & $31.524$ \\
\hline
\optimal & 725,979 & $35.125$ \\
\hline
\end{tabular}
\end{adjustbox}
\vspace{2mm}
\caption{\label{tab:SARS_short_compact} Results for the SARS-CoV-2 short dataset. Top left: absolute and normalized pairwise Hamming distance between separator-based BWT variants. Top right: summary of the dataset properties. Bottom left: absolute and normalized pairwise edit distance between all BWT variants on a subset of the input collection. Bottom right: number of runs and average runlength ($n/r$) taken over all BWT variants.
}
\end{center}
\end{table*}

\begin{table*}[t!]
\begin{center}
\raggedright
\textbf{SARS-CoV-2 genomes (2,000 long sequences)}

\captionsetup{width=\linewidth}
\begin{adjustbox}{max width=140mm}
\setlength{\tabcolsep}{5pt}
\renewcommand{\arraystretch}{1.6}
\begin{tabular}{|r||rrrr|}
  \hline
  \multirow{2}{*}{{\diagbox[width=4.7cm, height=1.18cm]{\rlap{\enspace\raisebox{0ex}{ \it \hspace{-3.0mm} norm.\ Hamming d.\  }}}{\raisebox{-0ex}{\it \hspace{-3.0mm} Hamming d.\ }}}} & \multicolumn{4}{l|}{\it Hamming distance on the big dataset} \\
  \cline{2-5}
  & \multicolumn{1}{R{1.6cm}|}{\dollarebwt} & \multicolumn{1}{R{1.6cm}|}{\mdollar} & \multicolumn{1}{R{1.6cm}|}{\concat} & \multicolumn{1}{R{1.6cm}|}{\rlo} \\
  \hline \hline
 \multicolumn{1}{|l||}{\dollarebwt}& \multicolumn{1}{l|}{\hspace{5.0mm} 0} & \multicolumn{1}{r|}{7,958} & \multicolumn{1}{r|}{7,900} & \multicolumn{1}{r|}{7,263} \\
 \hline
 \multicolumn{1}{|l||}{\mdollar} & \multicolumn{1}{r|}{0.00013} & \multicolumn{1}{l|}{\hspace{5.0mm} 0}  & \multicolumn{1}{r|}{7,958} & \multicolumn{1}{r|}{7,957} \\ 
 \hline
 \multicolumn{1}{|l||}{\concat} & \multicolumn{1}{r|}{0.00013} & \multicolumn{1}{r|}{0.00013} & \multicolumn{1}{l|}{\hspace{5.0mm} 0} & \multicolumn{1}{r|}{7,990} \\ 
 \hline
 \multicolumn{1}{|l||}{\rlo} & \multicolumn{1}{r|}{0.00012} & \multicolumn{1}{r|}{0.00013} & \multicolumn{1}{r|}{0.00013} & \multicolumn{1}{l|}{\hspace{5.0mm} 0} \\ 
 \hline
\end{tabular}
\quad
\begin{tabular}{|l||r|}
\hline
\multicolumn{2}{|l|}{\it dataset properties} \\
\hline \hline
no.\ sequences & 2,000 \\
\hline
total length & 59,612,692 \\
\hline
average length & 29,085 \\
\hline
no.\ interesting intervals & 1863 \\
\hline
total length intr.int.s & 80,486 \\
\hline
fraction pos.s in intr.int.s  & 0.001 \\
\hline
variability & 0.148 \\
\hline
\end{tabular}
\end{adjustbox}
\newline
\vspace*{0.2 cm}
\newline
\begin{adjustbox}{max width=140mm}
\setlength{\tabcolsep}{5pt}
\renewcommand{\arraystretch}{1.6}
\begin{tabular}{|r||r|r|r|r|r|}
  \hline
  \multirow{2}{*}{{\diagbox[width=4.0cm, height=1.18cm]{\rlap{\enspace\raisebox{0ex}{ \it \hspace{-3.0mm} norm.\ edit d.\  }}}{\raisebox{+0ex}{\it edit d.\ }} }} & \multicolumn{5}{l|}{\it edit distance on a subset of 50 sequences} \\
  \cline{2-6}
  & \multicolumn{1}{R{1.6cm}|}{\eBWT} &\multicolumn{1}{R{1.6cm}|}{\dollarebwt} & \multicolumn{1}{R{1.6cm}|}{\mdollar} & \multicolumn{1}{R{1.6cm}|}{\concat} & \multicolumn{1}{R{1.6cm}|}{\rlo} \\
  \hline \hline
 \multicolumn{1}{|l||}{\eBWT} & \multicolumn{1}{l|}{\hspace{5.0mm} 0} & \multicolumn{1}{r|}{786} & \multicolumn{1}{r|}{795} & \multicolumn{1}{r|}{801} & \multicolumn{1}{r|}{791} \\ 
 \hline
 \multicolumn{1}{|l||}{\dollarebwt}& \multicolumn{1}{r|}{0.00053} & \multicolumn{1}{l|}{\hspace{5.0mm} 0} & \multicolumn{1}{r|}{98} & \multicolumn{1}{r|}{107} & \multicolumn{1}{r|}{86}  \\ 
 \hline
 \multicolumn{1}{|l||}{\mdollar} & \multicolumn{1}{r|}{0.00053} & \multicolumn{1}{r|}{0.00007} & \multicolumn{1}{l|}{\hspace{5.0mm} 0}  & \multicolumn{1}{r|}{105} & \multicolumn{1}{r|}{112} \\ \cline{1-6}
 \multicolumn{1}{|l||}{\concat} & \multicolumn{1}{r|}{0.00054} & \multicolumn{1}{r|}{0.00007} & \multicolumn{1}{r|}{0.00007} & \multicolumn{1}{l|}{\hspace{5.0mm} 0} & \multicolumn{1}{r|}{114}  \\ 
 \hline
 \multicolumn{1}{|l||}{\rlo} & \multicolumn{1}{r|}{0.00053} & \multicolumn{1}{r|}{0.00006} & \multicolumn{1}{r|}{0.00008} & \multicolumn{1}{r|}{0.00008} & \multicolumn{1}{l|}{\hspace{5.0mm} 0} \\ 
 \hline
\end{tabular}
\quad
\begin{tabular}{|l||r|r|}
\hline
\multicolumn{3}{|l|}{\it no.\ runs big dataset} \\
\hline \hline
& $r$ & $n/r$ \\
\hline
\eBWT & 117,628 & $506.773$ \\
\hline
\dollarebwt & 117,410 & $507.731$ \\
\hline
\mdollar & 118,870 & $501.495$ \\
\hline
\concat & 119,334 & $499.549$ \\
\hline
\rlo & 114,287 & $521.605$ \\
\hline
\optimal & 113,930 & $523.240$ \\
\hline
\end{tabular}
\end{adjustbox}
\vspace{2mm}
\caption{\label{tab:SARS_genomes_compact} Results for the SARS-CoV-2 genomes dataset. Top left: absolute and normalized pairwise Hamming distance between separator-based BWT variants. Top right: summary of the dataset properties. 
Bottom left: absolute and normalized pairwise edit distance between all BWT variants on a subset of the input collection. Bottom right: number of runs and average runlength ($n/r$) taken over all BWT variants.
}
\end{center}
\end{table*}

\begin{figure*}[htbp]
    \centering
    \subfloat{%\label{}
 	\centering
 		\includegraphics[width=0.49\textwidth]{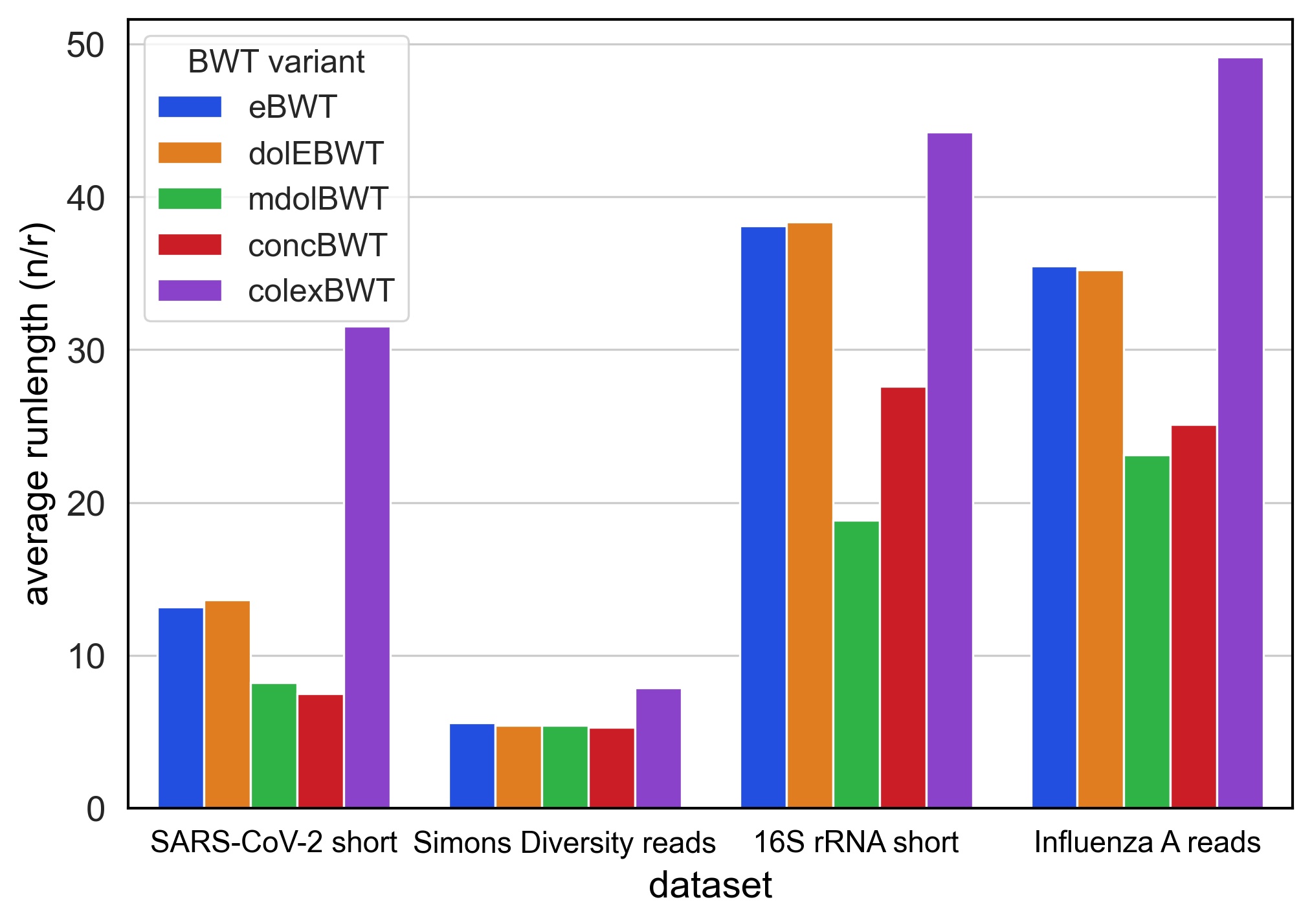}
    }~
   \subfloat{
     \centering
     		\includegraphics[width=0.49\textwidth]{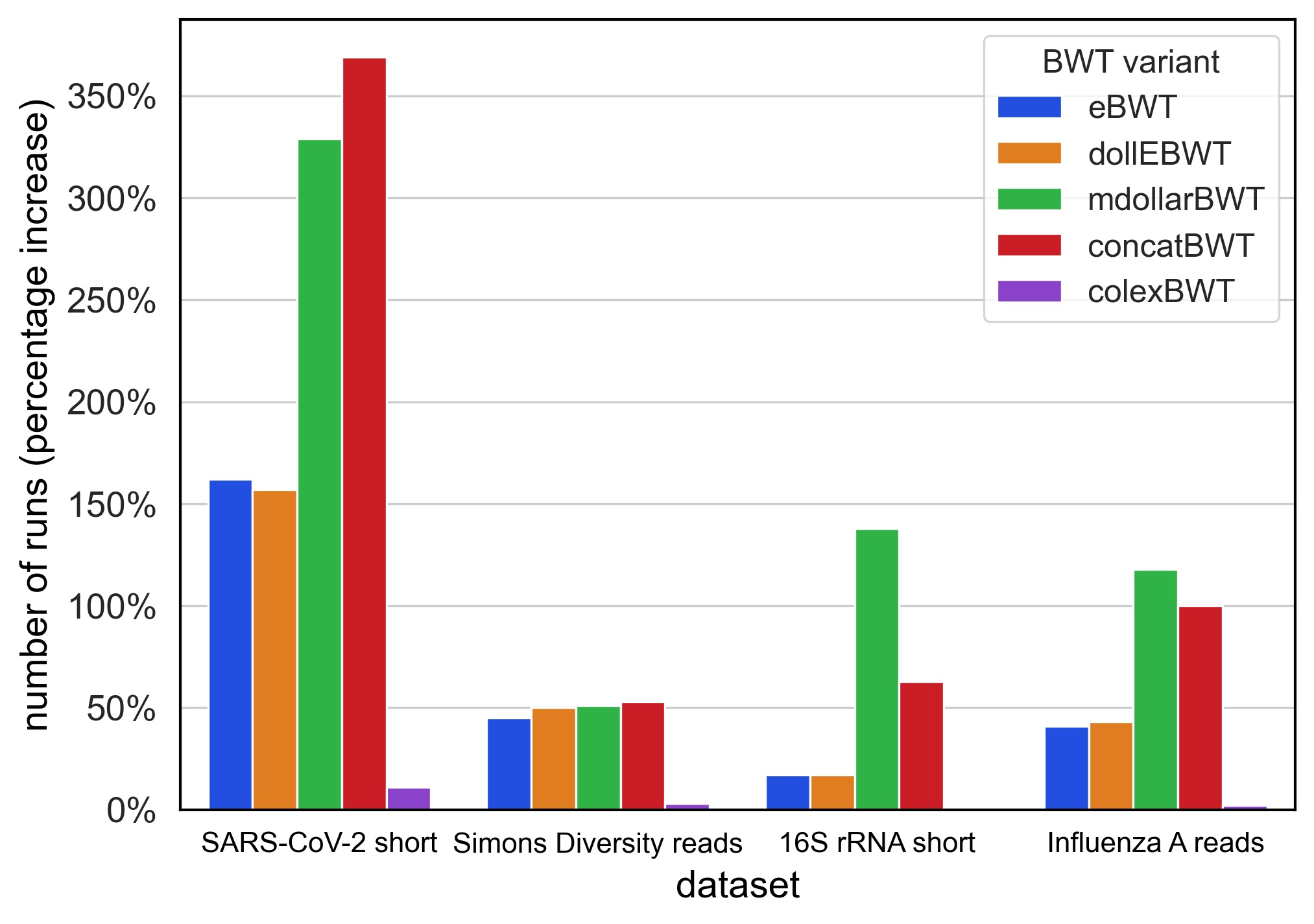}
    }
    \caption{Results regarding $r$ on short sequence datasets, of all BWT variants. Left: average runlength ($n/r$). Right: number of runs (percentage increase with respect to optimal BWT).}
    \label{fig:bars}
\end{figure*}

\begin{figure*}[htbp]
    \centering
    \includegraphics[width=0.65\textwidth]{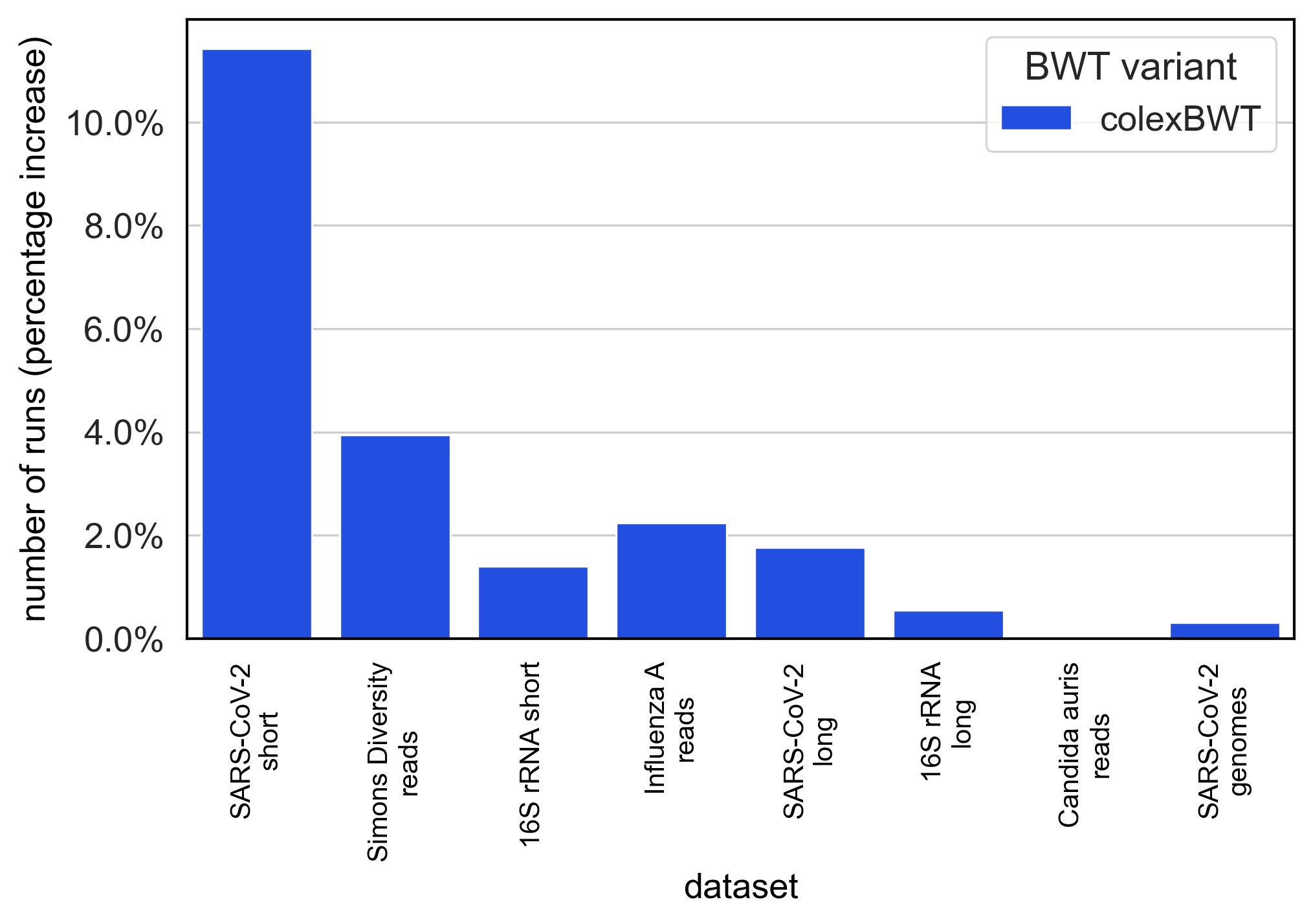}
    \caption{Number of runs of the \colex\ with respect to optimal BWT (percentage increase) on all eight datasets.}
    \label{fig:bluebars}
\end{figure*}

\begin{figure*}[htbp]
    \centering
    \includegraphics[width=0.65\textwidth]{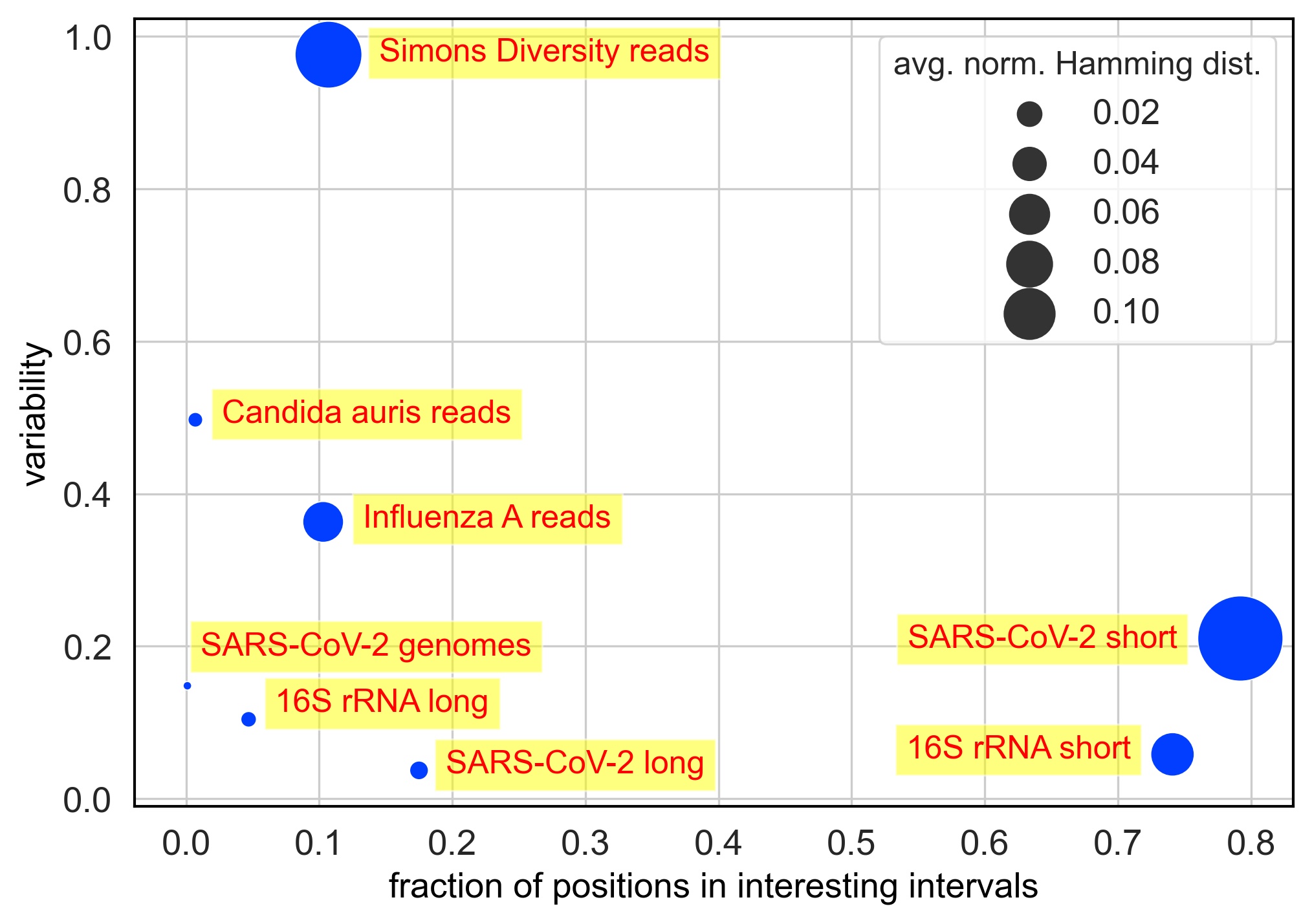}
    \caption{Average normalized Hamming distance variations with respect to variability and fraction of positions in interesting intervals on all datasets.}
    \label{fig:scatter}
\end{figure*}

%%%%%%%%%%%%%%%%%%%%%%%%%%%%%%%%%%%%%%%%%%%%%%%%%%%%%%%%%%%%%%%%%%%%%%%%%%%%%%%%%%%%%%%%%%%%%%%%%%%%%%%%%%%%%%%%%%%%%%
%%% SECTION 
%%%%%%%%%%%%%%%%%%%%%%%%%%%%%%%%%%%%%%%%%%%%%%%%%%%%%%%%%%%%%%%%%%%%%%%%%%%%%%%%%%%%%%%%%%%%%%%%%%%%%%%%%%%%%%%%%%%%%%

\section{Conclusion}\label{sec:conclusion}

We presented the first study of the different variants of the Burrows-Wheeler Transform for string collections. 
We found that the transforms computed by different tools differ not insignificantly, as measured by the pairwise Hamming distance: up to 12\% between different BWT variants on the same dataset in our experiments. We showed that most current tools implement BWT variants that are input order dependent, so that the same tool can produce different outputs if the input set is permuted. These differences extend also to the number of runs $r$, a parameter that is central in the analysis of BWT-based data structures, and which is increasingly being used as a measure of the repetitiveness of the dataset itself. 

With string collections replacing individual sequences as the prime object of research and analysis, and thus becoming the standard input for text indexing algorithms, we believe that it is all the more important for users and researchers to be aware that not all methods are equivalent, and to understand the precise nature of the BWT variant produced by a particular tool.  

We suggest further to standardize the definition of the parameter $r$ for string collections, using  either the colexicographic order (implemented by the tool \rope~\cite{Li14a}) or the optimal order of Bentley et al.~\cite{BentleyGT20} (implemented by the tool {\tt optimalBWT}~\cite{CenzatoGLR23}). In this paper, we found that 
the number of runs can vary by up to a factor of $4.2$ on real-life biological datasets, while in~\cite{CenzatoGLR23}, a factor of $31$ was shown on other biological data. %\DC{In this part, we may say that it can heavily impact the space requirement of the BWT-based data structures, otherwise the reviewers could say that this statament is true only for certain kind of datasets. What do you think? } 
Not only does this heavily impact the space requirements of BWT-based data structures, but it also means that using the average runlength $n/r$ as a repetitiveness measure of a dataset is ambiguous, unless the research community agrees on the BWT variant being used for the definition of this parameter.

\bibliography{reference}

%\end{document}
\newpage
%%%%%%%%%%%%%%%%%%%%%%%%%%%%%%%%%%%%%%%%%%%%%%%%%%%%%%%%%%%%%%%%%%%%%%%%%%%%%%%%%%%%%%%%%%%%%%%%%%%%%%%%%%%%%%%%%%%%%%
%%% SECTION 
%%%%%%%%%%%%%%%%%%%%%%%%%%%%%%%%%%%%%%%%%%%%%%%%%%%%%%%%%%%%%%%%%%%%%%%%%%%%%%%%%%%%%%%%%%%%%%%%%%%%%%%%%%%%%%%%%%%%%%

\appendix

% \noindent 
% {\bf \large SUPPLEMENTARY MATERIAL}

\section{Experimental setup}
All datasets are stored in FASTA format. 

We used three tools for computing the five \BWT\ variants; \pfpebwt, {\tt ropebwt2} and {\tt Big-BWT}. In order to make the BWTs comparable we did some adaptations to both tools and inputs. We modified {\tt ropebwt2} to make it work with the same character order as the other tools, i.e.\ $\tt \$<A<C<G<N<T$. Then we used {\tt ropebwt2} for computing both the \mdollar\ and the \rlo\ using the {\tt -R} and {\tt -R -s} flags respectively. We used \ours\ for constructing both the \eBWT\ and the \dollarebwt\ variants. In order to compute the \dollarebwt, we modified the input files, appending an end-of-string character at the end of each sequence. Finally, for computing the \concat, we removed the headers from the FASTA files, arranging the sequences in newline separated files, and ran {\tt Big-BWT} without additional flags on these newline separated files. 

\section{Further information on the tools}
We tested all 18 tools extensively, and determined which data structure they compute, using both our tests and the algorithm descriptions in the respective papers. In this section, we include further information about some of these tools.

\begin{itemize}
    \item {\tt pfpebwt} is a tool computing the \eBWT\ of string collections (\url{https://github.com/davidecenzato/PFP-eBWT.git}). It takes in input a fasta file and gives in output the \eBWT\ in either plain ASCII text or RLE (run-length-encoded) format. We used (a) no flags for long sequences, and (b) the flags {\tt -w 10 -p 10 -n 3 -{}-reads}  for short sequences. We included it in two different rows of Table~\ref{table:bwts} because by default {\tt pfpebwt} computes the \eBWT, but it can compute the \dollarebwt\ if the sequences have explicit end-of-string characters (not in multi-thread mode).
   \item {\tt cais} is a tool implementing the {\tt SAIS\_for\_eBWT} algorithm \cite{BoucherCLMM21}, which computes both the eBWT and the dolEBWT (\url{https://github.com/davidecenzato/cais.git}) depending on the input flag. It takes in input a fasta file, a fastq file, or a plain text file and gives in output the final transform in plain ASCII text. The {\tt -c} and {\tt -a} flags enable to output the conjugate array along with the resulting BWT.
    \item {\tt G2BWT} is a tool computing the \dollarebwt\ of short sequence collections (\url{https://bitbucket.org/DiegoDiazDominguez/lms_grammar/src/bwt_imp2}). It takes in input newline separated files. Even though it is not stated explicitly, this tool computes the \dollarebwt\ because, when it constructs the grammar, it uses dollars for separating adjacent strings. Thus, also the string rotations will contain dollars. We tested it using the default settings.
    
    \item {\tt msbwt} is a tool implementing the Holt and McMillan~\cite{HoltM14} merge-based BWT construction algorithm  (\url{https://github.com/holtjma/msbwt.git}). It takes in input a list of one or several fastq files. Even if this tool uses the {\tt BCR} approach \cite{BauerCR13} for computing the BWTs to merge, it actually computes the \dollarebwt . This is because it features a preprocessing where it sorts the input strings lexicographically. Thus, the resulting \mdollar\ corresponds to the \dollarebwt. 
    
   \item {\tt BEETL} is a suite containing several tools, including a tool computing the \mdollar\ of string collections using an implementation of the BCR and BCR-ext algorithms \cite{BauerCR13} (\url{https://github.com/BEETL/BEETL.git}). This tool requires that all input sequences have to have the same length. We tested this tool using {\tt --output-format ASCII} and {\tt --concatenate-output} flags. This tool also computes the a BWT variant similar to the \colex\ by using the {\tt --sap-ordering} flag (BCR-ext mode only).
    
    \item {\tt BCR\_LCP\_GSA} is a tool computing the \mdollar\ of sting collections in semi-external memory (\url{https://github.com/giovannarosone/BCR_LCP_GSA}). It implements an algorithm similar to {\tt BCR} contained in the BEETL tool, but it can process a string collection containing sequences of different lengths. It takes in input a fasta file, a fastq file, or a gz-compressed fastq file. It computes the \mdollar\ following the method of Bauer et al., described in \cite{BauerCR13}. We set the {\tt 'dataTypeLengthSequences'} variable in {\tt Parameters.h} to 1.
    
    \item {\tt ropebwt2} is a tool computing the FM-index and the \mdollar\ of string collections   (\url{https://github.com/lh3/ropebwt2.git}), using an approach similar to {\tt BCR}. It takes in input a fasta file, a fastq file, or a gz compressed fastq file. We listed it in two different rows of Table~\ref{table:bwts} because it computes the \mdollar\ or the \colex, depending on the flags. We used the {\tt -R} and the {\tt -R -s} flags, respectively, to obtain the two transforms. In addition, we modified {\tt main.c} in order to change the order of the characters to {\tt \$ < A < C < G < N < T}. 
    
    \item {\tt merge-BWT} computes the \mdollar\ of a string collection by merging the BWTs of subcollections of the input  (\url{https://github.com/jltsiren/bwt-merge.git}). It takes in input a list of one or several \mdollar s. The order of the dollars will depend on the order in which the input BWTs are listed. We tested it using {\tt -i plain\_sorted and -o plain\_sorted} flags. We computed the BWTs of the subcollections using {\tt ropebwt2}. 
    
    \item {\tt nvSetBWT} is a tool included in {\tt nvbio} suite (\url{https://github.com/NVlabs/nvbio.git}). It takes in input either a fastq or a newline separated file. We tested it using the {\tt -R} flag for skipping the reverse strand. However, even if the algorithmic descriptions in~\cite{Pantaleoni14, LiuLL14} seem to describe the \mdollar, the output of the current version (version 1.1) does not correspond to a possible BWT because the Parikh vector is different from that of the input. 
    
    \item {\tt eGSA} computes the generalized enhanced suffix array and the \mdollar\ of a string collection (\url{https://github.com/felipelouza/egsa.git}). It takes in input a text file, a fasta file, or a fastq file. It uses the gSACA-K algorithm for computing the suffix array of subcollections of the input and then merges all suffix arrays. Thus it  computes the \mdollar. We tested it with the {\tt -b} flag.  
    
    \item {\tt eGAP} computes the \mdollar, and optionally the LCP-array (longest common prefix array) and DA (document array) of a string collection (\url{https://github.com/felipelouza/egap.git}). It takes in input a newline separated file, a fasta file, or a fastq file. We tested it with default settings.
    
    \item {\tt bwt-lcp-parallel} computes the \mdollar\ and the LCP-array of a collection of short sequences  (\url{https://github.com/AlgoLab/bwt-lcp-parallel.git}). It takes in input fasta files and does not support the {\tt N} character. We tested it using standard settings.
    
    \item {\tt gsufsort} computes the SA, LCP and \mdollar\ of a string collection (\url{https://github.com/felipelouza/gsufsort.git}), using the gSACA-K algorithm of~\cite{LouzaGT17b}. It takes in input a newline separated file, a fasta file, or a fastq file. We tested it using {\tt -{}-fasta} and {\tt -{}-bwt} flags. 

    \item {\tt grlBWT} is a tool computing the \mdollar\ of string collections using an induced suffix sorting based algorithm that keeps the intermediate data structures in compressed form (\url{https://github.com/ddiazdom/grlBWT}). It takes in input a concatenated string collection and gives in output the \mdollar\ in run-length compressed form. We tested it with the default parameters and used new-line separated files as input.
    
    \item {\tt BigBWT} computes the \concat, and optionally the suffix array, of a highly repetitive text or string collection (\url{https://github.com/alshai/Big-BWT.git}) using the Prefix-free parsing (PFP) algorithm. It takes in input a newline separated file or a fasta file. This tool with the {\tt -f} flag is used internally in the $r$-index implementation (\url{https://github.com/alshai/r-index}), producing the BWT of the strings concatenated without dollars, thus, the end-of-string symbols have to be added explicitly. On the other hand, the tool without the {\tt -f} flag will compute the BWT of the fasta files without skipping the fasta headers. We used standard parameters and as input newline separated files, the output then is the \concat.

    \item {\tt r-pfbwt} is a tool which computes the run-length encoded \concat\ by using a similar algorithm than {\tt BigBWT} (\url{https://github.com/marco-oliva/r-pfbwt}). However, unlike {\tt BigBWT}, {\tt r-pfbwt} employs an improved version of the PFP algorithm, which allows the process of even larger datasets through a recursive pre-processing of the input. We tested it using the {\tt --bwt-only} flag and computed the PFP data structures using the {\tt pfp++} software (\url{https://github.com/marco-oliva/pfp.git}).

    \item {\tt CMS-BWT} is a tool computing the \concat\ by using the matching statistics to speed up the BWT computation and reduce the memory footprint for large and repetitive datasets (\url{https://github.com/fmasillo/CMS-BWT.git}). Unlike the other software it requires two input files, one containing a string collection and another containing a reference sequence. It takes in input fasta files and outputs the resulting BWT in plain format or run-length encoding. We tested it using the default parameters.

    \item {\tt optimalBWT} is a tool computing the optimal BWT of Bentley et al., it features two different construction algorithms, a variant of SAIS of Nong et al.\ which works in internal memory and a variant of BCR working in semi-external memory (\url{https://github.com/davidecenzato/optimalBWT.git}). It takes in input either a fasta or fastq file and outputs the resulting BWT in plain ascii text. We tested it using both {\tt -a sais} and {\tt -a bcr} flags.
\end{itemize}

\section{Results on individual datasets}

\begin{table*}[t!]
\raggedright
\textbf{SARS-CoV-2 short (500,000 short sequences)}

\vspace{4mm}

\captionsetup{width=\linewidth}
\begin{adjustbox}{max width=140mm}
\setlength{\tabcolsep}{5pt}
\renewcommand{\arraystretch}{1.6}
\begin{tabular}{|r||rrrr|}
  \hline
  \multirow{2}{*}{{\diagbox[width=4.7cm, height=1.18cm]{\rlap{\enspace\raisebox{0ex}{ \it \hspace{-3.0mm} norm.\ Hamming d.\  }}}{\raisebox{-0ex}{\it \hspace{-3.0mm} Hamming d.\ }}}} & \multicolumn{4}{l|}{\it Hamming distance on the big dataset} \\
  \cline{2-5}
  & \multicolumn{1}{R{1.6cm}|}{\dollarebwt} & \multicolumn{1}{R{1.6cm}|}{\mdollar} & \multicolumn{1}{R{1.6cm}|}{\concat} & \multicolumn{1}{R{1.6cm}|}{\rlo} \\
  \hline \hline
 \multicolumn{1}{|l||}{\dollarebwt}& \multicolumn{1}{l|}{\hspace{5.0mm} 0} & \multicolumn{1}{r|}{3,014,183} & \multicolumn{1}{r|}{2,926,602} & \multicolumn{1}{r|}{2,912,860} \\
 \hline
 \multicolumn{1}{|l||}{\mdollar} & \multicolumn{1}{r|}{0.11820} & \multicolumn{1}{l|}{\hspace{5.0mm} 0}  & \multicolumn{1}{r|}{3,013,908} & \multicolumn{1}{r|}{3,102,887} \\ 
 \hline
 \multicolumn{1}{|l||}{\concat} & \multicolumn{1}{r|}{0.11477} & \multicolumn{1}{r|}{0.11819} & \multicolumn{1}{l|}{\hspace{5.0mm} 0} & \multicolumn{1}{r|}{3,013,634} \\ 
 \hline
 \multicolumn{1}{|l||}{\rlo} & \multicolumn{1}{r|}{0.11423} & \multicolumn{1}{r|}{0.12168} & \multicolumn{1}{r|}{0.11818} & \multicolumn{1}{l|}{\hspace{5.0mm} 0} \\ 
 \hline
\end{tabular}
\quad
\begin{tabular}{|l||r|}
\hline
\multicolumn{2}{|l|}{\it dataset properties} \\
\hline \hline
no.\ sequences & 500,000 \\
\hline
average length & 50 \\
\hline
total length & 25,000,000 \\
\hline
no.\ of interesting intervals & 116,598 \\
\hline
total length intr.int.s & 20,187,840 \\
\hline
fraction pos.s in intr.int.s  & 0.792 \\
\hline
variability & 0.210 \\
\hline
\end{tabular}
\end{adjustbox}
\newline
\vspace*{0.4 cm}
\newline
\begin{adjustbox}{max width=140mm}
\setlength{\tabcolsep}{5pt}
\renewcommand{\arraystretch}{1.6}
\hspace{14cm}
\begin{tabular}{|l||r|r|}
\hline
\multicolumn{3}{|l|}{\it no.\ runs big dataset} \\
\hline \hline
& $r$ & $n/r$ \\
\hline
\eBWT & 1,902,148 & $13.143$ \\
\hline
\dollarebwt & 1,868,581 & $13.647$ \\
\hline
\mdollar & 3,113,818 & $8.189$ \\
\hline
\concat & 3,402,513 & $7.494$ \\
\hline
\rlo & 808,906 & $31.524$ \\
\hline
\optimal & 725,979 & $35.125$ \\
\hline
\end{tabular}
\end{adjustbox}
\newline
\vspace*{0.4 cm}
\newline
\begin{adjustbox}{max width=140mm}
\setlength{\tabcolsep}{5pt}
\renewcommand{\arraystretch}{1.6}
\begin{tabular}{|r||rrrr|}
  \hline
  \multirow{2}{*}{{\diagbox[width=4.7cm, height=1.18cm]{\rlap{\enspace\raisebox{0ex}{ \it \hspace{-3.0mm} norm.\ Hamming d.\  }}}{\raisebox{-0ex}{\it \hspace{-3.0mm} Hamming d.\ }}}} & \multicolumn{4}{l|}{\it Hamming distance on a subset of 5,000 sequences} \\
  \cline{2-5}
  & \multicolumn{1}{R{1.6cm}|}{\dollarebwt} & \multicolumn{1}{R{1.6cm}|}{\mdollar} & \multicolumn{1}{R{1.6cm}|}{\concat} & \multicolumn{1}{R{1.6cm}|}{\rlo} \\
  \hline \hline
 \multicolumn{1}{|l||}{\dollarebwt}& \multicolumn{1}{l|}{\hspace{5.0mm} 0} & \multicolumn{1}{r|}{21,362} & \multicolumn{1}{r|}{21,196} & \multicolumn{1}{r|}{20,626} \\
 \hline
 \multicolumn{1}{|l||}{\mdollar} & \multicolumn{1}{r|}{0.08377} & \multicolumn{1}{l|}{\hspace{5.0mm} 0}  & \multicolumn{1}{r|}{21,376} & \multicolumn{1}{r|}{21,256} \\ 
 \hline
 \multicolumn{1}{|l||}{\concat} & \multicolumn{1}{r|}{0.08312} & \multicolumn{1}{r|}{0.08383} & \multicolumn{1}{l|}{\hspace{5.0mm} 0} & \multicolumn{1}{r|}{21,259} \\ 
 \hline
 \multicolumn{1}{|l||}{\rlo} & \multicolumn{1}{r|}{0.08089} & \multicolumn{1}{r|}{0.08336} & \multicolumn{1}{r|}{0.08337} & \multicolumn{1}{l|}{\hspace{5.0mm} 0} \\ 
 \hline
\end{tabular}
\quad
\begin{tabular}{|l||r|}
\hline
\multicolumn{2}{|l|}{\it small dataset properties} \\
\hline \hline
no.\ of sequences & 5,000 \\
\hline
total length & 250,000 \\
\hline
average length & 100 \\
\hline
no.\ of interesting intervals & 2,476 \\
\hline
total length intr.int.s & 180,038 \\
\hline
fraction pos.s in intr.int.s  & 0.706 \\
\hline
variability & 0.173 \\
\hline
\end{tabular}
\end{adjustbox}
\newline
\vspace*{0.4 cm}
\newline
\begin{adjustbox}{max width=140mm}
\setlength{\tabcolsep}{5pt}
\renewcommand{\arraystretch}{1.6}
\begin{tabular}{|r||r|r|r|r|r|}
  \hline
  \multirow{2}{*}{{\diagbox[width=4.0cm, height=1.18cm]{\rlap{\enspace\raisebox{0ex}{ \it \hspace{-3.0mm} norm.\ edit d.\  }}}{\raisebox{+0ex}{\it edit d.\ }} }} & \multicolumn{5}{l|}{\it edit distance on a subset of 5,000 sequences} \\
  \cline{2-6}
  & \multicolumn{1}{R{1.6cm}|}{\eBWT} &\multicolumn{1}{R{1.6cm}|}{\dollarebwt} & \multicolumn{1}{R{1.6cm}|}{\mdollar} & \multicolumn{1}{R{1.6cm}|}{\concat} & \multicolumn{1}{R{1.6cm}|}{\rlo} \\
  \hline \hline
 \multicolumn{1}{|l||}{\eBWT} & \multicolumn{1}{l|}{\hspace{5.0mm} 0} & \multicolumn{1}{r|}{28,702} & \multicolumn{1}{r|}{43,903} & \multicolumn{1}{r|}{43,828} & \multicolumn{1}{r|}{46,936} \\ 
 \hline
 \multicolumn{1}{|l||}{\dollarebwt}& \multicolumn{1}{r|}{0.11256} & \multicolumn{1}{l|}{\hspace{5.0mm} 0} & \multicolumn{1}{r|}{17,000} & \multicolumn{1}{r|}{16,921} & \multicolumn{1}{r|}{20,104}  \\ 
 \hline
 \multicolumn{1}{|l||}{\mdollar} & \multicolumn{1}{r|}{0.17217} & \multicolumn{1}{r|}{0.06667} & \multicolumn{1}{l|}{\hspace{5.0mm} 0}  & \multicolumn{1}{r|}{16,130} & \multicolumn{1}{r|}{20,812} \\ \cline{1-6}
 \multicolumn{1}{|l||}{\concat} & \multicolumn{1}{r|}{0.17187} & \multicolumn{1}{r|}{0.06636} & \multicolumn{1}{r|}{0.06325} & \multicolumn{1}{l|}{\hspace{5.0mm} 0} & \multicolumn{1}{r|}{20,830}  \\ 
 \hline
 \multicolumn{1}{|l||}{\rlo} & \multicolumn{1}{r|}{0.18406} & \multicolumn{1}{r|}{0.07884} & \multicolumn{1}{r|}{0.08162} & \multicolumn{1}{r|}{0.08169} & \multicolumn{1}{l|}{\hspace{5.0mm} 0} \\ 
 \hline
\end{tabular}
\quad
%\hspace{14.0cm}
\begin{tabular}{|l||r|r|}
\hline
\multicolumn{3}{|l|}{\it no.\ runs small dataset} \\
\hline \hline
& $r$ & $n/r$ \\
\hline
\eBWT & 52,979 & $4.719$ \\
\hline
\dollarebwt & 50,803 & $5.019$ \\
\hline
\mdollar & 54,766 & $4.656$ \\
\hline
\concat & 54,698 & $4.662$ \\
\hline
\colex & 37,320 & $6.833$ \\
\hline
\optimal & 35,904 & $7.102$ \\
\hline
\end{tabular}
\end{adjustbox}
\vspace{2mm}
\caption{\label{tab:sars_short_complete} Results for the SARS-CoV-2 short dataset. First row left: absolute and normalized pairwise Hamming distance between separator-based BWT variants. First row right: summary of the dataset properties. Second row: number of runs and average runlength ($n/r$) of all BWT variants. Third row left: absolute and normalized pairwise Hamming distance between separator-based BWT variants on a subset of the input collection. Third row right: summary of the dataset properties of a subset of the input collection. Fourth row left: absolute and normalized pairwise edit distance between all BWT variants on a subset of the input collection. Fourth row right: number of runs and average runlength ($n/r$) of all BWT variants on a subset of the input collection.}
\end{table*}

\begin{table*}[t!]
\raggedright
\textbf{Simons Diversity reads (500,000 short sequences)}

\vspace{4mm}

\captionsetup{width=\linewidth}
\begin{adjustbox}{max width=140mm}
\setlength{\tabcolsep}{5pt}
\renewcommand{\arraystretch}{1.6}
\begin{tabular}{|r||rrrr|}
  \hline
  \multirow{2}{*}{{\diagbox[width=4.7cm, height=1.18cm]{\rlap{\enspace\raisebox{0ex}{ \it \hspace{-3.0mm} norm.\ Hamming d.\  }}}{\raisebox{-0ex}{\it \hspace{-3.0mm} Hamming d.\ }}}} & \multicolumn{4}{l|}{\it Hamming distance on the big dataset} \\
  \cline{2-5}
  & \multicolumn{1}{R{1.6cm}|}{\dollarebwt} & \multicolumn{1}{R{1.6cm}|}{\mdollar} & \multicolumn{1}{R{1.6cm}|}{\concat} & \multicolumn{1}{R{1.6cm}|}{\rlo} \\
  \hline \hline
 \multicolumn{1}{|l||}{\dollarebwt}& \multicolumn{1}{l|}{\hspace{5.0mm} 0} & \multicolumn{1}{r|}{3,624,283} & \multicolumn{1}{r|}{3,602,362} & \multicolumn{1}{r|}{3,594,438} \\
 \hline
 \multicolumn{1}{|l||}{\mdollar} & \multicolumn{1}{r|}{0.07249} & \multicolumn{1}{l|}{\hspace{5.0mm} 0}  & \multicolumn{1}{r|}{3,628,799} & \multicolumn{1}{r|}{3,623,154} \\ 
 \hline
 \multicolumn{1}{|l||}{\concat} & \multicolumn{1}{r|}{0.07133} & \multicolumn{1}{r|}{0.07186} & \multicolumn{1}{l|}{\hspace{5.0mm} 0} & \multicolumn{1}{r|}{3,617,679} \\ 
 \hline
 \multicolumn{1}{|l||}{\rlo} & \multicolumn{1}{r|}{0.07189} & \multicolumn{1}{r|}{0.07246} & \multicolumn{1}{r|}{0.07168} & \multicolumn{1}{l|}{\hspace{5.0mm} 0} \\ 
 \hline
\end{tabular}
\quad
\begin{tabular}{|l||r|}
\hline
\multicolumn{2}{|l|}{\it dataset properties} \\
\hline \hline
no.\ of sequences & 500,000 \\
\hline
total length & 50,000,000 \\
\hline
average length & 100 \\
\hline
no.\ of interesting intervals & 316,013 \\
\hline
total length intr.int.s & 5,387,549 \\
\hline
fraction pos.s in intr.int.s  & 0.107 \\
\hline
variability & 0.976 \\
\hline
\end{tabular}
\end{adjustbox}
\newline
\vspace*{0.4 cm}
\newline
\begin{adjustbox}{max width=140mm}
\setlength{\tabcolsep}{5pt}
\renewcommand{\arraystretch}{1.6}
\hspace{14.0cm}
\begin{tabular}{|l||r|r|}
\hline
\multicolumn{3}{|l|}{\it no.\ runs big dataset} \\
\hline \hline
& $r$ & $n/r$ \\
\hline
\eBWT & 8,974,105 & $5.572$ \\
\hline
\dollarebwt & 9,337,122 & $5.409$ \\
\hline
\mdollar & 9,362,564 & $5.394$ \\
\hline
\concat & 9,530,334 & $5.299$ \\
\hline
\rlo & 6,414,356 & $7.873$ \\
\hline
\optimal & 6,209,567 & $8.133$ \\
\hline
\end{tabular}
\end{adjustbox}
\newline
\vspace*{0.4 cm}
\newline
\begin{adjustbox}{max width=140mm}
\setlength{\tabcolsep}{5pt}
\renewcommand{\arraystretch}{1.6}
\begin{tabular}{|r||rrrr|}
  \hline
  \multirow{2}{*}{{\diagbox[width=4.7cm, height=1.18cm]{\rlap{\enspace\raisebox{0ex}{ \it \hspace{-3.0mm} norm.\ Hamming d.\  }}}{\raisebox{-0ex}{\it \hspace{-3.0mm} Hamming d.\ }}}} & \multicolumn{4}{l|}{\it Hamming distance on a subset of 5,000 sequences} \\
  \cline{2-5}
  & \multicolumn{1}{R{1.6cm}|}{\dollarebwt} & \multicolumn{1}{R{1.6cm}|}{\mdollar} & \multicolumn{1}{R{1.6cm}|}{\concat} & \multicolumn{1}{R{1.6cm}|}{\rlo} \\
  \hline \hline
 \multicolumn{1}{|l||}{\dollarebwt}& \multicolumn{1}{l|}{\hspace{5.0mm} 0} & \multicolumn{1}{r|}{23,742} & \multicolumn{1}{r|}{23,461} & \multicolumn{1}{r|}{23,535} \\
 \hline
 \multicolumn{1}{|l||}{\mdollar} & \multicolumn{1}{r|}{0.04748} & \multicolumn{1}{l|}{\hspace{5.0mm} 0}  & \multicolumn{1}{r|}{23,785} & \multicolumn{1}{r|}{23,722} \\ 
 \hline
 \multicolumn{1}{|l||}{\concat} & \multicolumn{1}{r|}{0.04646} & \multicolumn{1}{r|}{0.04710} & \multicolumn{1}{l|}{\hspace{5.0mm} 0} & \multicolumn{1}{r|}{23,660} \\ 
 \hline
 \multicolumn{1}{|l||}{\rlo} & \multicolumn{1}{r|}{0.04707} & \multicolumn{1}{r|}{0.04744} & \multicolumn{1}{r|}{0.04685} & \multicolumn{1}{l|}{\hspace{5.0mm} 0} \\ 
 \hline
\end{tabular}
\quad
\begin{tabular}{|l||r|}
\hline
\multicolumn{2}{|l|}{\it small dataset properties} \\
\hline \hline
no.\ of sequences & 5,000 \\
\hline
total length & 500,000 \\
\hline
average length & 100 \\
\hline
no.\ of interesting intervals & 3,111 \\
\hline
total length intr.int.s & 35,404 \\
\hline
fraction pos.s in intr.int.s  & 0.070 \\
\hline
variability & 0.989 \\
\hline
\end{tabular}
\end{adjustbox}
\newline
\vspace*{0.4 cm}
\newline
\begin{adjustbox}{max width=140mm}
\setlength{\tabcolsep}{5pt}
\renewcommand{\arraystretch}{1.6}
\begin{tabular}{|r||r|r|r|r|r|}
  \hline
  \multirow{2}{*}{{\diagbox[width=4.0cm, height=1.18cm]{\rlap{\enspace\raisebox{0ex}{ \it \hspace{-3.0mm} norm.\ edit d.\  }}}{\raisebox{+0ex}{\it edit d.\ }} }} & \multicolumn{5}{l|}{\it edit distance on a subset of 5,000 sequences} \\
  \cline{2-6}
  & \multicolumn{1}{R{1.6cm}|}{\eBWT} &\multicolumn{1}{R{1.6cm}|}{\dollarebwt} & \multicolumn{1}{R{1.6cm}|}{\mdollar} & \multicolumn{1}{R{1.6cm}|}{\concat} & \multicolumn{1}{R{1.6cm}|}{\rlo} \\
  \hline \hline
 \multicolumn{1}{|l||}{\eBWT} & \multicolumn{1}{l|}{\hspace{5.0mm} 0} & \multicolumn{1}{r|}{72,898} & \multicolumn{1}{r|}{72,878} & \multicolumn{1}{r|}{72,918} & \multicolumn{1}{r|}{74,026} \\ 
 \hline
 \multicolumn{1}{|l||}{\dollarebwt}& \multicolumn{1}{r|}{0.14435} & \multicolumn{1}{l|}{\hspace{5.0mm} 0} & \multicolumn{1}{r|}{17,820} & \multicolumn{1}{r|}{17,560} & \multicolumn{1}{r|}{22,481}  \\ 
 \hline
 \multicolumn{1}{|l||}{\mdollar} & \multicolumn{1}{r|}{0.14431} & \multicolumn{1}{r|}{0.03529} & \multicolumn{1}{l|}{\hspace{5.0mm} 0}  & \multicolumn{1}{r|}{17,726} & \multicolumn{1}{r|}{22,586} \\ \cline{1-6}
 \multicolumn{1}{|l||}{\concat} & \multicolumn{1}{r|}{0.14439} & \multicolumn{1}{r|}{0.03477} & \multicolumn{1}{r|}{0.03510} & \multicolumn{1}{l|}{\hspace{5.0mm} 0} & \multicolumn{1}{r|}{22,595}  \\ 
 \hline
 \multicolumn{1}{|l||}{\rlo} & \multicolumn{1}{r|}{0.14659} & \multicolumn{1}{r|}{0.04452} & \multicolumn{1}{r|}{0.04472} & \multicolumn{1}{r|}{0.04474} & \multicolumn{1}{l|}{\hspace{5.0mm} 0} \\ 
 \hline
\end{tabular}
\quad
%\hspace{14.0cm}
\begin{tabular}{|l||r|r|}
\hline
\multicolumn{3}{|l|}{\it no.\ runs small dataset} \\
\hline \hline
& $r$ & $n/r$ \\
\hline
\eBWT & 77,646 & $6.439$ \\
\hline
\dollarebwt & 81,758 & $6.177$ \\
\hline
\mdollar & 81,883 & $6.167$ \\
\hline
\concat & 82,779 & $6.101$ \\
\hline
\colex & 64,229 & $7.862$ \\
\hline
\optimal & 62,117 & $8.130$ \\
\hline
\end{tabular}
\end{adjustbox}
\vspace{2mm}
\caption{\label{tab:simon_diversity_complete} Results for the Simons Diversity reads dataset. First row left: absolute and normalized pairwise Hamming distance between separator-based BWT variants. First row right: summary of the dataset properties. Second row: number of runs and average runlength ($n/r$) of all BWT variants. Third row left: absolute and normalized pairwise Hamming distance between separator-based BWT variants on a subset of the input collection. Third row right: summary of the dataset properties of a subset of the input collection. Fourth row left: absolute and normalized pairwise edit distance between all BWT variants on a subset of the input collection. Fourth row right: number of runs and average runlength ($n/r$) of all BWT variants on a subset of the input collection.}
\end{table*}

\begin{table*}[t!]
\raggedright
\textbf{16S rRNA short (500,000 short sequences)}

\vspace{4mm}

\captionsetup{width=\linewidth}
\begin{adjustbox}{max width=140mm}
\setlength{\tabcolsep}{5pt}
\renewcommand{\arraystretch}{1.6}
\begin{tabular}{|r||rrrr|}
  \hline
  \multirow{2}{*}{{\diagbox[width=4.7cm, height=1.18cm]{\rlap{\enspace\raisebox{0ex}{ \it \hspace{-3.0mm} norm.\ Hamming d.\  }}}{\raisebox{-0ex}{\it \hspace{-3.0mm} Hamming d.\ }}}} & \multicolumn{4}{l|}{\it Hamming distance on the big dataset} \\
  \cline{2-5}
  & \multicolumn{1}{R{1.6cm}|}{\dollarebwt} & \multicolumn{1}{R{1.6cm}|}{\mdollar} & \multicolumn{1}{R{1.6cm}|}{\concat} & \multicolumn{1}{R{1.6cm}|}{\rlo} \\
  \hline \hline
 \multicolumn{1}{|l||}{\dollarebwt}& \multicolumn{1}{l|}{\hspace{5.0mm} 0} & \multicolumn{1}{r|}{2,202,008} & \multicolumn{1}{r|}{2,540,310} & \multicolumn{1}{r|}{1,748,072} \\
 \hline
 \multicolumn{1}{|l||}{\mdollar} & \multicolumn{1}{r|}{0.02881} & \multicolumn{1}{l|}{\hspace{5.0mm} 0}  & \multicolumn{1}{r|}{2,201,003} & \multicolumn{1}{r|}{2,202,717} \\ 
 \hline
 \multicolumn{1}{|l||}{\concat} & \multicolumn{1}{r|}{0.03324} & \multicolumn{1}{r|}{0.02880} & \multicolumn{1}{l|}{\hspace{5.0mm} 0} & \multicolumn{1}{r|}{2,784,600} \\ 
 \hline
 \multicolumn{1}{|l||}{\rlo} & \multicolumn{1}{r|}{0.02287} & \multicolumn{1}{r|}{0.02882} & \multicolumn{1}{r|}{0.03643} & \multicolumn{1}{l|}{\hspace{5.0mm} 0} \\ 
 \hline
\end{tabular}
\quad
\begin{tabular}{|l||r|}
\hline
\multicolumn{2}{|l|}{\it dataset properties} \\
\hline \hline
no.\ of sequences & 500,000 \\
\hline
total length & 75,929,833 \\
\hline
average length & 152 \\
\hline
no.\ of interesting intervals & 54,366 \\
\hline
total length intr.int.s & 56,708,529 \\
\hline
fraction pos.s in intr.int.s  & 0.742 \\
\hline
variability & 0.058 \\
\hline
\end{tabular}
\end{adjustbox}
\newline
\vspace*{0.4 cm}
\newline
\begin{adjustbox}{max width=140mm}
\setlength{\tabcolsep}{5pt}
\renewcommand{\arraystretch}{1.6}
\hspace{14cm}
\begin{tabular}{|l||r|r|}
\hline
\multicolumn{3}{|l|}{\it no.\ runs big dataset} \\
\hline \hline
& $r$ & $n/r$ \\
\hline
\eBWT & 1,992,130 & $38.115$ \\
\hline
\dollarebwt & 1,992,211 & $38.364$ \\
\hline
\mdollar & 4,057,541 & $18.836$ \\
\hline
\concat & 2,767,797 & $27.614$ \\
\hline
\rlo & 1,727,127 & $44.253$ \\
\hline
\optimal & 1,703,234  & $44.873$ \\
\hline
\end{tabular}
\end{adjustbox}
\newline
\vspace*{0.4 cm}
\newline
\begin{adjustbox}{max width=140mm}
\setlength{\tabcolsep}{5pt}
\renewcommand{\arraystretch}{1.6}
\begin{tabular}{|r||rrrr|}
  \hline
  \multirow{2}{*}{{\diagbox[width=4.7cm, height=1.18cm]{\rlap{\enspace\raisebox{0ex}{ \it \hspace{-3.0mm} norm.\ Hamming d.\  }}}{\raisebox{-0ex}{\it \hspace{-3.0mm} Hamming d.\ }}}} & \multicolumn{4}{l|}{\it Hamming distance on a subset of 5,000 sequences} \\
  \cline{2-5}
  & \multicolumn{1}{R{1.6cm}|}{\dollarebwt} & \multicolumn{1}{R{1.6cm}|}{\mdollar} & \multicolumn{1}{R{1.6cm}|}{\concat} & \multicolumn{1}{R{1.6cm}|}{\rlo} \\
  \hline \hline
 \multicolumn{1}{|l||}{\dollarebwt}& \multicolumn{1}{l|}{\hspace{5.0mm} 0} & \multicolumn{1}{r|}{20,159} & \multicolumn{1}{r|}{23,229} & \multicolumn{1}{r|}{15,835} \\
 \hline
 \multicolumn{1}{|l||}{\mdollar} & \multicolumn{1}{r|}{0.02635} & \multicolumn{1}{l|}{\hspace{5.0mm} 0}  & \multicolumn{1}{r|}{20,024} & \multicolumn{1}{r|}{20,092} \\ 
 \hline
 \multicolumn{1}{|l||}{\concat} & \multicolumn{1}{r|}{0.03036} & \multicolumn{1}{r|}{0.02617} & \multicolumn{1}{l|}{\hspace{5.0mm} 0} & \multicolumn{1}{r|}{25,464} \\ 
 \hline
 \multicolumn{1}{|l||}{\rlo} & \multicolumn{1}{r|}{0.02070} & \multicolumn{1}{r|}{0.02626} & \multicolumn{1}{r|}{0.03329} & \multicolumn{1}{l|}{\hspace{5.0mm} 0} \\ 
 \hline
\end{tabular}
\quad
\begin{tabular}{|l||r|}
\hline
\multicolumn{2}{|l|}{\it small dataset properties} \\
\hline \hline
no.\ of sequences & 5,000 \\
\hline
total length & 765,037 \\
\hline
average length & 152 \\
\hline
no.\ of interesting intervals & 1,376 \\
\hline
total length intr.int.s & 139,041 \\
\hline
fraction pos.s in intr.int.s  & 0.182 \\
\hline
variability & 0.222 \\
\hline
\end{tabular}
\end{adjustbox}
\newline
\vspace*{0.4 cm}
\newline
\begin{adjustbox}{max width=140mm}
\setlength{\tabcolsep}{5pt}
\renewcommand{\arraystretch}{1.6}
\begin{tabular}{|r||r|r|r|r|r|}
  \hline
  \multirow{2}{*}{{\diagbox[width=4.0cm, height=1.18cm]{\rlap{\enspace\raisebox{0ex}{ \it \hspace{-3.0mm} norm.\ edit d.\  }}}{\raisebox{+0ex}{\it edit d.\ }} }} & \multicolumn{5}{l|}{\it edit distance on a subset of 5,000 sequences} \\
  \cline{2-6}
  & \multicolumn{1}{R{1.6cm}|}{\eBWT} &\multicolumn{1}{R{1.6cm}|}{\dollarebwt} & \multicolumn{1}{R{1.6cm}|}{\mdollar} & \multicolumn{1}{R{1.6cm}|}{\concat} & \multicolumn{1}{R{1.6cm}|}{\rlo} \\
  \hline \hline
 \multicolumn{1}{|l||}{\eBWT} & \multicolumn{1}{l|}{\hspace{5.0mm} 0} & \multicolumn{1}{r|}{51,683} & \multicolumn{1}{r|}{62,799} & \multicolumn{1}{r|}{63,303} & \multicolumn{1}{r|}{61,732} \\ 
 \hline
 \multicolumn{1}{|l||}{\dollarebwt}& \multicolumn{1}{r|}{0.06756} & \multicolumn{1}{l|}{\hspace{5.0mm} 0} & \multicolumn{1}{r|}{16,968} & \multicolumn{1}{r|}{20,180} & \multicolumn{1}{r|}{14,166}  \\ 
 \hline
 \multicolumn{1}{|l||}{\mdollar} & \multicolumn{1}{r|}{0.08209} & \multicolumn{1}{r|}{0.02218} & \multicolumn{1}{l|}{\hspace{5.0mm} 0}  & \multicolumn{1}{r|}{16,695} & \multicolumn{1}{r|}{19,371} \\ \cline{1-6}
 \multicolumn{1}{|l||}{\concat} & \multicolumn{1}{r|}{0.08274} & \multicolumn{1}{r|}{0.02638} & \multicolumn{1}{r|}{0.02182} & \multicolumn{1}{l|}{\hspace{5.0mm} 0} & \multicolumn{1}{r|}{21,683}  \\ 
 \hline
 \multicolumn{1}{|l||}{\colex} & \multicolumn{1}{r|}{0.08069} & \multicolumn{1}{r|}{0.01852} & \multicolumn{1}{r|}{0.02532} & \multicolumn{1}{r|}{0.02834} & \multicolumn{1}{l|}{\hspace{5.0mm} 0} \\ 
 \hline
\end{tabular}
\quad
%\hspace{14.0cm}
\begin{tabular}{|l||r|r|}
\hline
\multicolumn{3}{|l|}{\it no.\ runs small dataset} \\
\hline \hline
& $r$ & $n/r$ \\
\hline
\eBWT & 35,262 & $21.554$ \\
\hline
\dollarebwt & 35,293 & $21.677$ \\
\hline
\mdollar & 50,581 & $15.125$ \\
\hline
\concat & 38,900 & $19.667$ \\ 
\hline
\rlo & 30,568 & $25.027$ \\
\hline
\optimal & 30,007 & $25.495$ \\
\hline
\end{tabular}
\end{adjustbox}
\vspace{2mm}
\caption{\label{tab:16S_short_complete} Results for the 16S rRNA short dataset. First row left: absolute and normalized pairwise Hamming distance between separator-based BWT variants. First row right: summary of the dataset properties. Second row: number of runs and average runlength ($n/r$) of all BWT variants. Third row left: absolute and normalized pairwise Hamming distance between separator-based BWT variants on a subset of the input collection. Third row right: summary of the dataset properties of a subset of the input collection. Fourth row left: absolute and normalized pairwise edit distance between all BWT variants on a subset of the input collection. Fourth row right: number of runs and average runlength ($n/r$) of all BWT variants on a subset of the input collection.}
\end{table*}

\begin{table*}[t!]
\raggedright
\textbf{Influenza A reads (500,000 short sequences)}

\vspace{4mm}

\captionsetup{width=\linewidth}
\begin{adjustbox}{max width=140mm}
\setlength{\tabcolsep}{5pt}
\renewcommand{\arraystretch}{1.6}
\begin{tabular}{|r||rrrr|}
  \hline
  \multirow{2}{*}{{\diagbox[width=4.7cm, height=1.18cm]{\rlap{\enspace\raisebox{0ex}{ \it \hspace{-3.0mm} norm.\ Hamming d.\  }}}{\raisebox{-0ex}{\it \hspace{-3.0mm} Hamming d.\ }}}} & \multicolumn{4}{l|}{\it Hamming distance on the big dataset} \\
  \cline{2-5}
  & \multicolumn{1}{R{1.6cm}|}{\dollarebwt} & \multicolumn{1}{R{1.6cm}|}{\mdollar} & \multicolumn{1}{R{1.6cm}|}{\concat} & \multicolumn{1}{R{1.6cm}|}{\rlo} \\
  \hline \hline
 \multicolumn{1}{|l||}{\dollarebwt}& \multicolumn{1}{l|}{\hspace{5.0mm} 0} & \multicolumn{1}{r|}{3,040,590} & \multicolumn{1}{r|}{3,038,509} & \multicolumn{1}{r|}{2,938,706} \\
 \hline
 \multicolumn{1}{|l||}{\mdollar} & \multicolumn{1}{r|}{0.02617} & \multicolumn{1}{l|}{\hspace{5.0mm} 0}  & \multicolumn{1}{r|}{3,039,095} & \multicolumn{1}{r|}{3,041,816} \\ 
 \hline
 \multicolumn{1}{|l||}{\concat} & \multicolumn{1}{r|}{0.02615} & \multicolumn{1}{r|}{0.02616} & \multicolumn{1}{l|}{\hspace{5.0mm} 0} & \multicolumn{1}{r|}{3,089,670} \\ 
 \hline
 \multicolumn{1}{|l||}{\rlo} & \multicolumn{1}{r|}{0.02529} & \multicolumn{1}{r|}{0.02618} & \multicolumn{1}{r|}{0.02659} & \multicolumn{1}{l|}{\hspace{5.0mm} 0} \\ 
 \hline
\end{tabular}
\quad
\begin{tabular}{|l||r|}
\hline
\multicolumn{2}{|l|}{\it dataset properties} \\
\hline \hline
no.\ of sequences & 500,000 \\
\hline
total length & 116,192,842 \\
\hline
average length & 231 \\
\hline
no.\ of interesting intervals  & 213,735 \\
\hline
total length intr.int.s & 11,995,246 \\
\hline
fraction pos.s in intr.int.s  & 0.103 \\
\hline
variability & 0.363 \\
\hline
\end{tabular}
\end{adjustbox}
\newline
\vspace*{0.4 cm}
\newline
\begin{adjustbox}{max width=140mm}
\setlength{\tabcolsep}{5pt}
\renewcommand{\arraystretch}{1.6}
\hspace{14cm}
\begin{tabular}{|l||r|r|}
\hline
\multicolumn{3}{|l|}{\it no.\ runs big dataset} \\
\hline \hline
& $r$ & $n/r$ \\
\hline
\eBWT & 3,258,605 & $35.504$ \\
\hline
\dollarebwt & 3,298,502 & $35.226$ \\
\hline
\mdollar & 5,030,032 & $23.100$ \\
\hline
\concat & 4,629,150 & $25.100$ \\
\hline
\colex & 2,362,987 & $49.172$  \\
\hline
\optimal & 2,311,133 & $50.275$ \\
\hline
\end{tabular}
\end{adjustbox}
\newline
\vspace*{0.4 cm}
\newline
\begin{adjustbox}{max width=140mm}
\setlength{\tabcolsep}{5pt}
\renewcommand{\arraystretch}{1.6}
\begin{tabular}{|r||rrrr|}
  \hline
  \multirow{2}{*}{{\diagbox[width=4.7cm, height=1.18cm]{\rlap{\enspace\raisebox{0ex}{ \it \hspace{-3.0mm} norm.\ Hamming d.\  }}}{\raisebox{-0ex}{\it \hspace{-3.0mm} Hamming d.\ }}}} & \multicolumn{4}{l|}{\it Hamming distance on a subset of 5,000 sequences} \\
  \cline{2-5}
  & \multicolumn{1}{R{1.6cm}|}{\dollarebwt} & \multicolumn{1}{R{1.6cm}|}{\mdollar} & \multicolumn{1}{R{1.6cm}|}{\concat} & \multicolumn{1}{R{1.6cm}|}{\rlo} \\
  \hline \hline
 \multicolumn{1}{|l||}{\dollarebwt}& \multicolumn{1}{l|}{\hspace{5.0mm} 0} & \multicolumn{1}{r|}{23,456} & \multicolumn{1}{r|}{23,456} & \multicolumn{1}{r|}{22,873} \\
 \hline
 \multicolumn{1}{|l||}{\mdollar} & \multicolumn{1}{r|}{0.02018} & \multicolumn{1}{l|}{\hspace{5.0mm} 0}  & \multicolumn{1}{r|}{23,509} & \multicolumn{1}{r|}{23,407} \\ 
 \hline
 \multicolumn{1}{|l||}{\concat} & \multicolumn{1}{r|}{0.02018} & \multicolumn{1}{r|}{0.02023} & \multicolumn{1}{l|}{\hspace{5.0mm} 0} & \multicolumn{1}{r|}{24,061} \\ 
 \hline
 \multicolumn{1}{|l||}{\rlo} & \multicolumn{1}{r|}{0.01968} & \multicolumn{1}{r|}{0.02014} & \multicolumn{1}{r|}{0.02070} & \multicolumn{1}{l|}{\hspace{5.0mm} 0} \\ 
 \hline
\end{tabular}
\quad
\begin{tabular}{|l||r|}
\hline
\multicolumn{2}{|l|}{\it small dataset properties} \\
\hline \hline
no.\ of sequences & 5,000 \\
\hline
total length & 1,162,319 \\
\hline
average length & 231 \\
\hline
no.\ of interesting intervals & 3,062 \\
\hline
total length intr.int.s & 36,019 \\
\hline
fraction pos.s in intr.int.s  & 0.031 \\
\hline
variability & 0.966 \\
\hline
\end{tabular}
\end{adjustbox}
\newline
\vspace*{0.4 cm}
\newline
\begin{adjustbox}{max width=140mm}
\setlength{\tabcolsep}{5pt}
\renewcommand{\arraystretch}{1.6}
\begin{tabular}{|r||r|r|r|r|r|}
  \hline
  \multirow{2}{*}{{\diagbox[width=4.0cm, height=1.18cm]{\rlap{\enspace\raisebox{0ex}{ \it \hspace{-3.0mm} norm.\ edit d.\  }}}{\raisebox{+0ex}{\it edit d.\ }} }} & \multicolumn{5}{l|}{\it edit distance on a subset of 5,000 sequences} \\
  \cline{2-6}
  & \multicolumn{1}{R{1.6cm}|}{\eBWT} &\multicolumn{1}{R{1.6cm}|}{\dollarebwt} & \multicolumn{1}{R{1.6cm}|}{\mdollar} & \multicolumn{1}{R{1.6cm}|}{\concat} & \multicolumn{1}{R{1.6cm}|}{\rlo} \\
  \hline \hline
 \multicolumn{1}{|l||}{\eBWT} & \multicolumn{1}{l|}{\hspace{5.0mm} 0} & \multicolumn{1}{r|}{75,966} & \multicolumn{1}{r|}{75,935} & \multicolumn{1}{r|}{75,991} & \multicolumn{1}{r|}{76,437} \\ 
 \hline
 \multicolumn{1}{|l||}{\dollarebwt}& \multicolumn{1}{r|}{0.06536} & \multicolumn{1}{l|}{\hspace{5.0mm} 0} & \multicolumn{1}{r|}{18,043} & \multicolumn{1}{r|}{18,316} & \multicolumn{1}{r|}{21,869}  \\ 
 \hline
 \multicolumn{1}{|l||}{\mdollar} & \multicolumn{1}{r|}{0.06533} & \multicolumn{1}{r|}{0.01552} & \multicolumn{1}{l|}{\hspace{5.0mm} 0}  & \multicolumn{1}{r|}{17,835} & \multicolumn{1}{r|}{22,536} \\ \cline{1-6}
 \multicolumn{1}{|l||}{\concat} & \multicolumn{1}{r|}{0.06538} & \multicolumn{1}{r|}{0.01576} & \multicolumn{1}{r|}{0.01534} & \multicolumn{1}{l|}{\hspace{5.0mm} 0} & \multicolumn{1}{r|}{23,078}  \\ 
 \hline
 \multicolumn{1}{|l||}{\rlo} & \multicolumn{1}{r|}{0.06576} & \multicolumn{1}{r|}{0.01881} & \multicolumn{1}{r|}{0.01939} & \multicolumn{1}{r|}{0.01986} & \multicolumn{1}{l|}{\hspace{5.0mm} 0} \\ 
 \hline
\end{tabular}
\quad
%\hspace{14.0cm}
\begin{tabular}{|l||r|r|}
\hline
\multicolumn{3}{|l|}{\it no.\ runs small dataset} \\
\hline \hline
& $r$ & $n/r$ \\
\hline
\eBWT & 81,992 & $14.115$ \\
\hline
\dollarebwt & 85,489 & $13.596$ \\
\hline
\mdollar & 89,256 & $13.022$ \\
\hline
\concat & 87,867 & $13.228$ \\ 
\hline
\rlo & 70,534 & $16.479$ \\
\hline
\optimal & 68,900 & $16.870$ \\
\hline
\end{tabular}
\end{adjustbox}
\vspace{2mm}
\caption{\label{tab:Influenza_complete} Results for the Influenza A reads dataset. First row left: absolute and normalized pairwise Hamming distance between separator-based BWT variants. First row right: summary of the dataset properties. Second row: number of runs and average runlength ($n/r$) of all BWT variants. Third row left: absolute and normalized pairwise Hamming distance between separator-based BWT variants on a subset of the input collection. Third row right: summary of the dataset properties of a subset of the input collection. Fourth row left: absolute and normalized pairwise edit distance between all BWT variants on a subset of the input collection. Fourth row right: number of runs and average runlength ($n/r$) of all BWT variants on a subset of the input collection.}
\end{table*}

\begin{table*}[t!]
\raggedright
\textbf{SARS-CoV-2 long (50,000 long sequences)}

\vspace{4mm}

\captionsetup{width=\linewidth}
\begin{adjustbox}{max width=140mm}
\setlength{\tabcolsep}{5pt}
\renewcommand{\arraystretch}{1.6}
\begin{tabular}{|r||rrrr|}
  \hline
  \multirow{2}{*}{{\diagbox[width=4.7cm, height=1.18cm]{\rlap{\enspace\raisebox{0ex}{ \it \hspace{-3.0mm} norm.\ Hamming d.\  }}}{\raisebox{-0ex}{\it \hspace{-3.0mm} Hamming d.\ }}}} & \multicolumn{4}{l|}{\it Hamming distance on the big dataset} \\
  \cline{2-5}
  & \multicolumn{1}{R{1.6cm}|}{\dollarebwt} & \multicolumn{1}{R{1.6cm}|}{\mdollar} & \multicolumn{1}{R{1.6cm}|}{\concat} & \multicolumn{1}{R{1.6cm}|}{\rlo} \\
  \hline \hline
 \multicolumn{1}{|l||}{\dollarebwt}& \multicolumn{1}{l|}{\hspace{5.0mm} 0} & \multicolumn{1}{r|}{248,189} & \multicolumn{1}{r|}{248,205} & \multicolumn{1}{r|}{255,357} \\
 \hline
 \multicolumn{1}{|l||}{\mdollar} & \multicolumn{1}{r|}{0.00462} & \multicolumn{1}{l|}{\hspace{5.0mm} 0}  & \multicolumn{1}{r|}{248,572} & \multicolumn{1}{r|}{248,631} \\ 
 \hline
 \multicolumn{1}{|l||}{\concat} & \multicolumn{1}{r|}{0.00462} & \multicolumn{1}{r|}{0.00462} & \multicolumn{1}{l|}{\hspace{5.0mm} 0} & \multicolumn{1}{r|}{248,765} \\ 
 \hline
 \multicolumn{1}{|l||}{\rlo} & \multicolumn{1}{r|}{0.00475} & \multicolumn{1}{r|}{0.00462} & \multicolumn{1}{r|}{0.00463} & \multicolumn{1}{l|}{\hspace{5.0mm} 0} \\ 
 \hline
\end{tabular}
\quad
\begin{tabular}{|l||r|}
\hline
\multicolumn{2}{|l|}{\it dataset properties} \\
\hline \hline
no.\ sequences & 50,000 \\
\hline
total length & 53,776,351 \\
\hline
average length & 1,075 \\
\hline
no.\ of interesting intervals & 31,931 \\
\hline
total length intr.int.s & 9,436,894 \\
\hline
fraction pos.s in intr.int.s  & 0.17548 \\
\hline
variability & 0.03716 \\
\hline
\end{tabular}
\end{adjustbox}
\newline
\vspace*{0.3 cm}
\newline
\begin{adjustbox}{max width=140mm}
\setlength{\tabcolsep}{5pt}
\renewcommand{\arraystretch}{1.6}
\hspace{14cm}
\begin{tabular}{|l||r|r|}
\hline
\multicolumn{3}{|l|}{\it no.\ runs big dataset} \\
\hline \hline
& $r$ & $n/r$ \\
\hline
\eBWT & 882,634 & $60.870$ \\
\hline
\dollarebwt & 879,608 & $61.137$ \\
\hline
\mdollar & 934,129 & $57.568$ \\
\hline
\concat & 934,117 & $57.569$ \\
\hline
\colex & 734,610 & $73.204$ \\  
\hline
\optimal & 721,845 & $74.498$ \\
\hline
\end{tabular}
\end{adjustbox}
\newline
\vspace*{0.4 cm}
\newline
\begin{adjustbox}{max width=140mm}
\setlength{\tabcolsep}{5pt}
\renewcommand{\arraystretch}{1.6}
\begin{tabular}{|r||rrrr|}
  \hline
  \multirow{2}{*}{{\diagbox[width=4.7cm, height=1.18cm]{\rlap{\enspace\raisebox{0ex}{ \it \hspace{-3.0mm} norm.\ Hamming d.\  }}}{\raisebox{-0ex}{\it \hspace{-3.0mm} Hamming d.\ }}}} & \multicolumn{4}{l|}{\it Hamming distance on a subset of 1,500 sequences} \\
  \cline{2-5}
  & \multicolumn{1}{R{1.6cm}|}{\dollarebwt} & \multicolumn{1}{R{1.6cm}|}{\mdollar} & \multicolumn{1}{R{1.6cm}|}{\concat} & \multicolumn{1}{R{1.6cm}|}{\rlo} \\
  \hline \hline
 \multicolumn{1}{|l||}{\dollarebwt}& \multicolumn{1}{l|}{\hspace{5.0mm} 0} & \multicolumn{1}{r|}{4,936} & \multicolumn{1}{r|}{4,939} & \multicolumn{1}{r|}{5,296} \\
 \hline
 \multicolumn{1}{|l||}{\mdollar} & \multicolumn{1}{r|}{0.00306} & \multicolumn{1}{l|}{\hspace{5.0mm} 0}  & \multicolumn{1}{r|}{4,884} & \multicolumn{1}{r|}{4,908} \\ 
 \hline
 \multicolumn{1}{|l||}{\concat} & \multicolumn{1}{r|}{0.00306} & \multicolumn{1}{r|}{0.00303} & \multicolumn{1}{l|}{\hspace{5.0mm} 0} & \multicolumn{1}{r|}{5,012} \\ 
 \hline
 \multicolumn{1}{|l||}{\rlo} & \multicolumn{1}{r|}{0.00328} & \multicolumn{1}{r|}{0.00304} & \multicolumn{1}{r|}{0.00310} & \multicolumn{1}{l|}{\hspace{5.0mm} 0} \\ 
 \hline
\end{tabular}
\quad
\begin{tabular}{|l||r|}
\hline
\multicolumn{2}{|l|}{\it small dataset properties} \\
\hline \hline
no.\ sequences & 1,500 \\
\hline
total length & 1,612,956 \\
\hline
average length & 1,075 \\
\hline
no.\ of interesting intervals & 1,046 \\
\hline
total length intr.int.s & 152,035 \\
\hline
fraction pos.s in intr.int.s  & 0.094 \\
\hline
variability & 0.047 \\
\hline
\end{tabular}
\end{adjustbox}
\newline
\vspace*{0.4 cm}
\newline
\begin{adjustbox}{max width=140mm}
\setlength{\tabcolsep}{5pt}
\renewcommand{\arraystretch}{1.6}
\begin{tabular}{|r||r|r|r|r|r|}
  \hline
  \multirow{2}{*}{{\diagbox[width=4.0cm, height=1.18cm]{\rlap{\enspace\raisebox{0ex}{ \it \hspace{-3.0mm} norm.\ edit d.\  }}}{\raisebox{+0ex}{\it edit d.\ }} }} & \multicolumn{5}{l|}{\it edit distance on a subset of 1,500 sequences} \\
  \cline{2-6}
  & \multicolumn{1}{R{1.6cm}|}{\eBWT} &\multicolumn{1}{R{1.6cm}|}{\dollarebwt} & \multicolumn{1}{R{1.6cm}|}{\mdollar} & \multicolumn{1}{R{1.6cm}|}{\concat} & \multicolumn{1}{R{1.6cm}|}{\rlo} \\
  \hline \hline
 \multicolumn{1}{|l||}{\eBWT} & \multicolumn{1}{l|}{\hspace{5.0mm} 0} & \multicolumn{1}{r|}{19,140} & \multicolumn{1}{r|}{21,809} & \multicolumn{1}{r|}{21,796} & \multicolumn{1}{r|}{22,618} \\ 
 \hline
 \multicolumn{1}{|l||}{\dollarebwt}& \multicolumn{1}{r|}{0.01186} & \multicolumn{1}{l|}{\hspace{5.0mm} 0} & \multicolumn{1}{r|}{4,345} & \multicolumn{1}{r|}{4,322} & \multicolumn{1}{r|}{5,186}  \\ 
 \hline
 \multicolumn{1}{|l||}{\mdollar} & \multicolumn{1}{r|}{0.01351} & \multicolumn{1}{r|}{0.00269} & \multicolumn{1}{l|}{\hspace{5.0mm} 0}  & \multicolumn{1}{r|}{4,110} & \multicolumn{1}{r|}{4,820} \\ \cline{1-6}
 \multicolumn{1}{|l||}{\concat} & \multicolumn{1}{r|}{0.01350} & \multicolumn{1}{r|}{0.00306} & \multicolumn{1}{r|}{0.00255} & \multicolumn{1}{l|}{\hspace{5.0mm} 0} & \multicolumn{1}{r|}{4,893}  \\ 
 \hline
 \multicolumn{1}{|l||}{\rlo} & \multicolumn{1}{r|}{0.01401} & \multicolumn{1}{r|}{0.00321} & \multicolumn{1}{r|}{0.00299} & \multicolumn{1}{r|}{0.00303} & \multicolumn{1}{l|}{\hspace{5.0mm} 0} \\ 
 \hline
\end{tabular}
\quad
%\hspace{14.0cm}
\begin{tabular}{|l||r|r|}
\hline
\multicolumn{3}{|l|}{\it no.\ runs small dataset} \\
\hline \hline
& $r$ & $n/r$ \\
\hline
\eBWT & 45,262 & $35.636$ \\
\hline
\dollarebwt & 45,155 & $35.754$ \\
\hline
\mdollar & 45,572 & $35.426$ \\
\hline
\concat & 45,644 & $35.371$ \\
\hline
\rlo & 42,516 & $37.973$ \\ 
\hline
\optimal & 42,093 & $38.355$ \\
\hline
\end{tabular}
\end{adjustbox}
\vspace{2mm}
\caption{\label{tab:SARS_long_complete} Results for the SARS-CoV-2 long dataset. First row left: absolute and normalized pairwise Hamming distance between separator-based BWT variants. First row right: summary of the dataset properties. Second row: number of runs and average runlength ($n/r$) of all BWT variants. Third row left: absolute and normalized pairwise Hamming distance between separator-based BWT variants on a subset of the input collection. Third row right: summary of the dataset properties of a subset of the input collection. Fourth row left: absolute and normalized pairwise edit distance between all BWT variants on a subset of the input collection. Fourth row right: number of runs and average runlength ($n/r$) of all BWT variants on a subset of the input collection.}
\end{table*}

\begin{table*}[t!]
\raggedright
\textbf{16S rRNA long (16,741 long sequences)}

\vspace{4mm}

\captionsetup{width=\linewidth}
\begin{adjustbox}{max width=140mm}
\setlength{\tabcolsep}{5pt}
\renewcommand{\arraystretch}{1.6}
\begin{tabular}{|r||rrrr|}
  \hline
  \multirow{2}{*}{{\diagbox[width=4.7cm, height=1.18cm]{\rlap{\enspace\raisebox{0ex}{ \it \hspace{-3.0mm} norm.\ Hamming d.\  }}}{\raisebox{-0ex}{\it \hspace{-3.0mm} Hamming d.\ }}}} & \multicolumn{4}{l|}{\it Hamming distance on the big dataset} \\
  \cline{2-5}
  & \multicolumn{1}{R{1.6cm}|}{\dollarebwt} & \multicolumn{1}{R{1.6cm}|}{\mdollar} & \multicolumn{1}{R{1.6cm}|}{\concat} & \multicolumn{1}{R{1.6cm}|}{\rlo} \\
  \hline \hline
 \multicolumn{1}{|l||}{\dollarebwt}& \multicolumn{1}{l|}{\hspace{5.0mm} 0} & \multicolumn{1}{r|}{85,960} & \multicolumn{1}{r|}{42,948} &  \multicolumn{1}{r|}{67,103} \\
 \hline
 \multicolumn{1}{|l||}{\mdollar} & \multicolumn{1}{r|}{0.00342} & \multicolumn{1}{l|}{\hspace{5.0mm} 0}  & \multicolumn{1}{r|}{85,961} & \multicolumn{1}{r|}{82,890} \\ 
 \hline
 \multicolumn{1}{|l||}{\concat} & \multicolumn{1}{r|}{0.00171} & \multicolumn{1}{r|}{0.00342} & \multicolumn{1}{l|}{\hspace{5.0mm} 0} & \multicolumn{1}{r|}{71,264} \\ 
 \hline
 \multicolumn{1}{|l||}{\rlo} & \multicolumn{1}{r|}{0.00267} & \multicolumn{1}{r|}{0.00329} & \multicolumn{1}{r|}{0.00283} & \multicolumn{1}{l|}{\hspace{5.0mm} 0} \\ 
 \hline
\end{tabular}
\quad
\begin{tabular}{|l||r|}
\hline
\multicolumn{2}{|l|}{\it dataset properties} \\
\hline \hline
no.\ sequences & 16,741 \\
\hline
total length & 25,159,064 \\
\hline
average length & 1,501 \\
\hline
no.\ of interesting intervals & 9,918 \\
\hline
total length intr.int.s & 1,173,284 \\
\hline
fraction pos.s in intr.int.s  & 0.047 \\
\hline
variability & 0.104 \\
\hline
\end{tabular}
\end{adjustbox}
\newline
\vspace*{0.4 cm}
\newline
\begin{adjustbox}{max width=140mm}
\setlength{\tabcolsep}{5pt}
\renewcommand{\arraystretch}{1.6}
\hspace{14cm}
\begin{tabular}{|l||r|r|}
\hline
\multicolumn{3}{|l|}{\it no.\ runs big dataset} \\
\hline \hline
& $r$ & $n/r$ \\
\hline
\eBWT & 547,991 & $45.881$ \\
\hline
\dollarebwt & 547,793 & $45.928$ \\
\hline
\mdollar & 555,687 & $45.276$ \\
\hline
\concat & 558,902 & $45.015$  \\
\hline
\rlo & 536,682 & $46.879$ \\
\hline
\optimal & 533,712 & $47.140$ \\
\hline
\end{tabular}
\end{adjustbox}
\newline
\vspace*{0.4 cm}
\newline
\begin{adjustbox}{max width=140mm}
\setlength{\tabcolsep}{5pt}
\renewcommand{\arraystretch}{1.6}
\begin{tabular}{|r||rrrr|}
  \hline
  \multirow{2}{*}{{\diagbox[width=4.7cm, height=1.18cm]{\rlap{\enspace\raisebox{0ex}{ \it \hspace{-3.0mm} norm.\ Hamming d.\  }}}{\raisebox{-0ex}{\it \hspace{-3.0mm} Hamming d.\ }}}} & \multicolumn{4}{l|}{\it Hamming distance on a subset of 1,500 sequences} \\
  \cline{2-5}
  & \multicolumn{1}{R{1.6cm}|}{\dollarebwt} & \multicolumn{1}{R{1.6cm}|}{\mdollar} & \multicolumn{1}{R{1.6cm}|}{\concat} & \multicolumn{1}{R{1.6cm}|}{\rlo} \\
  \hline \hline
 \multicolumn{1}{|l||}{\dollarebwt}& \multicolumn{1}{l|}{\hspace{5.0mm} 0} & \multicolumn{1}{r|}{4,740} & \multicolumn{1}{r|}{3,104} & \multicolumn{1}{r|}{3,926} \\
 \hline
 \multicolumn{1}{|l||}{\mdollar} & \multicolumn{1}{r|}{0.00210} & \multicolumn{1}{l|}{\hspace{5.0mm} 0}  & \multicolumn{1}{r|}{4,716} & \multicolumn{1}{r|}{4,783} \\ 
 \hline
 \multicolumn{1}{|l||}{\concat} & \multicolumn{1}{r|}{0.00137} & \multicolumn{1}{r|}{0.00209} & \multicolumn{1}{l|}{\hspace{5.0mm} 0} & \multicolumn{1}{r|}{4,208} \\ 
 \hline
 \multicolumn{1}{|l||}{\rlo} & \multicolumn{1}{r|}{0.00174} & \multicolumn{1}{r|}{0.00212} & \multicolumn{1}{r|}{0.00186} & \multicolumn{1}{l|}{\hspace{5.0mm} 0} \\ 
 \hline
\end{tabular}
\quad
\begin{tabular}{|l||r|}
\hline
\multicolumn{2}{|l|}{\it small dataset properties} \\
\hline \hline
no.\ sequences & 1,500 \\
\hline
total length & 2,260,229 \\
\hline
average length & 1,501 \\
\hline
no.\ of interesting intervals & 946 \\
\hline
total length intr.int.s & 72,933 \\
\hline
fraction pos.s in intr.int.s  & 0.032 \\
\hline
variability & 0.104 \\
\hline
\end{tabular}
\end{adjustbox}
\newline
\vspace*{0.4 cm}
\newline
\begin{adjustbox}{max width=140mm}
\setlength{\tabcolsep}{5pt}
\renewcommand{\arraystretch}{1.6}
\begin{tabular}{|r||r|r|r|r|r|}
  \hline
  \multirow{2}{*}{{\diagbox[width=4.0cm, height=1.18cm]{\rlap{\enspace\raisebox{0ex}{ \it \hspace{-3.0mm} norm.\ edit d.\  }}}{\raisebox{+0ex}{\it edit d.\ }} }} & \multicolumn{5}{l|}{\it edit distance on a subset of 1,500 sequences} \\
  \cline{2-6}
  & \multicolumn{1}{R{1.6cm}|}{\eBWT} &\multicolumn{1}{R{1.6cm}|}{\dollarebwt} & \multicolumn{1}{R{1.6cm}|}{\mdollar} & \multicolumn{1}{R{1.6cm}|}{\concat} & \multicolumn{1}{R{1.6cm}|}{\rlo} \\
  \hline \hline
 \multicolumn{1}{|l||}{\eBWT} & \multicolumn{1}{l|}{\hspace{5.0mm} 0} & \multicolumn{1}{r|}{18,328} & \multicolumn{1}{r|}{22,194} & \multicolumn{1}{r|}{21,021} & \multicolumn{1}{r|}{21,987} \\ 
 \hline
 \multicolumn{1}{|l||}{\dollarebwt}& \multicolumn{1}{r|}{0.00811} & \multicolumn{1}{l|}{\hspace{5.0mm} 0} & \multicolumn{1}{r|}{4,410} & \multicolumn{1}{r|}{2,761} & \multicolumn{1}{r|}{3,858}  \\ 
 \hline
 \multicolumn{1}{|l||}{\mdollar} & \multicolumn{1}{r|}{0.00982} & \multicolumn{1}{r|}{0.00195} & \multicolumn{1}{l|}{\hspace{5.0mm} 0}  & \multicolumn{1}{r|}{4,323} & \multicolumn{1}{r|}{4,691} \\ \cline{1-6}
 \multicolumn{1}{|l||}{\concat} & \multicolumn{1}{r|}{0.00930} & \multicolumn{1}{r|}{0.00122} & \multicolumn{1}{r|}{0.00191} & \multicolumn{1}{l|}{\hspace{5.0mm} 0} & \multicolumn{1}{r|}{4,146}  \\ 
 \hline
 \multicolumn{1}{|l||}{\rlo} & \multicolumn{1}{r|}{0.00973} & \multicolumn{1}{r|}{0.00171} & \multicolumn{1}{r|}{0.00208} & \multicolumn{1}{r|}{0.00183} & \multicolumn{1}{l|}{\hspace{5.0mm} 0} \\ 
 \hline
\end{tabular}
\quad
%\hspace{14.0cm}
\begin{tabular}{|l||r|r|}
\hline
\multicolumn{3}{|l|}{\it no.\ runs small dataset} \\
\hline \hline
& $r$ & $n/r$ \\
\hline
\eBWT & 62,077 & $36.386$ \\
\hline
\dollarebwt & 62,031 & $36.437$ \\
\hline
\mdollar & 62,712 & $36.041$ \\
\hline
\concat & 62,800 & $35.991$  \\
\hline
\rlo & 61,235 & $36.911$ \\ 
\hline
\optimal & 60,979 & $37.066$ \\
\hline
\end{tabular}
\end{adjustbox}
\vspace{2mm}
\caption{\label{tab:16S_long_complete} Results for the 16S rRNA long dataset. First row left: absolute and normalized pairwise Hamming distance between separator-based BWT variants. First row right: summary of the dataset properties. Second row: number of runs and average runlength ($n/r$) of all BWT variants. Third row left: absolute and normalized pairwise Hamming distance between separator-based BWT variants on a subset of the input collection. Third row right: summary of the dataset properties of a subset of the input collection. Fourth row left: absolute and normalized pairwise edit distance between all BWT variants on a subset of the input collection. Fourth row right: number of runs and average runlength ($n/r$) of all BWT variants on a subset of the input collection.}
\end{table*}

\begin{table*}[t!]
\raggedright
\textbf{Candida auris reads (50,000 long sequences)}

\vspace{4mm}

\captionsetup{width=\linewidth}
\begin{adjustbox}{max width=140mm}
\setlength{\tabcolsep}{5pt}
\renewcommand{\arraystretch}{1.6}
\begin{tabular}{|r||rrrr|}
  \hline
  \multirow{2}{*}{{\diagbox[width=4.7cm, height=1.18cm]{\rlap{\enspace\raisebox{0ex}{ \it \hspace{-3.0mm} norm.\ Hamming d.\  }}}{\raisebox{-0ex}{\it \hspace{-3.0mm} Hamming d.\ }}}} & \multicolumn{4}{l|}{\it Hamming distance on the big dataset} \\
  \cline{2-5}
  & \multicolumn{1}{R{1.6cm}|}{\dollarebwt} & \multicolumn{1}{R{1.6cm}|}{\mdollar} & \multicolumn{1}{R{1.6cm}|}{\concat} & \multicolumn{1}{R{1.6cm}|}{\rlo} \\
  \hline \hline
 \multicolumn{1}{|l||}{\dollarebwt}& \multicolumn{1}{l|}{\hspace{5.0mm} 0} & \multicolumn{1}{r|}{306,071} & \multicolumn{1}{r|}{306,431} & \multicolumn{1}{r|}{305,665} \\
 \hline
 \multicolumn{1}{|l||}{\mdollar} & \multicolumn{1}{r|}{0.00246} & \multicolumn{1}{l|}{\hspace{5.0mm} 0}  & \multicolumn{1}{r|}{305,649} & \multicolumn{1}{r|}{305,713} \\ 
 \hline
 \multicolumn{1}{|l||}{\concat} & \multicolumn{1}{r|}{0.00247} & \multicolumn{1}{r|}{0.00246} & \multicolumn{1}{l|}{\hspace{5.0mm} 0} & \multicolumn{1}{r|}{305,469} \\ 
 \hline
 \multicolumn{1}{|l||}{\rlo} & \multicolumn{1}{r|}{0.00246} & \multicolumn{1}{r|}{0.00246} & \multicolumn{1}{r|}{0.00246} & \multicolumn{1}{l|}{\hspace{5.0mm} 0} \\ 
 \hline
\end{tabular}
\quad
\begin{tabular}{|l||r|}
\hline
\multicolumn{2}{|l|}{\it dataset properties} \\
\hline \hline
no.\ sequences & 50,000 \\
\hline
total length & 124,200,880 \\
\hline
average length & 2,483 \\
\hline
no.\ interesting intervals & 39,076 \\
\hline
total length intr.int.s & 913,721 \\
\hline
fraction pos.s in intr.int.s  & 0.007 \\
\hline
variability & 0.497 \\
\hline
\end{tabular}
\end{adjustbox}
\newline
\vspace*{0.4 cm}
\newline
\begin{adjustbox}{max width=140mm}
\setlength{\tabcolsep}{5pt}
\renewcommand{\arraystretch}{1.6}
\hspace{14cm}
\begin{tabular}{|l||r|r|}
\hline
\multicolumn{3}{|l|}{\it no.\ runs big dataset} \\
\hline \hline
& $r$ & $n/r$ \\
\hline
\eBWT & 72,014,777 & $1.724$ \\
\hline
\dollarebwt & 71,972,783 & $1.726$ \\
\hline
\mdollar & 71,972,346 & $1.726$ \\
\hline
\concat & 71,973,221 & $1.726$ \\ 
\hline
\rlo & 71,725,274 & $1.732$ \\
\hline
\optimal & 71,704,473 & $1.732$ \\
\hline
\end{tabular}
\end{adjustbox}
\newline
\vspace*{0.4 cm}
\newline
\begin{adjustbox}{max width=140mm}
\setlength{\tabcolsep}{5pt}
\renewcommand{\arraystretch}{1.6}
\begin{tabular}{|r||rrrr|}
  \hline
  \multirow{2}{*}{{\diagbox[width=4.7cm, height=1.18cm]{\rlap{\enspace\raisebox{0ex}{ \it \hspace{-3.0mm} norm.\ Hamming d.\  }}}{\raisebox{-0ex}{\it \hspace{-3.0mm} Hamming d.\ }}}} & \multicolumn{4}{l|}{\it Hamming distance on a subset of 1,500 sequences} \\
  \cline{2-5}
  & \multicolumn{1}{R{1.6cm}|}{\dollarebwt} & \multicolumn{1}{R{1.6cm}|}{\mdollar} & \multicolumn{1}{R{1.6cm}|}{\concat} & \multicolumn{1}{R{1.6cm}|}{\rlo} \\
  \hline \hline
 \multicolumn{1}{|l||}{\dollarebwt}& \multicolumn{1}{l|}{\hspace{5.0mm} 0} & \multicolumn{1}{r|}{6,333} & \multicolumn{1}{r|}{6,393} & \multicolumn{1}{r|}{6,260} \\
 \hline
 \multicolumn{1}{|l||}{\mdollar} & \multicolumn{1}{r|}{0.00169} & \multicolumn{1}{l|}{\hspace{5.0mm} 0}  & \multicolumn{1}{r|}{6,411} & \multicolumn{1}{r|}{6,354} \\ 
 \hline
 \multicolumn{1}{|l||}{\concat} & \multicolumn{1}{r|}{0.00170} & \multicolumn{1}{r|}{0.00171} & \multicolumn{1}{l|}{\hspace{5.0mm} 0} & \multicolumn{1}{r|}{6,294} \\ 
 \hline
 \multicolumn{1}{|l||}{\rlo} & \multicolumn{1}{r|}{0.00167} & \multicolumn{1}{r|}{0.00169} & \multicolumn{1}{r|}{0.00168} & \multicolumn{1}{l|}{\hspace{5.0mm} 0} \\ 
 \hline
\end{tabular}
\quad
\begin{tabular}{|l||r|}
\hline
\multicolumn{2}{|l|}{\it small dataset properties} \\
\hline \hline
no.\ sequences & 1,500 \\
\hline
total length & 3,755,776 \\
\hline
average length & 2,503 \\
\hline
no.\ of interesting intervals & 1,189 \\
\hline
total length intr.int.s & 18,372 \\
\hline
fraction pos.s in intr.int.s  & 0.005 \\
\hline
variability & 0.530 \\
\hline
\end{tabular}
\end{adjustbox}
\newline
\vspace*{0.4 cm}
\newline
\begin{adjustbox}{max width=140mm}
\setlength{\tabcolsep}{5pt}
\renewcommand{\arraystretch}{1.6}
\begin{tabular}{|r||r|r|r|r|r|}
  \hline
  \multirow{2}{*}{{\diagbox[width=4.0cm, height=1.18cm]{\rlap{\enspace\raisebox{0ex}{ \it \hspace{-3.0mm} norm.\ edit d.\  }}}{\raisebox{+0ex}{\it edit d.\ }} }} & \multicolumn{5}{l|}{\it edit distance on a subset of 1,500 sequences} \\
  \cline{2-6}
  & \multicolumn{1}{R{1.6cm}|}{\eBWT} &\multicolumn{1}{R{1.6cm}|}{\dollarebwt} & \multicolumn{1}{R{1.6cm}|}{\mdollar} & \multicolumn{1}{R{1.6cm}|}{\concat} & \multicolumn{1}{R{1.6cm}|}{\rlo} \\
  \hline \hline
 \multicolumn{1}{|l||}{\eBWT} & \multicolumn{1}{l|}{\hspace{5.0mm} 0} & \multicolumn{1}{r|}{30,345} & \multicolumn{1}{r|}{30,552} & \multicolumn{1}{r|}{30,562} & \multicolumn{1}{r|}{30,794} \\ 
 \hline
 \multicolumn{1}{|l||}{\dollarebwt}& \multicolumn{1}{r|}{0.00808} & \multicolumn{1}{l|}{\hspace{5.0mm} 0} & \multicolumn{1}{r|}{4,835} & \multicolumn{1}{r|}{4,860} & \multicolumn{1}{r|}{6,005}  \\ 
 \hline
 \multicolumn{1}{|l||}{\mdollar} & \multicolumn{1}{r|}{0.00813} & \multicolumn{1}{r|}{0.00129} & \multicolumn{1}{l|}{\hspace{5.0mm} 0}  & \multicolumn{1}{r|}{4,824} & \multicolumn{1}{r|}{6,105} \\ \cline{1-6}
 \multicolumn{1}{|l||}{\concat} & \multicolumn{1}{r|}{0.00814} & \multicolumn{1}{r|}{0.00129} & \multicolumn{1}{r|}{0.00128} & \multicolumn{1}{l|}{\hspace{5.0mm} 0} & \multicolumn{1}{r|}{6,035}  \\ 
 \hline
 \multicolumn{1}{|l||}{\rlo} & \multicolumn{1}{r|}{0.00820} & \multicolumn{1}{r|}{0.00160} & \multicolumn{1}{r|}{0.00163} & \multicolumn{1}{r|}{0.00161} & \multicolumn{1}{l|}{\hspace{5.0mm} 0} \\ 
 \hline
\end{tabular}
\quad
%\hspace{14.0cm}
\begin{tabular}{|l||r|r|}
\hline
\multicolumn{3}{|l|}{\it no.\ runs small dataset} \\
\hline \hline
& $r$ & $n/r$ \\
\hline
\eBWT & 2,635,300 & $1.425$ \\
\hline
\dollarebwt & 2,633,676 & $1.426$ \\
\hline
\mdollar & 2,633,652 & $1.426$ \\
\hline
\concat & 2,633,727 & $1.426$ \\
\hline
\rlo & 2,629,094 & $1.429$ \\
\hline
\optimal & 2,628,470 & $1,429$ \\
\hline
\end{tabular}
\end{adjustbox}
\vspace{2mm}
\caption{\label{tab:candida_complete} Results for the Candida auris reads dataset. First row left: absolute and normalized pairwise Hamming distance between separator-based BWT variants. First row right: summary of the dataset properties. Second row: number of runs and average runlength ($n/r$) of all BWT variants. Third row left: absolute and normalized pairwise Hamming distance between separator-based BWT variants on a subset of the input collection. Third row right: summary of the dataset properties of a subset of the input collection. Fourth row left: absolute and normalized pairwise edit distance between all BWT variants on a subset of the input collection. Fourth row right: number of runs and average runlength ($n/r$) of all BWT variants on a subset of the input collection.}
\end{table*}

\begin{table*}[t!]
\raggedright
\textbf{SARS-CoV-2 genomes (2,000 long sequences)}

\vspace{4mm}

\captionsetup{width=\linewidth}
\begin{adjustbox}{max width=140mm}
\setlength{\tabcolsep}{5pt}
\renewcommand{\arraystretch}{1.6}
\begin{tabular}{|r||rrrr|}
  \hline
  \multirow{2}{*}{{\diagbox[width=4.7cm, height=1.18cm]{\rlap{\enspace\raisebox{0ex}{ \it \hspace{-3.0mm} norm.\ Hamming d.\  }}}{\raisebox{-0ex}{\it \hspace{-3.0mm} Hamming d.\ }}}} & \multicolumn{4}{l|}{\it Hamming distance on the big dataset} \\
  \cline{2-5}
  & \multicolumn{1}{R{1.6cm}|}{\dollarebwt} & \multicolumn{1}{R{1.6cm}|}{\mdollar} & \multicolumn{1}{R{1.6cm}|}{\concat} & \multicolumn{1}{R{1.6cm}|}{\rlo} \\
  \hline \hline
 \multicolumn{1}{|l||}{\dollarebwt}& \multicolumn{1}{l|}{\hspace{5.0mm} 0} & \multicolumn{1}{r|}{7,958} & \multicolumn{1}{r|}{7,900} & \multicolumn{1}{r|}{7,263} \\
 \hline
 \multicolumn{1}{|l||}{\mdollar} & \multicolumn{1}{r|}{0.00013} & \multicolumn{1}{l|}{\hspace{5.0mm} 0}  & \multicolumn{1}{r|}{7,958} & \multicolumn{1}{r|}{7,957} \\ 
 \hline
 \multicolumn{1}{|l||}{\concat} & \multicolumn{1}{r|}{0.00013} & \multicolumn{1}{r|}{0.00013} & \multicolumn{1}{l|}{\hspace{5.0mm} 0} & \multicolumn{1}{r|}{7,990} \\ 
 \hline
 \multicolumn{1}{|l||}{\rlo} & \multicolumn{1}{r|}{0.00012} & \multicolumn{1}{r|}{0.00013} & \multicolumn{1}{r|}{0.00013} & \multicolumn{1}{l|}{\hspace{5.0mm} 0} \\ 
 \hline
\end{tabular}
\quad
\begin{tabular}{|l||r|}
\hline
\multicolumn{2}{|l|}{\it dataset properties} \\
\hline \hline
no.\ sequences & 2,000 \\
\hline
total length & 59,612,692 \\
\hline
average length & 29,085 \\
\hline
no.\ interesting intervals & 1863 \\
\hline
total length intr.int.s & 80,486 \\
\hline
fraction pos.s in intr.int.s  & 0.001 \\
\hline
variability & 0.148 \\
\hline
\end{tabular}
\end{adjustbox}
\newline
\vspace*{0.4 cm}
\newline
\begin{adjustbox}{max width=140mm}
\setlength{\tabcolsep}{5pt}
\renewcommand{\arraystretch}{1.6}
\hspace{14cm}
\begin{tabular}{|l||r|r|}
\hline
\multicolumn{3}{|l|}{\it no.\ runs big dataset} \\
\hline \hline
& $r$ & $n/r$ \\
\hline
\eBWT & 117,628 & $506.773$ \\
\hline
\dollarebwt & 117,410 & $507.731$ \\
\hline
\mdollar & 118,870 & $501.495$ \\
\hline
\concat & 119,334 & $499.549$ \\
\hline
\rlo & 114,287 & $521.605$ \\
\hline
\optimal & 113,930 & $523.240$ \\
\hline
\end{tabular}
\end{adjustbox}
\newline
\vspace*{0.4 cm}
\newline
\begin{adjustbox}{max width=140mm}
\setlength{\tabcolsep}{5pt}
\renewcommand{\arraystretch}{1.6}
\begin{tabular}{|r||rrrr|}
  \hline
  \multirow{2}{*}{{\diagbox[width=4.7cm, height=1.18cm]{\rlap{\enspace\raisebox{0ex}{ \it \hspace{-3.0mm} norm.\ Hamming d.\  }}}{\raisebox{-0ex}{\it \hspace{-3.0mm} Hamming d.\ }}}} & \multicolumn{4}{l|}{\it Hamming distance on a subset of 50 sequences} \\
  \cline{2-5}
  & \multicolumn{1}{R{1.6cm}|}{\dollarebwt} & \multicolumn{1}{R{1.6cm}|}{\mdollar} & \multicolumn{1}{R{1.6cm}|}{\concat} & \multicolumn{1}{R{1.6cm}|}{\rlo} \\
  \hline \hline
 \multicolumn{1}{|l||}{\dollarebwt}& \multicolumn{1}{l|}{\hspace{5.0mm} 0} & \multicolumn{1}{r|}{105} & \multicolumn{1}{r|}{119} & \multicolumn{1}{r|}{90} \\
 \hline
 \multicolumn{1}{|l||}{\mdollar} & \multicolumn{1}{r|}{0.00007} & \multicolumn{1}{l|}{\hspace{5.0mm} 0}  & \multicolumn{1}{r|}{124} & \multicolumn{1}{r|}{116} \\ 
 \hline
 \multicolumn{1}{|l||}{\concat} & \multicolumn{1}{r|}{0.00008} & \multicolumn{1}{r|}{0.00008} & \multicolumn{1}{l|}{\hspace{5.0mm} 0} & \multicolumn{1}{r|}{118} \\ 
 \hline
 \multicolumn{1}{|l||}{\rlo} & \multicolumn{1}{r|}{0.00006} & \multicolumn{1}{r|}{0.00008} & \multicolumn{1}{r|}{0.00008} & \multicolumn{1}{l|}{\hspace{5.0mm} 0} \\ 
 \hline
\end{tabular}
\quad
\begin{tabular}{|l||r|}
\hline
\multicolumn{2}{|l|}{\it small dataset properties} \\
\hline \hline
no.\ sequences & 50 \\
\hline
total length & 1,490,184 \\
\hline
average length & 29,802 \\
\hline
no.\ interesting intervals & 43 \\
\hline
total length intr.int.s & 271 \\
\hline
fraction pos.s in intr.int.s  & $1.8 \cdot 10^{-4}$ \\
\hline
variability & 0.690 \\
\hline
\end{tabular}
\end{adjustbox}
\newline
\vspace*{0.4 cm}
\newline
\begin{adjustbox}{max width=140mm}
\setlength{\tabcolsep}{5pt}
\renewcommand{\arraystretch}{1.6}
\begin{tabular}{|r||r|r|r|r|r|}
  \hline
  \multirow{2}{*}{{\diagbox[width=4.0cm, height=1.18cm]{\rlap{\enspace\raisebox{0ex}{ \it \hspace{-3.0mm} norm.\ edit d.\  }}}{\raisebox{+0ex}{\it edit d.\ }} }} & \multicolumn{5}{l|}{\it edit distance on a subset of 50 sequences} \\
  \cline{2-6}
  & \multicolumn{1}{R{1.6cm}|}{\eBWT} &\multicolumn{1}{R{1.6cm}|}{\dollarebwt} & \multicolumn{1}{R{1.6cm}|}{\mdollar} & \multicolumn{1}{R{1.6cm}|}{\concat} & \multicolumn{1}{R{1.6cm}|}{\rlo} \\
  \hline \hline
 \multicolumn{1}{|l||}{\eBWT} & \multicolumn{1}{l|}{\hspace{5.0mm} 0} & \multicolumn{1}{r|}{786} & \multicolumn{1}{r|}{795} & \multicolumn{1}{r|}{801} & \multicolumn{1}{r|}{791} \\ 
 \hline
 \multicolumn{1}{|l||}{\dollarebwt}& \multicolumn{1}{r|}{0.00053} & \multicolumn{1}{l|}{\hspace{5.0mm} 0} & \multicolumn{1}{r|}{98} & \multicolumn{1}{r|}{107} & \multicolumn{1}{r|}{86}  \\ 
 \hline
 \multicolumn{1}{|l||}{\mdollar} & \multicolumn{1}{r|}{0.00053} & \multicolumn{1}{r|}{0.00007} & \multicolumn{1}{l|}{\hspace{5.0mm} 0}  & \multicolumn{1}{r|}{105} & \multicolumn{1}{r|}{112} \\ \cline{1-6}
 \multicolumn{1}{|l||}{\concat} & \multicolumn{1}{r|}{0.00054} & \multicolumn{1}{r|}{0.00007} & \multicolumn{1}{r|}{0.00007} & \multicolumn{1}{l|}{\hspace{5.0mm} 0} & \multicolumn{1}{r|}{114}  \\ 
 \hline
 \multicolumn{1}{|l||}{\rlo} & \multicolumn{1}{r|}{0.00053} & \multicolumn{1}{r|}{0.00006} & \multicolumn{1}{r|}{0.00008} & \multicolumn{1}{r|}{0.00008} & \multicolumn{1}{l|}{\hspace{5.0mm} 0} \\ 
 \hline
\end{tabular}
\quad
%\hspace{14.0cm}
\begin{tabular}{|l||r|r|}
\hline
\multicolumn{3}{|l|}{\it no.\ runs small dataset} \\
\hline \hline
& $r$ & $n/r$ \\
\hline
\eBWT & 25,258 & $58.997$ \\
\hline
\dollarebwt & 25,255 & $59.006$ \\
\hline
\mdollar & 25,274 & $58.961$ \\
\hline
\concat & 25,285 & $58.936$ \\
\hline
\rlo & 25,221 & $59.085$ \\
\hline
\optimal & 25,210 & $59.111$ \\
\hline
\end{tabular}
\end{adjustbox}
\vspace{2mm}
\caption{\label{tab:SARS_genomes_complete} Results for the SARS-CoV-2 genomes dataset. First row left: absolute and normalized pairwise Hamming distance between separator-based BWT variants. First row right: summary of the dataset properties. Second row: number of runs and average runlength ($n/r$) of all BWT variants. Third row left: absolute and normalized pairwise Hamming distance between separator-based BWT variants on a subset of the input collection. Third row right: summary of the dataset properties of a subset of the input collection. Fourth row left: absolute and normalized pairwise edit distance between all BWT variants on a subset of the input collection. Fourth row right: number of runs and average runlength ($n/r$) of all BWT variants on a subset of the input collection.}
\end{table*}

\end{document}